\documentclass{article}
\usepackage{graphicx}
\usepackage[T1]{fontenc}
\usepackage[utf8]{inputenc}
\usepackage[english]{babel}
\usepackage[margin=0.8in]{geometry}
\usepackage{microtype}
\usepackage{amsmath, amssymb}
\usepackage{amsthm}
\usepackage{mathtools}
\usepackage{booktabs}
\usepackage{tabularx}
\usepackage[hyphens]{url}
\usepackage{float}
\usepackage{hyperref}
\usepackage{enumitem}
\usepackage{bm}
\usepackage{xcolor}
\theoremstyle{definition}
\newtheorem{theorem}{Theorem}[section]
\newtheorem{lemma}[theorem]{Lemma}
\newtheorem{proposition}[theorem]{Proposition}
\newtheorem{corollary}[theorem]{Corollary}
\theoremstyle{definition}
\newtheorem{definition}[theorem]{Definition}
\newtheorem{remark}[theorem]{Remark}
\newtheorem{example}[theorem]{Example}
\newtheorem{notation}[theorem]{Notation}

\DeclareMathOperator{\id}{id}
\newcommand{\Lx}{\mathbb{L}[x_1,\dots,x_m;\sigma_1,\dots,\sigma_m]}
\DeclareMathOperator{\ev}{ev}
\DeclareMathOperator{\Gal}{Gal}
\DeclareMathOperator{\Span}{Span}
\DeclareMathOperator{\Mat}{Mat}
\DeclareMathOperator{\SRRM}{SRRM}
\DeclareMathOperator{\LRM}{LRM}
\DeclareMathOperator{\LT}{LT}
\DeclareMathOperator{\LM}{LM}
\DeclareMathOperator{\rk}{rk}
\DeclareMathOperator{\srk}{srk}
\DeclareMathOperator{\N}{N}
\newcommand{\lex}{<_\mathrm{lex}}
\newcommand{\End}{\mathrm{End}}
\newcommand{\Hom}{\mathrm{Hom}}

\newcommand{\LK}{\mathbb L/\mathbb K}

\renewcommand{\u}{\bm{u}}
\renewcommand{\v}{\bm{v}}
\newcommand{\e}{\bm{e}}

\DeclareMathOperator{\GL}{GL}
\DeclareMathOperator{\ord}{ord}
\DeclareMathOperator{\Tr}{Tr}
\DeclareMathOperator{\diag}{diag}

\newcommand{\F}{\mathbb F}

\title{Sum-rank analogues of Reed--Muller codes \\
over Abelian Galois extensions}
\author{Alessandro Neri\thanks{Dipartimento di Matematica e Applicazioni ``R.~Caccioppoli'', Universit\`a degli Studi di Napoli Federico II, Italy. \texttt{alessandro.neri@unina.it}}
\and Giuseppe Zampella\thanks{Dipartimento di Matematica e Applicazioni ``R.~Caccioppoli'', Universit\`a degli Studi di Napoli Federico II, Italy. \texttt{giuseppe.zampella@unina.it}}}

\begin{document}
\maketitle

\begin{abstract}
Let $\mathbb L/\mathbb K$ be a finite Abelian Galois extension with a fixed decomposition $\Gal(\mathbb L/\mathbb K)=\langle\sigma_1\rangle\times\cdots\times\langle\sigma_m\rangle$. We show that the quotient of the multivariate skew polynomial ring $\mathbb L[x_1,\ldots,x_m;\sigma_1,\ldots,\sigma_m]$ by suitable central polynomials $H_1(x_1),\ldots,H_m(x_m)$ is isomorphic to a product of $t$ copies of $\End_{\mathbb K}(\mathbb L)$, so that one skew polynomial represents all blocks of a sum-rank metric space. The sum-rank weight of an element equals the rank of an associated Dickson matrix and of a generalized Moore matrix. Using this, we define sum-rank Reed--Muller-type codes and show that their dimension and minimum distance are given by the formulas for affine Cartesian codes with side lengths $\deg H_1,\ldots,\deg H_m$; the same holds for codes supported on lower sets. The family contains as special cases linearized Reed--Solomon codes, Gabidulin codes, rank-metric Reed--Muller codes, and affine Cartesian codes. Over finite fields, every linearized Reed--Muller code of Berardini and Caruso is, after a monomial change of variables, a monomial code in our framework. This gives the dimension of all these codes, and their exact minimum distance in every degree whenever the support is a lower set.
\end{abstract}

\medskip
\noindent\textbf{Keywords:} sum-rank metric codes, Reed--Muller codes, skew polynomial rings, Abelian Galois extensions, Alon--F\"uredi theorem.

\noindent\textbf{MSC2020:} 94B05 (primary); 94B65, 16S36, 16S35, 12F10 (secondary).

\section{Introduction}

\paragraph{Background and motivation.}
Polynomial evaluation is a standard method for constructing error-correcting codes. Reed--Solomon codes arise from evaluations of univariate polynomials, and their minimum distance follows from the bound on the number of roots of a nonzero polynomial. Generalized Reed--Muller and affine Cartesian codes use multivariate polynomials evaluated on finite grids \cite{KLP68,GT2013,LRMV2014}. Their dimensions are obtained by counting admissible monomials, while their minimum distances are related to bounds on the number of zeros of a polynomial, including the Alon--F\"uredi theorem and the Schwartz--Zippel lemma \cite{AF,S,Z,BCPS2018,CFS2017}.

These families use the Hamming metric, which counts the coordinates at which two words differ. The rank and sum-rank metrics arise in settings where errors have a matrix structure, notably in network coding. In the \emph{rank metric} \cite{Delsarte1978,G85,Roth1991}, words are matrices over a field $\mathbb K$, and the distance between two matrices is the rank of their difference. The \emph{sum-rank metric} includes both the rank and the Hamming metrics. Its ambient space consists of $t$-tuples $A=(A_1,\ldots,A_t)$ of matrices over $\mathbb K$ of prescribed sizes, called blocks. The \emph{sum-rank weight} is
\[
  \srk(A):=\sum_{j=1}^t\rk_{\mathbb K}(A_j).
\]
The distance between $A$ and $B$ is $\srk(A-B)$; for a linear code, the minimum sum-rank distance is therefore the minimum weight of a nonzero codeword. Taking $t=1$ gives the rank metric, while taking each block to be a single column gives the Hamming metric. The sum-rank metric appeared in space--time coding \cite{LK2005} and in multishot network coding \cite{NU2010}. It has since been used for reliable and secure multishot network coding \cite{MPK2019multishot} and for locally repairable codes with maximal recoverability \cite{MPK2019LRC}. The size and the minimum distance of a sum-rank metric code satisfy a Singleton-type bound, and codes attaining it are called maximum sum-rank distance (MSRD) codes \cite{M-P2018,BGR2021}. Further bounds on the parameters of sum-rank metric codes are given in \cite{BGR2021,AKR2024}. We refer to \cite{M-P2022} for additional background.

Polynomial constructions also extend to the rank and sum-rank metrics. Gabidulin codes use Frobenius powers in place of ordinary powers and play the role of Reed--Solomon codes in the rank metric \cite{G85}; they can be defined over any cyclic Galois extension, also in characteristic zero \cite{ALR2018}. Univariate skew polynomials give linearized Reed--Solomon codes, which attain the sum-rank Singleton bound and include Reed--Solomon and Gabidulin codes as special cases \cite{M-P2018}.  Other families of sum-rank metric codes include twisted linearized Reed-Solomon codes \cite{N}, Hamming and simplex codes \cite{MP2020Hamming} and product codes \cite{ALNW2022}.

Skew polynomial rings also provide algebraic models of the ambient spaces. Augot, Couvreur, Lavauzelle, and Neri constructed multivariate rank-metric Reed--Muller codes over finite Abelian Galois extensions \cite{ACLN}. They describe the rank-metric ambient space as a quotient of a multivariate skew polynomial ring, using the skew group algebra, and obtain distance bounds analogous to those of Alon--F\"uredi and Schwartz--Zippel. Applying this construction separately to $t$ sum-rank blocks gives $t$ polynomial classes rather than a single class representing all the blocks (see the end of Section~\ref{sec:preliminaries}). For cyclic Galois extensions, quotients of univariate skew polynomial rings are used in \cite{N,NS2025} to represent matrix algebras and their direct sums, so that all blocks are represented at once. For other approaches based on multivariate skew polynomials, see \cite{GU2019, MP2022b}. 

Berardini and Caruso introduced linearized Reed--Muller codes in the sum-rank metric over finite fields by evaluating multivariate Ore polynomials at families of endomorphisms \cite{BC2025}. They also showed that many of these codes embed into their linearized algebraic-geometry codes \cite{BC2024}; explicit and asymptotically good constructions of the latter are given in \cite{BBGM2026}. We compare their construction with ours in Section~\ref{subsec:comparison-LRM}.

\paragraph{Our contributions.}
We extend the quotient-algebra construction of \cite{ACLN} to represent all blocks of a sum-rank ambient space by a single polynomial class, and use this representation to define sum-rank Reed--Muller-type codes. Let
$\mathbb L/\mathbb K$ be a finite Abelian Galois extension of degree $N$,
and fix an internal direct-product decomposition
\[
G=\Gal(\mathbb L/\mathbb K)
=\langle\sigma_1\rangle\times\cdots\times\langle\sigma_m\rangle,
\qquad \ord(\sigma_i)=n_i,
\qquad N=\prod_{i=1}^m n_i.
\]
For each $i$, choose $t_i$ nonzero elements
$\alpha_{i,1},\ldots,\alpha_{i,t_i}$ in the intermediate field fixed by
all $\sigma_h$ with $h\neq i$, and assume that their relative norms
$\lambda_{i,j}\in\mathbb K^*$ are pairwise distinct as $j$ varies. Set
\[
t:=\prod_{i=1}^m t_i,
\qquad
\nu_i:=n_it_i,
\qquad
\bm\nu:=(\nu_1,\ldots,\nu_m),
\qquad
H_i(x_i):=\prod_{j=1}^{t_i}(x_i^{n_i}-\lambda_{i,j}).
\]
We also write $\Lambda_i:=\{\lambda_{i,1},\dots,\lambda_{i,t_i}\}$ for $i=1,\dots,m$. The first main result, the Representation Theorem (Theorem~\ref{representation theorem}), identifies a quotient of the multivariate skew polynomial ring with the full sum-rank ambient algebra:
\[
\frac{\mathbb L[x_1,\ldots,x_m;\sigma_1,\ldots,\sigma_m]}
{(H_1(x_1),\ldots,H_m(x_m))}
\cong
\big(\mathbb L[G]\big)^t
\cong
\big(\mathbb K^{N\times N}\big)^t,
\]
where the last identification depends on the choice of a $\mathbb K$-basis of $\mathbb L$.
The class of a skew polynomial $F$ corresponds to the $t$-tuple obtained by evaluating $F$ at the commuting operators
$\alpha_{1,j_1}\sigma_1,\ldots,\alpha_{m,j_m}\sigma_m$, one for each choice of indices $1\leq j_i\leq t_i$.
Each evaluation acts on $\mathbb L$ as a $\mathbb K$-linear endomorphism. Hence, one polynomial class determines all $t$ matrix blocks, and its sum-rank weight is the sum of their ranks, independently of the chosen basis.

The quotient has a standard monomial basis indexed by
$\Delta(\bm\nu)=\prod_{i=1}^m\{0,\ldots,\nu_i-1\}$.
This leads to the sum-rank Reed--Muller-type code of order $r$,
\[
\SRRM_{\mathbb L/\mathbb K}(r,m)
=
\Span_{\mathbb L}
\left\{
\bm x^{\bm u}\bmod(H_1,\ldots,H_m)
\ \middle|\
\bm u\in\Delta(\bm\nu),\ |\bm u|\leq r
\right\},
\]
where $|\bm u|=u_1+\cdots+u_m$.

The sum-rank length is
\[
Nt=\prod_{i=1}^m\nu_i,
\]
and the dimension is the number of admissible monomials:
\[
k(r,\bm\nu)
=
\#\left\{\bm u\in\Delta(\bm\nu)\mid |\bm u|\leq r\right\}
=
\sum_{h=0}^r[z^h]\prod_{i=1}^m\frac{1-z^{\nu_i}}{1-z}.
\]
Here $[z^h]$ denotes coefficient extraction.
For $0\leq r<\sum_{i=1}^m(\nu_i-1)$, after reordering the variables so
that $\nu_1\geq\cdots\geq\nu_m$, write
\[
r=\sum_{i=s+1}^m(\nu_i-1)+\ell,
\qquad 0\leq\ell<\nu_s-1.
\]
Our exact minimum-distance formula (Theorem~\ref{thm:minimum-distance}) is
\[
d_{\min}
\bigl(\SRRM_{\mathbb L/\mathbb K}(r,m)\bigr)
=
(\nu_s-\ell)\prod_{i=1}^{s-1}\nu_i.
\]
Thus, the dimension and minimum distance have the same combinatorial formulas as those of affine Cartesian codes, with side lengths $\nu_i=n_it_i$.

To establish the distance bounds, we use the Dickson matrix of right multiplication by a polynomial class and a generalized Moore matrix adapted to the Abelian Galois decomposition and the chosen elements $\alpha_{i,j}$. We prove that these matrices have the same rank over $\mathbb L$, equal to the sum-rank weight of the corresponding codeword. The weight can therefore be computed from the image of a multiplication map, or equivalently from a colon ideal. An argument with leading monomials gives an Alon--F\"uredi-type lower bound and a Schwartz--Zippel-type consequence. The lower bound is attained by products of univariate right divisors of the polynomials $H_i$, one in each variable.

The distance formula extends to monomial codes supported on lower sets. For every nonempty lower set $D\subseteq\Delta(\bm\nu)$, that is, a set closed under componentwise decrease, the associated code has dimension $\#D$ and exact minimum distance (Theorem~\ref{thm:lower-set-distance})
\[
\min_{\bm u\in D}\prod_{i=1}^m(\nu_i-u_i).
\]
This also covers weighted-degree constructions, which arise in the comparison with lattice-simplex codes.

The construction recovers several familiar code families, summarized in Table~\ref{tab:specializations}. For $m=1$ it gives linearized Reed--Solomon codes; for $t=1$, the rank-metric Reed--Muller codes of \cite{ACLN}. Taking $m=t=1$ gives Gabidulin codes. If $N=1$, the skew structure disappears and the codes become affine Cartesian codes on the grid
$\Lambda_1\times\cdots\times\Lambda_m\subseteq(\mathbb K^*)^m$, with the Hamming metric. Taking also $m=1$ gives Reed--Solomon codes with nonzero evaluation points.

Finally, we compare our construction with the linearized Reed--Muller codes of \cite{BC2025}, which are defined over finite fields. After a monomial change of variables, each of their codes is the image of a monomial code in one of our quotient algebras (Proposition~\ref{prop:BC-identification}); its support need not be a lower set and may come from Laurent monomials. This determines the dimension of all these codes. Conversely, our construction works over any finite Abelian Galois extension, including non-cyclic extensions over infinite fields, such as $\mathbb Q(\sqrt2,\sqrt3)/\mathbb Q$, and with arbitrary grids of norms. The families overlap, but neither contains the other in general.

When the support is a lower set, Theorem~\ref{thm:lower-set-distance} gives the exact minimum distance. This applies to the standard linearized Reed--Muller codes whose Frobenius powers have pairwise coprime orders with product $N$, and to the lattice-simplex codes associated with adapted lattice bases, including the almost-commutative family of \cite{BC2025}. For these codes we determine the exact minimum sum-rank distance in every degree, extending the parameter results of \cite[Theorems~3.1 and~3.3]{BC2025} beyond the range of degrees at most $q-2$, and we show that the improved bound of \cite[Remark~3.2]{BC2025} is sharp. Moreover, lower-set codes with hyperbolic support have larger dimension than the almost-commutative lattice-simplex codes with the same length and minimum distance.

\paragraph{Organization of the paper.}
Section~\ref{sec:preliminaries} fixes notation and recalls the necessary background on sum-rank metric codes, Abelian Galois extensions and skew polynomials, together with the codes that occur as special cases. Section~\ref{sec: rapp theorem} constructs the quotient representation of the sum-rank ambient space. Section~\ref{sec: Dickson and Moore} introduces the Dickson and Moore matrices and proves the rank identities for the sum-rank weight. Section~\ref{sezione codici} defines the total-degree, lower-set, and weighted sum-rank Reed--Muller-type codes and determines their dimensions and minimum distances. Section~\ref{sec: comparisons} gives the specializations and the comparison with Berardini and Caruso. The extension $\mathbb Q(\sqrt2,\sqrt3)/\mathbb Q$ serves as a running example.

\numberwithin{equation}{section}

\section{Preliminaries}\label{sec:preliminaries}

\subsection{Notation and sum-rank metric codes}\label{sec: notation preliminaries}

Throughout the paper, $\mathbb K\subseteq\mathbb L$ are fields and
$[\mathbb L:\mathbb K]=N<\infty$. We use
$\mathbb N=\{0,1,2,\ldots\}$. For a positive integer $a$, set $\Delta(a):=\{0,1,\ldots,a-1\}.$ If $\bm a=(a_1,\ldots,a_m)$ is a tuple of positive integers, write $\Delta(\bm a):=\Delta(a_1)\times\cdots\times\Delta(a_m)$.
For a multi-index $\bm i=(i_1,\ldots,i_m)\in\mathbb N^m$, we use
$|\bm i|:=i_1+\cdots+i_m$ and $\bm x^{\bm i}:=x_1^{i_1}\cdots x_m^{i_m}$.

We use the matrix form of the sum-rank metric. Let
$\bm\eta=(\eta_1,\ldots,\eta_t)$ satisfy
$1\leq\eta_j\leq N$, and set
\[
\eta:=\eta_1+\cdots+\eta_t,
\qquad
\mathcal M_{\bm\eta}
:=
\prod_{j=1}^t\mathbb K^{N\times\eta_j}.
\]
The integer $\eta$ is called the \emph{sum-rank length} of $\mathcal M_{\bm\eta}$.
For a tuple of matrices $A=\bigl(A^{(1)},\ldots,A^{(t)}\bigr)\in\mathcal M_{\bm\eta},$ its sum-rank weight is
\[
\srk(A):=\sum_{j=1}^t\rk_{\mathbb K}\bigl(A^{(j)}\bigr).
\]
The corresponding distance on $\mathcal M_{\bm\eta}$ is
\[
d_{\srk}(A,B)
:=\srk(A-B)
=\sum_{j=1}^t
\rk_{\mathbb K}\bigl(A^{(j)}-B^{(j)}\bigr).
\]
We only need the case $\mathcal M_{(N,\ldots,N)}=\bigl(\mathbb K^{N\times N}\bigr)^t$ of $t$ square blocks, which has sum-rank length $\eta=Nt$.

The usual vector description is obtained from the matrix model by choosing an
ordered $\mathbb K$-basis
$\mathcal B=(\beta_1,\ldots,\beta_N)$ of $\mathbb L$. For
$a\in\mathbb L$, let $[a]_{\mathcal B}\in\mathbb K^N$ denote its coordinate
column vector. If
$v^{(j)}=(v^{(j)}_1,\ldots,v^{(j)}_{\eta_j})\in
\mathbb L^{\eta_j}$, define
\[
\mathrm{Ext}_{\mathcal B}\bigl(v^{(j)}\bigr)
:=
\begin{bmatrix}
[v^{(j)}_1]_{\mathcal B}&\cdots&[v^{(j)}_{\eta_j}]_{\mathcal B}
\end{bmatrix}
\in\mathbb K^{N\times\eta_j}.
\]
Writing $v=\bigl(v^{(1)}\mid\cdots\mid v^{(t)}\bigr)\in\mathbb L^\eta$,
we obtain a $\mathbb K$-linear isomorphism
\[
\begin{aligned}
\mathrm{Ext}_{\mathcal B}^{\bm\eta}:
\mathbb L^\eta&\longrightarrow\mathcal M_{\bm\eta},\\
v&\longmapsto
\bigl(
\mathrm{Ext}_{\mathcal B}(v^{(1)}),\ldots,
\mathrm{Ext}_{\mathcal B}(v^{(t)})
\bigr).
\end{aligned}
\]
The sum-rank weight of $v$ is therefore
\[
\srk(v):=\srk\bigl(\mathrm{Ext}_{\mathcal B}^{\bm\eta}(v)\bigr)=\sum_{j=1}^t\rk_{\mathbb K}\bigl(\mathrm{Ext}_{\mathcal B}(v^{(j)})\bigr)=\sum_{j=1}^t\dim_{\mathbb K}\Span_{\mathbb K}\{v^{(j)}_1,\ldots,v^{(j)}_{\eta_j}\}.
\]
The associated distance on $\mathbb L^\eta$ is $d_{\srk}(u,v):=\srk(u-v).$ By construction, $\mathrm{Ext}_{\mathcal B}^{\bm\eta}$ is an isometry between the vector and matrix realizations of the sum-rank metric \cite{M-P2018,M-P2022}. Moreover, changing the basis $\mathcal B$ left-multiplies every matrix block by an invertible matrix over $\mathbb K$, so $\srk(v)$ does not depend on the chosen basis. 

If $t=1$, the weight is the rank of a single matrix and one recovers the
rank metric. If $\eta_1=\cdots=\eta_t=1$, each block has one column and its
rank is either zero or one; hence the sum-rank weight is the Hamming weight
\[
\mathrm{wt}_{\mathrm H}(v)=\#\{j\mid v_j\neq0\}.
\]
For a single vector block $v=(v_1,\ldots,v_\eta)\in\mathbb L^\eta$ we write
\[
  \rk_{\mathbb K}(v):=\rk_{\mathbb K}\bigl(\mathrm{Ext}_{\mathcal B}(v)\bigr)
  =\dim_{\mathbb K}\Span_{\mathbb K}\{v_1,\ldots,v_\eta\}
\]
and use $\rk(v)$ as an abbreviation for this rank weight.

A matrix sum-rank code is a nonzero $\mathbb K$-linear subspace
$\mathcal C\subseteq\mathcal M_{\bm\eta}$, with minimum distance
\[
  d_{\min}(\mathcal C)
  :=\min\{\srk(A)\mid 0\ne A\in\mathcal C\}.
\]
Since every block has $N$ rows and $\eta_j\le N$ columns, the
sum-rank Singleton bound takes the form
\[
  \dim_{\mathbb K}\mathcal C
  \le N\bigl(\eta-d_{\min}(\mathcal C)+1\bigr).
\]
For a nonzero $\mathbb L$-linear code $\mathcal C\subseteq\mathbb L^\eta$
of dimension $k=\dim_{\mathbb L}\mathcal C$, its matrix realization
\[
  \mathcal C_{\mathcal B}^{\mathrm{mat}}
  :=\mathrm{Ext}_{\mathcal B}^{\bm\eta}(\mathcal C)
  \subseteq\mathcal M_{\bm\eta}
\]
has dimension $Nk$ over $\mathbb K$ and the same minimum sum-rank
distance as $\mathcal C$. Consequently, the bound becomes
\[
  d_{\min}(\mathcal C)\le\eta-k+1.
\]
A matrix code attaining equality in the dimension bound is called
a \textit{maximum sum-rank distance code} (MSRD). An $\mathbb L$-linear code $\mathcal C$ is called MSRD if
$\mathcal C_{\mathcal B}^{\mathrm{mat}}$ is, that is, if
$d_{\min}(\mathcal C)=\eta-k+1$.
The corresponding terms in the rank and Hamming specializations
are \textit{maximum rank distance} (MRD) and \textit{maximum distance separable} (MDS), respectively.

\subsection{Classical polynomial evaluation codes}

We briefly recall the commutative model that motivates the construction. Let
$\mathbb F$ be a field, let $S_i\subseteq\mathbb F$ be a finite nonempty
set of cardinality $s_i$, and put
\[
S:=S_1\times\cdots\times S_m,
\qquad
g_i(x_i):=\prod_{a\in S_i}(x_i-a).
\]
The Cartesian evaluation map
\[
\begin{aligned}
\ev_S:\quad \mathbb F[x_1,\ldots,x_m]
&\longrightarrow \mathbb F^{S},\\
F&\longmapsto \bigl(F(P)\bigr)_{P\in S}
\end{aligned}
\]
has kernel $(g_1(x_1),\ldots,g_m(x_m))$. Consequently, every function on
$S$ has a unique polynomial representative in the span of the monomials
\[
\bm x^{\bm u},
\qquad
\bm u\in\Delta(\bm s),
\qquad
\bm s=(s_1,\ldots,s_m).
\]
The affine Cartesian code of order $r$ is the image under $\ev_S$ of the
span of the reduced monomials satisfying $|\bm u|\leq r$
\cite{GT2013,LRMV2014}. When $\mathbb F=\F_q$ and
$S_1=\cdots=S_m=\F_q$, this is the generalized Reed--Muller code
\cite{KLP68}; when $m=1$, it is a Reed--Solomon code \cite{RS60}. In each
case the dimension is obtained by counting the relevant reduced monomials,
and the minimum Hamming distance is determined by the least possible number
of nonzero values of such a polynomial on the grid. This last problem is the
setting of the Alon--F\"uredi and Schwartz--Zippel bounds
\cite{AF,S,Z,BCPS2018,CFS2017}.

\subsection{Abelian Galois extensions and skew group algebras}

From now on, let $\mathbb L/\mathbb K$ be a finite Abelian Galois extension
of degree $N$. Fix an internal direct-product decomposition
\[
G:=\Gal(\mathbb L/\mathbb K)
=\langle\sigma_1\rangle\times\cdots\times\langle\sigma_m\rangle
\cong
\frac{\mathbb Z}{n_1\mathbb Z}\times\cdots\times
\frac{\mathbb Z}{n_m\mathbb Z},
\]
where $n_i=\ord(\sigma_i)$ and $N=n_1\cdots n_m$. Set
$\bm n=(n_1,\ldots,n_m)$. For $\bm i=(i_1,\ldots,i_m)\in\mathbb Z^m$,
write $\bm\sigma^{\bm i}:=\sigma_1^{i_1}\cdots\sigma_m^{i_m}.$
The factors commute, and every element of $G$ is represented uniquely as
$\bm\sigma^{\bm i}$ with $\bm i\in\Delta(\bm n)$.

For each $i\in\{1,\ldots,m\}$, define $\mathbb E_i:=\mathbb L^{\langle\sigma_1,\ldots,\widehat{\sigma_i},\ldots,\sigma_m\rangle}.$
Then $[\mathbb E_i:\mathbb K]=n_i$, and the restriction of $\sigma_i$
generates $\Gal(\mathbb E_i/\mathbb K)$. We use the standard trace and norm
\[
\Tr_{\mathbb L/\mathbb K}(a)=\sum_{\tau\in G}\tau(a),
\qquad
\N_{\mathbb L/\mathbb K}(a)=\prod_{\tau\in G}\tau(a),
\]
and the analogous notation for intermediate extensions.

The skew group algebra $\mathbb L[G]$ is the left $\mathbb L$-vector
space
\[
\mathbb L[G]
=
\left\{\sum_{\tau\in G}a_\tau\tau\ \middle|\ a_\tau\in\mathbb L\right\},
\]
with multiplication determined by
\[
(a\tau)(b\gamma)=a\tau(b)(\tau\gamma),
\qquad
a,b\in\mathbb L,
\quad \tau,\gamma\in G.
\]
The natural action of $\mathbb L[G]$ on $\mathbb L$ defines a canonical
isomorphism of $\mathbb K$-algebras:
\[
\begin{aligned}
\iota:\quad \mathbb L[G]&\xrightarrow{\ \sim\ }\End_{\mathbb K}(\mathbb L),\\
\sum_{\tau\in G}a_\tau\tau
&\longmapsto
\left(z\longmapsto\sum_{\tau\in G}a_\tau\tau(z)\right).
\end{aligned}
\]

Fix an ordered $\mathbb K$-basis
$\mathcal B=(\beta_1,\ldots,\beta_N)$ of $\mathbb L$.
Through the preceding identification, each $P\in\mathbb L[G]$
is represented by the matrix
\[
  \rho_{\mathcal B}(P)
  :=\bigl([P(\beta_1)]_{\mathcal B}\ \cdots\ 
          [P(\beta_N)]_{\mathcal B}\bigr)
  \in\mathbb K^{N\times N}.
\]
Thus, $\rho_{\mathcal B}:\mathbb L[G]\to\mathbb K^{N\times N}$
is a $\mathbb K$-algebra isomorphism, and
\[
  \rk_{\mathbb K}(P)
  :=\dim_{\mathbb K}P(\mathbb L)
  =\rk_{\mathbb K}\bigl(\rho_{\mathcal B}(P)\bigr).
\]
Writing $\mathcal B^{(t)}=(\mathcal B,\ldots,\mathcal B)$,
we denote its componentwise extension by
\[
  \rho_{\mathcal B^{(t)}}:(\mathbb L[G])^t
  \longrightarrow(\mathbb K^{N\times N})^t,
  \qquad
  (P_1,\ldots,P_t)
  \longmapsto
  \bigl(\rho_{\mathcal B}(P_1),\ldots,\rho_{\mathcal B}(P_t)\bigr).
\]
We equip $(\mathbb L[G])^t$ with the sum-rank weight
\[
  \srk(P_1,\ldots,P_t)
  :=\sum_{j=1}^t\dim_{\mathbb K}P_j(\mathbb L)
  =\sum_{j=1}^t\rk_{\mathbb K}\bigl(\rho_{\mathcal B}(P_j)\bigr).
\]
Hence, $\rho_{\mathcal B^{(t)}}$ is an isometry for the corresponding
sum-rank distances. A change of basis simultaneously conjugates
the matrix blocks, so this weight is independent of $\mathcal B$.

\subsection{Multivariate skew polynomials and truncated norms}
Consider the multivariate skew polynomial ring $\Lx$ 
associated with the decomposition of $G$ fixed above. Its elements are finite sums
\[
F=\sum_{\bm i\in\mathbb N^m}F_{\bm i}\bm x^{\bm i},
\qquad F_{\bm i}\in\mathbb L,
\]
with coefficients written on the left. Addition is coefficientwise, and multiplication is determined by
\[
x_i a=\sigma_i(a)x_i,
\qquad
x_ix_j=x_jx_i,
\qquad
a\in\mathbb L.
\]
Equivalently,
\[
(a\bm x^{\bm i})(b\bm x^{\bm j})
=a\bm\sigma^{\bm i}(b)\bm x^{\bm i+\bm j}.
\]
These relations are compatible because the $\sigma_i$ commute. The support, total degree, and degree in $x_i$ are defined by
\[
\mathrm{Supp}(F):=\{\bm i\mid F_{\bm i}\neq0\},
\qquad
\deg F:=\max_{\bm i\in\mathrm{Supp}(F)}|\bm i|,
\qquad
\deg_{x_i}F:=\max_{\bm j\in\mathrm{Supp}(F)}j_i.
\]
As usual, by convention $\deg 0=-\infty$. For any admissible monomial order
$\prec$ on $\mathbb N^m$, we denote the leading monomial and leading term
by $\LM_\prec(F)$ and $\LT_\prec(F)$, respectively. If the order is clear,
the subscript is omitted. Since the twisting affects coefficients but not
exponents, leading monomials multiply as in the commutative polynomial ring.

For skew polynomials $P$ and $Q$, we say that $P$ is a \textbf{right divisor} of
$Q$ if $Q=AP$ for some skew polynomial $A$.

\begin{definition}
For $\bm\alpha=(\alpha_1,\ldots,\alpha_m)\in\mathbb E_1\times\cdots\times\mathbb E_m$ and $\bm i\in\mathbb N^m$, the \textbf{norm} and the $\bm i$-th \textbf{truncated norm} of $\bm\alpha$ are
\[
\N(\bm\alpha):=\prod_{\ell=1}^m\N_{\mathbb E_\ell/\mathbb K}(\alpha_\ell)=\prod_{\ell=1}^m\prod_{k=0}^{n_\ell-1}\sigma_\ell^k(\alpha_\ell)\in\mathbb K,
\qquad
\N^{\bm i}(\bm\alpha):=\prod_{\ell=1}^m\prod_{k=0}^{i_\ell-1}\sigma_\ell^k(\alpha_\ell)\in\mathbb L,
\]
where an empty product is $1$.
\end{definition}

Observe that the powers $x_i^{n_i}$ are central in $\Lx$. Indeed,
$\sigma_i^{n_i}=\id_{\mathbb L}$, and all variables commute. Therefore
$x_i^{n_i}-\lambda$ is central for every $\lambda\in\mathbb K$, and an
ideal generated by polynomials of this form is two-sided.

The basic operator evaluation used in the sequel is the following result from
\cite{ACLN}. Set
\[
I_{\bm n}:=(x_1^{n_1}-1,\ldots,x_m^{n_m}-1)\subseteq\Lx.
\]

\begin{theorem}[{\cite[Theorem~60]{ACLN}}]
\label{Theorem 60 ACLN}
The map
\[
\begin{aligned}
\psi:\quad \Lx&\longrightarrow\mathbb L[G],\\
\sum_{\bm i\in\mathbb N^m}b_{\bm i}\bm x^{\bm i}
&\longmapsto
\sum_{\bm i\in\mathbb N^m}b_{\bm i}\bm\sigma^{\bm i}
\end{aligned}
\]
is a surjective ring homomorphism with kernel $I_{\bm n}$.
\end{theorem}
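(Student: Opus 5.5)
The plan has three parts: check that $\psi$ is well defined and multiplicative, show it is surjective, and identify its kernel by reducing modulo $I_{\bm n}$ and comparing dimensions over $\mathbb L$.

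\emph{Homomorphism.} Additivity is clear since $\psi$ acts coefficientwise. For multiplicativity it suffices, by bilinearity, to treat monomial terms. We have
\[
\psi\bigl((a\bm x^{\bm i})(b\bm x^{\bm j})\bigr)=a\bm\sigma^{\bm i}(b)\bm\sigma^{\bm i+\bm j}=(a\bm\sigma^{\bm i})(b\bm\sigma^{\bm j}),
\]
using the multiplication rule in $\mathbb L[G]$ and the fact that the $\sigma_i$ commute, so that $\bm\sigma^{\bm i}\bm\sigma^{\bm j}=\bm\sigma^{\bm i+\bm j}$. Also $\psi(1)=\id$.

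\emph{Surjectivity.} Every $\tau\in G$ equals $\bm\sigma^{\bm i}$ for a unique $\bm i\in\Delta(\bm n)$. Hence $\sum_\tau a_\tau\tau$ is the image of $\sum_{\bm i\in\Delta(\bm n)}a_{\bm\sigma^{\bm i}}\bm x^{\bm i}$.

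\emph{Kernel.} First, $I_{\bm n}\subseteq\ker\psi$, because $\psi(x_i^{n_i}-1)=\sigma_i^{n_i}-\id=0$ and $\ker\psi$ is a two-sided ideal. For the reverse inclusion, note that $x_i^{n_i}-1$ is central and monic in $x_i$. Repeatedly replacing $x_i^{n_i}$ by $1$ (that is, subtracting left multiples $c\,\bm x^{\bm j}(x_i^{n_i}-1)$) shows that every $F$ is congruent modulo $I_{\bm n}$ to some $R=\sum_{\bm i\in\Delta(\bm n)}r_{\bm i}\bm x^{\bm i}$. If $F\in\ker\psi$, then $\psi(R)=\sum_{\bm i\in\Delta(\bm n)}r_{\bm i}\bm\sigma^{\bm i}=0$. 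The elements $\bm\sigma^{\bm i}$ for $\bm i\in\Delta(\bm n)$ are the $N$ distinct elements of $G$, and they form a left $\mathbb L$-basis of $\mathbb L[G]$ by the definition of the skew group algebra. Therefore all $r_{\bm i}=0$, so $R=0$ and $F\in I_{\bm n}$.

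The main point to handle carefully is this last step: independence of the $\bm\sigma^{\bm i}$ inside $\mathbb L[G]$ holds by definition, whereas independence as operators on $\mathbb L$ would require Dedekind/Artin independence of characters. I would work formally in $\mathbb L[G]$ so that only the former is needed. I would also check that the reduction procedure uses only left multiples of the central generators, so that the remainder stays in $I_{\bm n}$-cosets correctly.
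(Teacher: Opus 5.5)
Your proof is correct. The paper does not reprove this result; it imports it from \cite[Theorem~60]{ACLN}. Your argument is the standard one: the generators of $I_{\bm n}$ lie in the kernel, every class modulo $I_{\bm n}$ has a representative supported on $\Delta(\bm n)$, and the $N$ elements $\bm\sigma^{\bm i}$, $\bm i\in\Delta(\bm n)$, form a left $\mathbb L$-basis of $\mathbb L[G]$. This is the same containment-plus-monomial-basis pattern the paper uses in Lemma~\ref{lem:intersection-ideals} and in the proof of Theorem~\ref{representation theorem}. You are also right that only the formal independence of $G$ inside $\mathbb L[G]$ is needed here; Artin's independence of characters enters only for the identification $\iota$ with $\End_{\mathbb K}(\mathbb L)$.
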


\begin{corollary}
\label{Corollary Thm 60 ACLN}
The preceding map induces isomorphisms of $\mathbb K$-algebras
\[
\frac{\Lx}{I_{\bm n}}
\cong\mathbb L[G]
\cong\End_{\mathbb K}(\mathbb L)
\cong\mathbb K^{N\times N},
\]
where the last isomorphism is obtained after choosing a
$\mathbb K$-basis of $\mathbb L$.
\end{corollary}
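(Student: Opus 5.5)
The corollary follows by composing three isomorphisms, two of which are already available. First, Theorem~\ref{Theorem 60 ACLN} gives that $\psi$ is a surjective ring homomorphism with kernel $I_{\bm n}$. The first isomorphism theorem for rings then yields a ring isomorphism
\[
\bar\psi:\Lx/I_{\bm n}\to\mathbb L[G],
\]
and $I_{\bm n}$ is two-sided because its generators $x_i^{n_i}-1$ are central. To upgrade $\bar\psi$ to a $\mathbb K$-algebra isomorphism, I will check that $\psi$ is $\mathbb K$-linear. Take $c\in\mathbb K$. Since $\mathbb K$ is fixed by every $\sigma_i$, we have $x_ic=cx_i$, so $c$ is central in $\Lx$. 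It is likewise central in $\mathbb L[G]$, because $\tau(c)=c$ for every $\tau\in G$. Then $\psi(cF)=c\psi(F)$ is immediate from the coefficientwise definition of $\psi$. So both sides are $\mathbb K$-algebras via central copies of $\mathbb K$, and $\bar\psi$ respects these structures.

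Second, the map $\iota:\mathbb L[G]\to\End_{\mathbb K}(\mathbb L)$ recalled above is a $\mathbb K$-algebra isomorphism. If I were asked to justify this rather than cite it, I would argue as follows.
\begin{itemize}
\item Multiplicativity follows from $(a\tau)(b\gamma)(z)=a\tau(b\gamma(z))$.
\item Injectivity follows from Dedekind--Artin independence of the distinct characters $\tau\in G$ over $\mathbb L$.
\item Surjectivity follows by comparing $\mathbb K$-dimensions: both sides have dimension $N^2$, using $|G|=N$ since the extension is Galois.
\end{itemize}

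Third, fixing the basis $\mathcal B$, the map $\rho_{\mathcal B}$ identifies $\End_{\mathbb K}(\mathbb L)$ with $\mathbb K^{N\times N}$. This is the standard fact that taking coordinate matrices of linear maps is a $\mathbb K$-algebra isomorphism.

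Composing the three maps gives the chain of isomorphisms in the statement. None of the steps is a genuine obstacle. The only point needing a line of care is the $\mathbb K$-algebra structure on the quotient ring, that is, the centrality of $\mathbb K$ in $\Lx$ and its compatibility with $\psi$; the substantive work is already contained in Theorem~\ref{Theorem 60 ACLN}.
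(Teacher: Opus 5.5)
Your proposal is correct and takes the same route as the paper, which gives no separate proof and treats the corollary as immediate: the first isomorphism theorem applied to $\psi$ from Theorem~\ref{Theorem 60 ACLN}, followed by the canonical isomorphism $\iota$ and the coordinate map $\rho_{\mathcal B}$ recalled in the preliminaries. Your extra checks (centrality of $\mathbb K$ and $\mathbb K$-linearity of $\psi$, Dedekind--Artin for injectivity of $\iota$, and the $N^2$ dimension count for surjectivity) fill in standard details the paper leaves implicit.
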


\begin{example}\label{ex:running-example} 
We use the following example throughout the paper.
Consider
\[ 
\mathbb K=\mathbb Q, \qquad \mathbb L=\mathbb Q(\sqrt{2},\sqrt{3}), \qquad\text{with}\quad G=\Gal(\mathbb L/\mathbb K) = \langle \sigma_1,\sigma_2\rangle \cong \mathbb Z/2\mathbb Z\times\mathbb Z/2\mathbb Z,
\] 
where
\[ \sigma_1(\sqrt{2})=-\sqrt{2}, \quad \sigma_1(\sqrt{3})=\sqrt{3};\qquad
\sigma_2(\sqrt{2})=\sqrt{2}, \quad \sigma_2(\sqrt{3})=-\sqrt{3}.
\] 
Thus, $n_1=n_2=2$ and $N=[\mathbb L:\mathbb K]=4$.
The corresponding intermediate fields are 
\[ 
\mathbb{E}_1=\mathbb L^{\langle\sigma_2\rangle} =\mathbb Q(\sqrt{2}), \qquad \mathbb{E}_2=\mathbb L^{\langle\sigma_1\rangle} =\mathbb Q(\sqrt{3}).
\]
Let $\alpha_1=1+\sqrt{2}\in\mathbb{E}_1$ and $\alpha_2=1+\sqrt{3}\in\mathbb{E}_2.$
Their relative norms are
\[
\N_{\mathbb{E}_1/\mathbb K}(\alpha_1)=\prod_{h=0}^{1} \sigma_1^h(1+\sqrt{2})=(1+\sqrt{2})\sigma_1(1+\sqrt{2})=-1,\quad\N_{\mathbb{E}_2/\mathbb K}(\alpha_2)=\prod_{\ell=0}^{1} \sigma_2^\ell(1+\sqrt{3})=(1+\sqrt{3})\sigma_2(1+\sqrt{3})=-2.
\]
If we set $\bm{\alpha}=(\alpha_1,\alpha_2)$, then we can conclude
\[ 
\N(\bm{\alpha})=\prod_{\ell=1}^2
\big(\alpha_\ell\,\sigma_\ell({\alpha}_\ell)\big)=
\N_{\mathbb{E}_1/\mathbb K}(\alpha_1)\,\N_{\mathbb{E}_2/\mathbb K}(\alpha_2)=2.
\]
\end{example}

\subsection{Reference families in rank and sum-rank metrics}

We conclude by recording the special cases used later. Suppose first that
$\mathbb L/\mathbb K$ is cyclic of degree $N$, generated by
$\sigma$, and let $\mathcal B=(\beta_1,\ldots,\beta_N)$ be a
$\mathbb K$-basis of $\mathbb L$. For $0\leq r<N$, the operator form
of the Gabidulin code is
\[
\mathrm{Gab}_{\sigma}(r):=\Span_{\mathbb L}\{\mathrm{id},\sigma,\ldots,\sigma^r\}\subseteq\mathbb L[G].
\]
It has dimension $r+1$ and minimum rank distance $N-r$, and is therefore
MRD \cite{G85,ACLN}.

The linearized Reed--Solomon construction replaces the single rank block by several blocks. In the specialization needed in this paper, choose $\alpha_1,\ldots,\alpha_t\in\mathbb L^*$ with pairwise distinct norms in
$\mathbb K^*$, and set
\[
\N_{\mathbb L/\mathbb K}^{\,h}(\alpha)
:=\prod_{a=0}^{h-1}\sigma^a(\alpha).
\]
For $0\leq r<Nt$, the corresponding code is generated by
the rows
\[
\left(
\N_{\mathbb L/\mathbb K}^{\,h}(\alpha_j)
\sigma^h(\beta_k)
\right)_{
1\leq j\leq t,\ 1\leq k\leq N},
\qquad 0\leq h\leq r.
\]
It has dimension $r+1$ and minimum sum-rank distance $Nt-r$, hence it is
MSRD \cite{M-P2018}. For $t=1$, the nonzero truncated-norm factor only
rescales each generating row, and one recovers a Gabidulin code.

For the general Abelian decomposition fixed above, the rank-metric
Reed--Muller code of order $r$ introduced in \cite{ACLN} is the subspace
\[
\Span_{\mathbb L}
\left\{
\bm\sigma^{\bm u}
\ \middle|\
\bm u\in\Delta(\bm n),\ |\bm u|\leq r
\right\}
\subseteq\mathbb L[G],
\qquad
0\leq r\leq\sum_{i=1}^m(n_i-1),
\]
endowed with the rank of its action on $\mathbb L$. The construction of Berardini and Caruso \cite{BC2025} is recalled in Section~\ref{subsec:comparison-LRM}.

Corollary~\ref{Corollary Thm 60 ACLN} gives componentwise
isomorphisms
\[
(\mathbb L[G])^t
\cong
\left(\frac{\Lx}{I_{\bm n}}\right)^t,
\]
but there the $t$ blocks are represented by $t$ different polynomial classes. In the next section we construct a single quotient of $\Lx$ whose central relations encode all blocks at once.

\section{A skew-polynomial representation of the sum-rank ambient space}\label{sec: rapp theorem}
\begin{notation}\label{notation sezione 3}
Let $t=t_1\cdots t_m$, $N=n_1\cdots n_m$, and $\nu_i:=n_it_i$ for $i=1,\ldots,m$; put $\bm\nu:=(\nu_1,\ldots,\nu_m)$. Set
\[
J:=\{1,\dots,t_1\}\times\dots\times\{1,\dots,t_m\}.
\]
We order $J$ lexicographically and, through this order, identify it with $\{1,\ldots,t\}$. Accordingly, the $t$ blocks of $(\mathbb L[G])^t$, $(\mathbb K^{N\times N})^t$ and $\mathbb L^{Nt}$ are indexed by $\bm j\in J$, and the definitions of Section~\ref{sec: notation preliminaries} apply with $\sum_{j=1}^t$ replaced by $\sum_{\bm j\in J}$. Since the sum-rank weight is a sum over the blocks, it does not depend on the chosen order.
For every $i\in\{1,\dots,m\}$, fix nonzero elements $\alpha_{i,1},\ldots,\alpha_{i,t_i}\in\mathbb E_i$ and put $\lambda_{i,j}:=\N_{\mathbb{E}_i/\mathbb{K}}(\alpha_{i,j})\in\mathbb K^*$. Such a choice requires $t_i\le\#\N_{\mathbb E_i/\mathbb K}(\mathbb E_i^*)$; for instance, $t_i\le q-1$ if $\mathbb K=\F_q$. We assume that
\begin{equation}
\lambda_{i,j}\neq\lambda_{i,k}\quad\forall j\neq k.
\end{equation}
We write
\[\Lambda_i:=\{\lambda_{i,1},\dots,\lambda_{i,t_i}\}\quad\text{for }i\in\{1,\dots,m\},\qquad\Lambda:=(\Lambda_1,\ldots,\Lambda_m).\]
Moreover, for $\bm j=(j_1,\ldots,j_m)\in J$, we write $\bm{\alpha}^{(\bm j)}:=(\alpha_{1,j_1},\ldots,\alpha_{m,j_m}).$
For every $i\in\{1,\ldots,m\}$, we set
\[
H_i(x_i):=\prod_{j_i=1}^{t_i}(x_i^{n_i}-\lambda_{i, j_i}).
\]
We denote by $H_\Lambda:=(H_1(x_1),\dots,H_m(x_m))$ the ideal of $\mathbb L[x_1,\ldots,x_m;\sigma_1,\ldots,\sigma_m]$ generated by these polynomials.

Furthermore, for every $\bm j=(j_1,\ldots,j_m)\in J$, let $I_{\bm j}:= (x_1^{n_1}-\lambda_{1,j_1},\ldots, x_m^{n_m}-\lambda_{m,j_m}).$
\end{notation}
\begin{remark}\label{remark Polinomi centrali e comaximal}
For every $i\in\{1,\ldots,m\}$ and every $j_i\in\{1,\ldots,t_i\}$, the polynomial $x_i^{n_i}-\lambda_{i,j_i}$ is central. Indeed, for every $a\in\mathbb L$
\[
x_i^{n_i}a=\sigma_i^{n_i}(a)x_i^{n_i}=ax_i^{n_i}, 
\] 
since $\sigma_i^{n_i}=\id_{\mathbb{L}}$. Moreover, all the variables $x_1,\ldots,x_m$ commute with one another and $\lambda_{i,j_i}\in\mathbb{K}=\mathbb{L}^{G}$. Consequently, the polynomials $H_i(x_i)$ are central and all $I_{\bm{j}}$,
as well as $H_{\Lambda}$, are two-sided ideals of $\mathbb L[x_1,\dots,x_m; \sigma_1,\dots,\sigma_m]$.
Moreover, the ideals $I_{\bm{j}}$, with $\bm{j}\in J$, are pairwise comaximal. Indeed, let $\bm{j}\neq\bm{k}$. Then there exists $i\in\{1,\dots,m\}$ such that $j_i\neq k_i$. In our setting this means $\lambda_{i,j_i}\neq\lambda_{i,k_i}$, and
\[ 
1= \bigl(\lambda_{i,k_i}-\lambda_{i,j_i}\bigr)^{-1} \left[ \bigl(x_i^{n_i}-\lambda_{i,j_i}\bigr) - \bigl(x_i^{n_i}-\lambda_{i,k_i}\bigr) \right] \in I_{\bm{j}}+I_{\bm{k}}. 
\] 
Thus, 
\[ 
I_{\bm{j}}+I_{\bm{k}}=\mathbb L[x_1,\dots,x_m; \sigma_1,\dots,\sigma_m] \qquad \forall\bm{j}\neq\bm{k}.
\] 
\end{remark}
\begin{remark}\label{rem:universal-grobner-basis}
Put $y_i:=x_i^{n_i}$ and $h_i(y):=\prod_{j=1}^{t_i}(y-\lambda_{i,j})\in\mathbb K[y]$, so that $H_i(x_i)=h_i(y_i)$. The elements $y_1,\ldots,y_m$ are central, so they generate a commutative polynomial ring $\mathbb L[y_1,\ldots,y_m]$ inside $\Lx$, and $\Lx$ is a free left module over this ring with basis $\{\bm x^{\bm a}\mid\bm a\in\Delta(\bm n)\}$. Since every $H_i$ is central,
\[
H_\Lambda=\sum_{i=1}^m\Lx\,H_i
=\bigoplus_{\bm a\in\Delta(\bm n)}\Bigl(\sum_{i=1}^m\mathbb L[y_1,\ldots,y_m]\,h_i(y_i)\Bigr)\bm x^{\bm a}.
\]
The classes of $\bm y^{\bm b}$, $\bm b\in\Delta(t_1,\ldots,t_m)$, form an $\mathbb L$-basis of $\mathbb L[y_1,\ldots,y_m]/(h_1(y_1),\ldots,h_m(y_m))$, and $\bm y^{\bm b}\bm x^{\bm a}=\bm x^{\bm a+\bm n\bm b}$ with $\bm n\bm b:=(n_1b_1,\ldots,n_mb_m)$. Every $\bm u\in\Delta(\bm\nu)$ is uniquely of the form $\bm a+\bm n\bm b$ with $\bm a\in\Delta(\bm n)$ and $\bm b\in\Delta(t_1,\ldots,t_m)$. Hence, the standard monomials
\[
\mathcal B_{\mathrm{mon}}:=\left\{\bm x^{\bm a}\;\middle|\;\bm a\in\Delta(\bm\nu)\right\}
\]
form an $\mathbb L$-basis of $\Lx/H_\Lambda$, and $\dim_{\mathbb L}\Lx/H_\Lambda=Nt$. Equivalently, $\mathcal G_\Lambda:=\{H_1(x_1),\ldots,H_m(x_m)\}$, with $\LM_\prec(H_i)=x_i^{\nu_i}$, is the reduced Gröbner basis of $H_\Lambda$ for every monomial order $\prec$. Every residue class has a unique representative
\[
F_{\mathrm{red}}=\sum_{\bm a\in\Delta(\bm\nu)}c_{\bm a}\bm x^{\bm a},
\qquad c_{\bm a}\in \mathbb L,
\]
the reduced representative of the class; when no confusion can arise, we identify the class with $F_{\mathrm{red}}$. The same argument with $t_1=\cdots=t_m=1$ shows that, for every $\bm j\in J$, the classes of $\bm x^{\bm a}$, $\bm a\in\Delta(\bm n)$, form an $\mathbb L$-basis of $\Lx/I_{\bm j}$.
\end{remark}
\begin{lemma}\label{lem:intersection-ideals}
With the notation introduced in Notation \ref{notation sezione 3}, we have
\[
\bigcap_{\bm j\in J}
I_{\bm{j}}
=
\big(H_1(x_1),\ldots,H_m(x_m)\big).
\]
\end{lemma}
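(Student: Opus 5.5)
The plan is to reduce everything to the commutative central subring $\mathbb L[y_1,\ldots,y_m]$ with $y_i=x_i^{n_i}$, exactly as in Remark~\ref{rem:universal-grobner-basis}. Since $\Lx$ is a free left module over $\mathbb L[y_1,\ldots,y_m]$ with basis $\{\bm x^{\bm a}\mid\bm a\in\Delta(\bm n)\}$, and every generator of $I_{\bm j}$ and of $H_\Lambda$ is central and lies in this subring, each of these ideals decomposes along the basis. For $I_{\bm j}$ we have
\[
I_{\bm j}=\bigoplus_{\bm a\in\Delta(\bm n)}\mathfrak m_{\bm j}\,\bm x^{\bm a},\qquad \mathfrak m_{\bm j}:=\bigl(y_1-\lambda_{1,j_1},\ldots,y_m-\lambda_{m,j_m}\bigr)\subseteq\mathbb L[y_1,\ldots,y_m].
\]
The same decomposition holds for $H_\Lambda$, with $\mathfrak h:=(h_1(y_1),\ldots,h_m(y_m))$ in place of $\mathfrak m_{\bm j}$. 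This is the same computation as in the remark: the ideal is $\sum_i\Lx\,g_i$ with $g_i$ central, and we write $\Lx=\bigoplus_{\bm a}\mathbb L[\bm y]\bm x^{\bm a}$.

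Intersections are taken coordinatewise in a direct sum with a common basis. Hence
\[
\bigcap_{\bm j}I_{\bm j}=\bigoplus_{\bm a}\Bigl(\bigcap_{\bm j}\mathfrak m_{\bm j}\Bigr)\bm x^{\bm a},
\]
and the lemma reduces to the commutative identity $\bigcap_{\bm j\in J}\mathfrak m_{\bm j}=\mathfrak h$ in $\mathbb L[y_1,\ldots,y_m]$.

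The inclusion $\mathfrak h\subseteq\mathfrak m_{\bm j}$ is immediate, because $h_i(y_i)$ is divisible by $y_i-\lambda_{i,j_i}$. For the reverse inclusion there are two routes.

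The first route is the Chinese remainder theorem. The ideals $\mathfrak m_{\bm j}$ are pairwise comaximal by the same computation as in Remark~\ref{remark Polinomi centrali e comaximal}. In a commutative ring their intersection therefore equals their product $\prod_{\bm j}\mathfrak m_{\bm j}$. It then remains to show that this product is contained in $\mathfrak h$, which is not entirely obvious.

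The second route is a dimension count, and I would use it. The ideal $\mathfrak h$ has the $\mathbb L$-basis $\{\bm y^{\bm b}\mid\bm b\in\Delta(t_1,\ldots,t_m)\}$ for the quotient, so $\dim_{\mathbb L}\mathbb L[\bm y]/\mathfrak h=t$. This quotient is the tensor product $\bigotimes_i\mathbb L[y_i]/(h_i)$, and each factor is isomorphic to $\mathbb L^{t_i}$ by the univariate Chinese remainder theorem, because the $\lambda_{i,j}$ are distinct. Hence $\mathbb L[\bm y]/\mathfrak h\cong\mathbb L^{t}$, with the factors given by evaluation at the points $\bm\lambda_{\bm j}=(\lambda_{1,j_1},\ldots,\lambda_{m,j_m})$. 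Concretely, the evaluation map
\[
\mathbb L[\bm y]/\mathfrak h\to\prod_{\bm j\in J}\mathbb L[\bm y]/\mathfrak m_{\bm j}\cong\mathbb L^t
\]
is surjective: the products of univariate Lagrange interpolation polynomials give preimages of the standard idempotents. Both sides have dimension $t$, so the map is an isomorphism. Its kernel $\bigcap_{\bm j}\mathfrak m_{\bm j}/\mathfrak h$ is therefore zero, which gives the identity.

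The main point needing care is the first step: checking rigorously that intersection commutes with the direct-sum decomposition. This requires that each $I_{\bm j}$ really equals $\bigoplus_{\bm a}\mathfrak m_{\bm j}\bm x^{\bm a}$ and not something larger. That follows because the right-hand side is already a two-sided ideal of $\Lx$: multiplying by $b\,\bm x^{\bm e}$ only permutes and rescales the basis $\bm x^{\bm a}$ and multiplies by monomials in $\bm y$. Moreover, the right-hand side contains the generators, so it equals $I_{\bm j}$. The same argument applies to $H_\Lambda$.
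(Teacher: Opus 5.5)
Your argument is correct, but it is organized differently from the paper's.

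The paper stays inside $\Lx$. After the easy inclusion $H_\Lambda\subseteq\bigcap_{\bm j}I_{\bm j}$, it applies the Chinese Remainder Theorem for pairwise comaximal two-sided ideals directly in the noncommutative ring, with comaximality taken from Remark~\ref{remark Polinomi centrali e comaximal}. This gives $\Lx/\bigcap_{\bm j}I_{\bm j}\cong\prod_{\bm j}\Lx/I_{\bm j}$, of $\mathbb L$-dimension $Nt$. It then compares this with $\dim_{\mathbb L}\Lx/H_\Lambda=Nt$ from Remark~\ref{rem:universal-grobner-basis}.

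You instead begin with the free decomposition $\Lx=\bigoplus_{\bm a\in\Delta(\bm n)}\mathbb L[\bm y]\,\bm x^{\bm a}$. From it you show that all the ideals involved are extended from ideals of the commutative ring $\mathbb L[y_1,\ldots,y_m]$, and that intersections can be computed coordinatewise. This reduces the lemma to the classical identity $\bigcap_{\bm j}\mathfrak m_{\bm j}=(h_1(y_1),\ldots,h_m(y_m))$: the polynomials vanishing on the grid $\Lambda_1\times\cdots\times\Lambda_m$ are generated by the univariate $h_i$. You settle this by Lagrange interpolation and a dimension count.

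The paper's route is shorter, since both dimensions are read off from Remark~\ref{rem:universal-grobner-basis}, but it needs the CRT in a noncommutative ring. Yours uses only commutative interpolation, and it makes explicit that the lemma is the grid statement tensored with the basis $\{\bm x^{\bm a}\}$.

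One small precision concerns your two-sidedness check. Left multiplication by $b\,\bm x^{\bm e}$ does not only rescale; it also applies $\bm\sigma^{\bm e}$ to the coefficients of $f\in\mathfrak m_{\bm j}$. So you need $\mathfrak m_{\bm j}$ and $\mathfrak h$ to be stable under this coefficientwise action. This holds because their generators have coefficients in $\mathbb K$. The direct computation $\sum_i\Lx\,g_i=\sum_i\sum_{\bm a}\mathbb L[\bm y]\,g_i\,\bm x^{\bm a}$ for central $g_i$, which you also indicate, avoids the issue altogether.
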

\begin{proof}
We first prove that 
\[ (H_1(x_1),\ldots,H_m(x_m)) \subseteq \bigcap_{\bm j\in J} I_{\bm j}.
\] 
Indeed, fix an arbitrary $\bm j=(j_1,\ldots,j_m)\in J$. For every $i\in\{1,\ldots,m\}$, we have 
\[
H_i(x_i)=\left(\prod_{\substack{s=1\\s\neq j_i}}^{t_i}(x_i^{n_i}-\lambda_{i,s})\right)
(x_i^{n_i}-\lambda_{i, j_i})
\]
and therefore $H_i(x_i)\in I_{\bm j}$, since the factor $x_i^{n_i}-\lambda_{i, j_i}$ belongs to $I_{\bm j}$. Hence, 
\[ 
(H_1(x_1),\ldots,H_m(x_m))\subseteq I_{\bm j}.
\] 
Since this holds for every $\bm j\in J$, the desired inclusion follows.

Observe that the dimensions of the corresponding quotients as $\mathbb L$-vector spaces are finite. Hence, it suffices to show the equality between them in order to prove the reverse inclusion, i.e.
\begin{equation}\label{uguaglianza tra dimensioni}
    \dim_{\mathbb{L}}\frac{\mathbb{L}[x_1,\dots,x_m;\sigma_1,\dots,\sigma_m]}{\bigcap_{\bm j\in J}
I_{\bm{j}}}=\dim_{\mathbb{L}}\frac{\mathbb{L}[x_1,\dots,x_m;\sigma_1,\dots,\sigma_m]}{\Big(H_1(x_1),\dots,H_m(x_m)\Big)}.
\end{equation}
As stated in Remark \ref{remark Polinomi centrali e comaximal}, all $I_{\bm{j}}$ are pairwise comaximal two-sided ideals. As a consequence, the Chinese Remainder Theorem gives
\begin{equation}\label{crt}
\frac{\mathbb{L}[x_1,\dots,x_m;\sigma_1,\dots,\sigma_m]}{\bigcap_{\bm j\in J}
I_{\bm{j}}}\cong\prod_{\bm{j}\in J}\frac{\mathbb{L}[x_1,\dots,x_m;\sigma_1,\dots,\sigma_m]}{
I_{\bm{j}}}.
\end{equation}
By Remark~\ref{rem:universal-grobner-basis}, for every $\bm j\in J$ the residue classes of the monomials 
\[ 
x_1^{a_1}\cdots x_m^{a_m}, \qquad 0\leq a_i<n_i, 
\] 
form an $\mathbb L$-basis of $\mathbb{L}[x_1,\dots,x_m;\sigma_1,\dots,\sigma_m]/I_{\bm j}$. 
Therefore, , 
\[ 
\dim_{\mathbb L}\frac{\mathbb{L}[x_1,\dots,x_m;\sigma_1,\dots,\sigma_m]}{I_{\bm j}} = \prod_{i=1}^m n_i = N. 
\] 
Since $\#J=\prod_{i=1}^m t_i=t$, it follows that 
\[ 
\dim_{\mathbb L} \frac{\mathbb{L}[x_1,\dots,x_m;\sigma_1,\dots,\sigma_m]}{\displaystyle\bigcap_{\bm j\in J}I_{\bm j}} = \dim_{\mathbb L}\prod_{\bm{j}\in J}\frac{\mathbb{L}[x_1,\dots,x_m;\sigma_1,\dots,\sigma_m]}{
I_{\bm{j}}}= Nt.
\] 
On the other hand, by Remark~\ref{rem:universal-grobner-basis}, the residue classes of the monomials 
\[ 
x_1^{a_1}\cdots x_m^{a_m}, \qquad 0\leq a_i<n_it_i, \] 
form an $\mathbb L$-basis of 
$\mathbb{L}[x_1,\dots,x_m;\sigma_1,\dots,\sigma_m]/\big(H_1(x_1),\ldots,H_m(x_m)\big).
$ 
Consequently, 
\[ 
\dim_{\mathbb L} \frac{\mathbb{L}[x_1,\dots,x_m;\sigma_1,\dots,\sigma_m]}{(H_1(x_1),\ldots,H_m(x_m))} = \prod_{i=1}^m n_it_i = Nt.
\] 
This proves the claim in \eqref{uguaglianza tra dimensioni} and consequently the statement of this lemma.
\end{proof}
The next step is to introduce a \textbf{representation theorem} allowing us to view every element of
$(\mathbb L[G])^{t}$ as the class of an element in a suitable quotient of the multivariate skew polynomial ring
$\mathbb L[x_1,\ldots,x_m;\sigma_1,\ldots,\sigma_m]$.
More precisely, we will construct a surjective ring homomorphism
\[
\bm{\Phi}:\ \mathbb{L}[x_1,\ldots,x_m;\sigma_1,\ldots,\sigma_m]\longrightarrow (\mathbb{L}[G])^{t}
\]
and describe its kernel explicitly, thus obtaining an isomorphism
\[
(\mathbb{L}[G])^{t}\ \cong\ \frac{\mathbb{L}[x_1,\ldots,x_m;\sigma_1,\ldots,\sigma_m]}{\ker(\bm{\Phi})}.
\]
To construct the desired isomorphism, we first introduce the auxiliary maps involved in its definition and prove the preliminary results needed for the construction.
\begin{lemma}\label{lem:phi-ring-homomorphism}
For $\bm j\in J$, let $\phi_{\bm j}\colon\Lx\to\Lx$ be the substitution
\[
\phi_{\bm j}\Bigl(\sum_{\bm i}c_{\bm i}\bm x^{\bm i}\Bigr):=\sum_{\bm i}c_{\bm i}(\alpha_{1,j_1}x_1)^{i_1}\cdots(\alpha_{m,j_m}x_m)^{i_m},
\]
and let $\phi:=(\phi_{\bm j})_{\bm j\in J}\colon\Lx\to\Lx^t$. Then every $\phi_{\bm j}$ is a ring automorphism, and $\phi$ is a ring homomorphism.
\end{lemma}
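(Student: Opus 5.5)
The plan is to use the universal property of the skew polynomial ring. Write $R:=\Lx$ and fix $\bm j\in J$. The elements $X_i:=\alpha_{i,j_i}x_i\in R$ are the candidate images of $x_i$. There are three things to show: the substitution defines a ring endomorphism, it is bijective, and the product map $\phi$ is then a homomorphism.

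\textbf{Relations.} A left-$\mathbb L$-linear map $R\to R$ that is the identity on $\mathbb L$ and sends $x_i\mapsto X_i$ is multiplicative as soon as the $X_i$ satisfy the defining relations of $R$. First, $X_ia=\sigma_i(a)X_i$ for all $a\in\mathbb L$. This holds because
\[
\alpha_{i,j_i}x_ia=\alpha_{i,j_i}\sigma_i(a)x_i=\sigma_i(a)\alpha_{i,j_i}x_i,
\]
using that $\mathbb L$ is commutative. Second, $X_iX_h=X_hX_i$ for $i\neq h$. We compute
\[
X_iX_h=\alpha_{i,j_i}\sigma_i(\alpha_{h,j_h})x_ix_h,\qquad X_hX_i=\alpha_{h,j_h}\sigma_h(\alpha_{i,j_i})x_hx_i.
\]
Since $\alpha_{h,j_h}\in\mathbb E_h$ is fixed by $\sigma_i$ for $i\neq h$, and symmetrically $\alpha_{i,j_i}$ is fixed by $\sigma_h$, both products equal $\alpha_{i,j_i}\alpha_{h,j_h}x_ix_h$. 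This step is the only place where the hypothesis $\alpha_{i,j}\in\mathbb E_i$ is used.

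\textbf{Multiplicativity.} From these relations, $\phi_{\bm j}$ is well defined and independent of the order in which the factors $(\alpha_{i,j_i}x_i)^{i_i}$ are written. To prove multiplicativity concretely, I will compute
\[
(\alpha x_i)^k=\Bigl(\prod_{s=0}^{k-1}\sigma_i^s(\alpha)\Bigr)x_i^k .
\]
Hence $\phi_{\bm j}(c\bm x^{\bm i})=c\,\N^{\bm i}(\bm\alpha^{(\bm j)})\bm x^{\bm i}$, so $\phi_{\bm j}$ acts diagonally on monomials. It then suffices to check multiplicativity on monomials $a\bm x^{\bm i}$ and $b\bm x^{\bm k}$, which reduces to the cocycle identity
\[
\N^{\bm i+\bm k}(\bm\alpha)=\N^{\bm i}(\bm\alpha)\,\bm\sigma^{\bm i}\bigl(\N^{\bm k}(\bm\alpha)\bigr).
\]
This identity follows by splitting each product $\prod_{s=0}^{i_\ell+k_\ell-1}$ and once more using that $\sigma_h$ fixes $\mathbb E_\ell$ for $h\ne\ell$.

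\textbf{Bijectivity.} Since the $\alpha_{i,j}$ are nonzero, each $\N^{\bm i}(\bm\alpha^{(\bm j)})$ is a nonzero element of $\mathbb L$. The map $\phi_{\bm j}$ therefore rescales each monomial basis element by a nonzero scalar and is bijective, with inverse the substitution built from $\alpha_{i,j_i}^{-1}$ (or simply the inverse diagonal scaling). Hence $\phi_{\bm j}$ is an automorphism.

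\textbf{The product map.} Finally, $\phi=(\phi_{\bm j})_{\bm j\in J}$ is a ring homomorphism into the product ring $R^t$, since addition and multiplication in $R^t$ are componentwise. The main obstacle is the careful verification of the cocycle identity, in particular keeping track of the commuting twists in the multivariate setting.
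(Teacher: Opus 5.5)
Your proof is correct and follows the paper's argument: you check that the elements $\alpha_{i,j_i}x_i$ satisfy the twisting and commutation relations (using that $\sigma_h$ fixes $\mathbb E_\ell$ for $h\neq\ell$), you get the inverse from the substitution by $\alpha_{i,j_i}^{-1}$, and you note that $\phi$ is a homomorphism because the operations in $\Lx^t$ are componentwise. Your diagonal formula $\phi_{\bm j}(c\bm x^{\bm i})=c\,\N^{\bm i}(\bm\alpha^{(\bm j)})\bm x^{\bm i}$ and the cocycle identity are a correct, more concrete supplement to the universal-property step, which the paper uses on its own.
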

\begin{proof}
Fix $\bm j\in J$ and put $y_i:=\alpha_{i,j_i}x_i$. For $a\in\mathbb L$ we have $y_ia=\alpha_{i,j_i}\sigma_i(a)x_i=\sigma_i(a)y_i$. For $i\neq\ell$, since $\sigma_i$ fixes $\mathbb E_\ell$ and $\sigma_\ell$ fixes $\mathbb E_i$,
\[
y_iy_\ell=\alpha_{i,j_i}\sigma_i(\alpha_{\ell,j_\ell})x_ix_\ell=\alpha_{i,j_i}\alpha_{\ell,j_\ell}x_ix_\ell=\alpha_{\ell,j_\ell}\sigma_\ell(\alpha_{i,j_i})x_\ell x_i=y_\ell y_i.
\]
Thus, $y_1,\ldots,y_m$ satisfy the defining relations of $\Lx$, and $\phi_{\bm j}$ is a ring endomorphism. By the same argument, the substitution $x_i\mapsto\alpha_{i,j_i}^{-1}x_i$ is an endomorphism; it is inverse to $\phi_{\bm j}$. Since the ring operations of $\Lx^t$ are componentwise, $\phi$ is a ring homomorphism.
\end{proof}
For $\bm j\in J$ we denote by $\pi_{\bm j}\colon\Lx^t\to\Lx$ the projection onto the $\bm j$-th component, so that $\pi_{\bm j}\circ\phi=\phi_{\bm j}$.
\begin{theorem}[\textbf{Representation Theorem}]\label{representation theorem}
Let $\psi$ be the map of Theorem~\ref{Theorem 60 ACLN}, and define
\[
\bm\Phi\colon\Lx\longrightarrow(\mathbb L[G])^t,\qquad
p\longmapsto\bigl(\psi(\phi_{\bm j}(p))\bigr)_{\bm j\in J}.
\]
Then $\bm\Phi$ is a surjective ring homomorphism with $\ker\bm\Phi=H_\Lambda$. In particular,
\[
(\mathbb L[G])^t\cong\frac{\Lx}{H_\Lambda}.
\]
\end{theorem}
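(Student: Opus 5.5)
The plan is to reduce the statement to Theorem~\ref{Theorem 60 ACLN} applied blockwise, combined with Lemma~\ref{lem:intersection-ideals} and the Chinese Remainder Theorem already used in its proof.

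\textbf{Homomorphism.} Each component $\psi\circ\phi_{\bm j}$ is a composition of ring homomorphisms, by Lemma~\ref{lem:phi-ring-homomorphism} and Theorem~\ref{Theorem 60 ACLN}. Since $(\mathbb L[G])^t$ carries componentwise operations, $\bm\Phi$ is a ring homomorphism.

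\textbf{Kernel of each component.} Since $\phi_{\bm j}$ is an automorphism, $\ker(\psi\circ\phi_{\bm j})=\phi_{\bm j}^{-1}(I_{\bm n})$. I claim this equals $I_{\bm j}$. The key computation is
\[
\phi_{\bm j}\bigl(x_i^{n_i}\bigr)=(\alpha_{i,j_i}x_i)^{n_i}=\prod_{k=0}^{n_i-1}\sigma_i^k(\alpha_{i,j_i})\,x_i^{n_i}=\lambda_{i,j_i}x_i^{n_i},
\]
which uses that $\alpha_{i,j_i}\in\mathbb E_i$, so its norm is taken along $\langle\sigma_i\rangle$. Hence
\[
\phi_{\bm j}\bigl(x_i^{n_i}-\lambda_{i,j_i}\bigr)=\lambda_{i,j_i}\bigl(x_i^{n_i}-1\bigr).
\]
Because $\lambda_{i,j_i}\in\mathbb K^*$ is a unit, $\phi_{\bm j}$ maps the generators of $I_{\bm j}$ to unit multiples of the generators of $I_{\bm n}$. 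As $\phi_{\bm j}$ is a ring automorphism, it maps the two-sided ideal $I_{\bm j}$ onto $I_{\bm n}$. Therefore $\ker(\psi\circ\phi_{\bm j})=I_{\bm j}$, and
\[
\ker\bm\Phi=\bigcap_{\bm j\in J}I_{\bm j}=H_\Lambda
\]
by Lemma~\ref{lem:intersection-ideals}.

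\textbf{Surjectivity.} This is the step needing some care, since componentwise surjectivity does not by itself give surjectivity onto the product. I would use comaximality. By Remark~\ref{remark Polinomi centrali e comaximal} the ideals $I_{\bm j}$ are pairwise comaximal two-sided ideals, so the Chinese Remainder Theorem (as in \eqref{crt}) makes the natural map $\Lx\to\prod_{\bm j}\Lx/I_{\bm j}$ surjective. Composing with the isomorphisms $\Lx/I_{\bm j}\cong\mathbb L[G]$ induced by $\psi\circ\phi_{\bm j}$ yields $\bm\Phi$, which is therefore surjective.

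As an alternative check, one can count dimensions. $\bm\Phi$ is left $\mathbb L$-linear, because $\phi_{\bm j}$ fixes left coefficients. It induces an injective $\mathbb L$-linear map $\Lx/H_\Lambda\to(\mathbb L[G])^t$, and both sides have $\mathbb L$-dimension $Nt$, by Remark~\ref{rem:universal-grobner-basis} and since $\dim_{\mathbb L}\mathbb L[G]=N$. Injectivity then forces surjectivity.

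The isomorphism $(\mathbb L[G])^t\cong\Lx/H_\Lambda$ then follows from the first isomorphism theorem. The main point to verify carefully is the norm identity $(\alpha x_i)^{n_i}=\N_{\mathbb E_i/\mathbb K}(\alpha)x_i^{n_i}$, which gives $\phi_{\bm j}(I_{\bm j})=I_{\bm n}$.
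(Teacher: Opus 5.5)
Your proof is correct and follows the paper's argument. You identify each component kernel as $\phi_{\bm j}^{-1}(I_{\bm n})=I_{\bm j}$ via the norm identity $(\alpha_{i,j_i}x_i)^{n_i}=\lambda_{i,j_i}x_i^{n_i}$ (you push forward the generators of $I_{\bm j}$, where the paper pulls back those of $I_{\bm n}$), conclude $\ker\bm\Phi=H_\Lambda$ by Lemma~\ref{lem:intersection-ideals}, and obtain surjectivity from the dimension count $Nt$, exactly as the paper does; your primary CRT-based surjectivity argument is an equally valid alternative that the paper does not use here.
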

\begin{proof}
By Lemma~\ref{lem:phi-ring-homomorphism} and Theorem~\ref{Theorem 60 ACLN}, $\bm\Phi$ is a ring homomorphism, and $p\in\ker\bm\Phi$ if and only if $\phi_{\bm j}(p)\in I_{\bm n}$ for every $\bm j\in J$. Hence
\[
\ker\bm\Phi=\bigcap_{\bm j\in J}\phi_{\bm j}^{-1}(I_{\bm n}).
\]
Fix $\bm j\in J$. Since $\phi_{\bm j}$ is an automorphism, $\phi_{\bm j}^{-1}(I_{\bm n})$ is the two-sided ideal generated by $\phi_{\bm j}^{-1}(x_i^{n_i}-1)=(\alpha_{i,j_i}^{-1}x_i)^{n_i}-1$, $i=1,\ldots,m$. As $\alpha_{i,j_i}\in\mathbb E_i$ and $\sigma_i|_{\mathbb E_i}$ generates $\Gal(\mathbb E_i/\mathbb K)$,
\[
(\alpha_{i,j_i}^{-1}x_i)^{n_i}=\N_{\mathbb E_i/\mathbb K}(\alpha_{i,j_i}^{-1})\,x_i^{n_i}=\lambda_{i,j_i}^{-1}x_i^{n_i}.
\]
Therefore,  $\phi_{\bm j}^{-1}(x_i^{n_i}-1)=\lambda_{i,j_i}^{-1}(x_i^{n_i}-\lambda_{i,j_i})$ and $\phi_{\bm j}^{-1}(I_{\bm n})=I_{\bm j}$. By Lemma~\ref{lem:intersection-ideals}, $\ker\bm\Phi=H_\Lambda$. Consequently, $\bm\Phi$ induces an injective $\mathbb L$-linear map $\Lx/H_\Lambda\to(\mathbb L[G])^t$. By Remark~\ref{rem:universal-grobner-basis}, both spaces have $\mathbb L$-dimension $Nt$, so this map is bijective and $\bm\Phi$ is surjective.
\end{proof}
\begin{remark}
Since $H_{\Lambda}=\ker\bm\Phi$, the map $\bm\Phi$ induces an
isomorphism
\[
\overline{\bm\Phi}\colon
\frac{
\mathbb L[x_1,\ldots,x_m;\sigma_1,\ldots,\sigma_m]
}{
H_{\Lambda}
}
\longrightarrow
(\mathbb L[G])^t,
\qquad
\overline{\bm\Phi}(F+H_{\Lambda})=\bm\Phi(F).
\]
By a slight abuse of notation, we will denote the induced map
$\overline{\bm\Phi}$ again by $\bm\Phi$. Moreover, in view of Remark~\ref{rem:universal-grobner-basis}, whenever we write $\bm\Phi(F)$ for an element of the quotient, $F$ is understood to be its reduced representative.
\end{remark}
\begin{example}[Continuation of Example~\ref{ex:running-example}]\label{ex:running-example,representation}
Let $t_1=t_2=2$, so that $t=t_1t_2=4$, and choose
\[
(\alpha_{1,1},\alpha_{1,2})=(1,1+\sqrt{2})\in(\mathbb E_1^*)^2, \qquad (\alpha_{2,1},\alpha_{2,2})=(1,1+\sqrt{3})\in(\mathbb E_2^*)^2.
\] 
Their relative norms are 
\[ 
\lambda_{1,1}=\N_{\mathbb{E}_1/\mathbb K}(\alpha_{1,1})=1, \qquad \lambda_{1,2}=\N_{\mathbb{E}_1/\mathbb K}(\alpha_{1,2})=-1, 
\] 
and 
\[ 
\lambda_{2,1}=\N_{\mathbb{E}_2/\mathbb K}(\alpha_{2,1})=1, \qquad \lambda_{2,2}=\N_{\mathbb{E}_2/\mathbb K}(\alpha_{2,2})=-2.
\] 
Note that the distinct-norm condition is satisfied. Then the associated polynomials are 
\[ H_1(x_1) = (x_1^2-1)(x_1^2+1) = x_1^4-1 \qquad \text{and}\qquad H_2(x_2) = (x_2^2-1)(x_2^2+2) = x_2^4+x_2^2-2. 
\]
Hence, the quotient algebra associated with this example is 
\[ 
\mathcal R := \frac{ \mathbb L[x_1,x_2;\sigma_1,\sigma_2] }{ (x_1^4-1,\;x_2^4+x_2^2-2) }.
\]
Moreover, 
\[ 
\dim_{\mathbb L}\mathcal R = (n_1t_1)(n_2t_2) = 16,
\]
and a natural ordered basis of $\mathcal R$ is $\mathcal B_{\mathrm{mon}} = \left( x_1^{i_1}x_2^{i_2} \;\middle|\; 0\leq i_1,i_2<4 \right),$ where the monomials are ordered lexicographically.

By the Representation Theorem, 
\[ 
\mathcal R \cong (\mathbb L[G])^4.
\] 
Thus, this example yields a sum-rank ambient space consisting of four blocks, each corresponding to a copy of the skew group algebra $\mathbb L[G]$.
\end{example}

\section{Dickson and Moore matrices over Abelian Galois extensions}\label{sec: Dickson and Moore}
In this section we show that the sum-rank weight of $\bm\Phi(F)$ equals the rank over $\mathbb L$ of the matrix of right multiplication by $F$ (a Dickson matrix), and also the rank of a generalized Moore matrix. This is the main tool for the distance bounds of Section~\ref{sezione codici}.

\subsection{Dickson matrices associated with right multiplication}
\begin{definition}\label{def:dickson-matrix}
Let $F\in\Lx/H_\Lambda$. Right multiplication by $F$ is the $\mathbb L$-linear map
\[
\mu_F\colon\Lx/H_\Lambda\longrightarrow\Lx/H_\Lambda,\qquad g\longmapsto gF.
\]
Order $\Delta(\bm\nu)$ lexicographically, and let
\[
\mathcal B_{\mathrm{mon}}=\bigl(\bm x^{\bm i}\bigr)_{\bm i\in\Delta(\bm\nu)}
\]
be the corresponding ordered basis. The \textbf{Dickson matrix} of $F$ is
\[
D_{\mathcal B_{\mathrm{mon}}}(F):=\Mat_{\mathcal B_{\mathrm{mon}}}(\mu_F)\in\mathbb L^{Nt\times Nt};
\]
its column indexed by $\bm i$ is the coordinate vector of $\bm x^{\bm i}F$ with respect to $\mathcal B_{\mathrm{mon}}$.
\end{definition}
\begin{remark}\label{rk=dimrispcolonideal}
By definition, $\rk_{\mathbb L}D_{\mathcal B_{\mathrm{mon}}}(F)=\dim_{\mathbb L}\mathrm{Im}(\mu_F)$, which does not depend on the chosen basis. Let
\[
(H_\Lambda:F):=\{g\in\Lx\mid gF\in H_\Lambda\}
\]
be the left colon ideal; it depends only on the class of $F$. Then $\ker(\mu_F)=(H_\Lambda:F)/H_\Lambda$, and therefore
\[
\rk_{\mathbb L}D_{\mathcal B_{\mathrm{mon}}}(F)=Nt-\dim_{\mathbb L}\frac{(H_\Lambda:F)}{H_\Lambda}=\dim_{\mathbb L}\frac{\Lx}{(H_\Lambda:F)}.
\]
\end{remark}

\begin{example}[Continuation of Example~\ref{ex:running-example}]\label{ex:running-dickson}
Recall that
\[
\mathcal R
=
\frac{\mathbb L[x_1,x_2;\sigma_1,\sigma_2]}
{(x_1^4-1,\;x_2^4+x_2^2-2)}
\]
and consider the element
\[
F=1+x_1+x_2+x_1x_2=(1+x_1)(1+x_2).
\]
We choose the lexicographically ordered monomial basis
\[
\mathcal B_{\mathrm{mon}}
=(1,x_2,x_2^2,x_2^3, x_1,x_1x_2,x_1x_2^2,x_1x_2^3, x_1^2,x_1^2x_2,x_1^2x_2^2,x_1^2x_2^3, x_1^3,x_1^3x_2,x_1^3x_2^2,x_1^3x_2^3).
\]
More precisely,  $(i_1,i_2)\lex(k_1,k_2)$ if either 
\[ i_1<k_1\qquad\text{or}\quad i_1=k_1 \quad\text{and}\quad i_2<k_2.\] 
Let 
\[ 
\mu_F:\mathcal R\longrightarrow\mathcal R, \qquad g\longmapsto gF, 
\] 
be the right-multiplication map associated with $F$. 
The map $\mu_F$ can be rewritten as the composition 
\begin{equation}\label{composizione moltiplicazione esempio dickson}
\mu_F = \mu_{1+x_2}\circ\mu_{1+x_1}. 
\end{equation} 
Indeed,
\[ \bigl(\mu_{1+x_2}\circ\mu_{1+x_1}\bigr)(g) = \mu_{1+x_2}\bigl(g\cdot(1+x_1)\bigr) = g\cdot(1+x_1)(1+x_2) = gF. 
\]
We now describe the matrices of these two multiplication maps. For the first variable, consider the auxiliary quotient 
\[ 
\mathcal R_1 = \frac{\mathbb L[x_1;\sigma_1]}{(x_1^4-1)} 
\] 
with ordered $\mathbb L$-basis $\mathcal B_1=(1,x_1,x_1^2,x_1^3)$.
Let $C_1$ be the matrix, with respect to $\mathcal B_1$, of right multiplication by $x_1$. Since 
\[ 
1\longmapsto x_1,\qquad x_1\longmapsto x_1^2,\qquad x_1^2\longmapsto x_1^3,\qquad x_1^3\longmapsto x_1^4=1, 
\]
we obtain 
\[ C_1= \begin{pmatrix} 
0&0&0&1\\ 
1&0&0&0\\ 
0&1&0&0\\ 
0&0&1&0 
\end{pmatrix}. 
\]
Consequently, the matrix of right multiplication by $1+x_1$ on $\mathcal R_1$ is 
\[ A_1:=I_4+C_1 = \begin{pmatrix} 
1&0&0&1\\ 
1&1&0&0\\ 
0&1&1&0\\ 
0&0&1&1 
\end{pmatrix}. 
\]
Similarly, consider \[ \mathcal R_2 = \frac{\mathbb L[x_2;\sigma_2]} {(x_2^4+x_2^2-2)} \]
with ordered basis $\mathcal B_2=(1,x_2,x_2^2,x_2^3)$. Let $C_2$ be the matrix of right multiplication by $x_2$. We have
\[ 
1\longmapsto x_2,\qquad x_2\longmapsto x_2^2,\qquad x_2^2\longmapsto x_2^3,\qquad x_2^3\longmapsto x_2^4=2-x_2^2. \] 
Using the same calculation as above, we obtain the following matrix of right multiplication by $1+x_2$
\[ A_2:=I_4+C_2 = \begin{pmatrix} 
1&0&0&2\\ 
1&1&0&0\\ 
0&1&1&-1\\
0&0&1&1 \end{pmatrix}. 
\]
We now move to the complete quotient $\mathcal R$. At the level of ordered $\mathbb L$-bases, we identify 
$\mathcal B_{\mathrm{mon}}$
with the product basis $\mathcal B_1\otimes\mathcal B_2, $ through the correspondence 
\[ x_1^{i_1}x_2^{i_2} \longleftrightarrow x_1^{i_1}\otimes x_2^{i_2}. 
\] 
Hence, 
\[ \Mat_{\mathcal B_{\mathrm{mon}}} (\mu_{1+x_1}) = A_1\otimes I_4\qquad\text{and}\qquad \Mat_{\mathcal B_{\mathrm{mon}}} (\mu_{1+x_2}) = I_4\otimes A_2. \]
By \eqref{composizione moltiplicazione esempio dickson} it follows that the Dickson matrix of $F$ is obtained by the ordinary matrix product of the matrices of the two maps: 
\[D_{\mathcal B_{\mathrm{mon}}}(F)= \Mat_{\mathcal B_{\mathrm{mon}}} (\mu_{1+x_2}) \Mat_{\mathcal B_{\mathrm{mon}}} (\mu_{1+x_1})= (I_4\otimes A_2)(A_1\otimes I_4).
\]
Using the properties of the Kronecker product, we obtain 
\[
D_{\mathcal B_{\mathrm{mon}}}(F) =(I_4\otimes A_2)(A_1\otimes I_4)= (I_4A_1)\otimes(A_2I_4)= A_1\otimes A_2.
\] 
Thus, we may write the Dickson matrix in block form as 
\[ D_{\mathcal B_{\mathrm{mon}}}(F) = A_1\otimes A_2 = \begin{pmatrix}
A_2&0&0&A_2\\ 
A_2&A_2&0&0\\ 
0&A_2&A_2&0\\
0&0&A_2&A_2 
\end{pmatrix} \in\mathbb L^{16\times16}.
\] 
Each displayed entry is a $4\times4$ block. 
As a consistency check, the first column of $D_{\mathcal B_{\mathrm{mon}}}(F)$ is the coordinate vector of \[ 1\cdot F=1+x_1+x_2+x_1x_2 \] with respect to $\mathcal B_{\mathrm{mon}}$, namely $( 
1,1, 0, 0, 1, 1, 0, 0, 0, 0, 0, 0, 0, 0, 0, 0)^{\mathsf T}.$
Finally, a direct computation shows that $\rk_{\mathbb L}(A_1)=\rk_{\mathbb L}(A_2)=3.$
In conclusion,
\[
\rk_{\mathbb L}
\bigl(D_{\mathcal B_{\mathrm{mon}}}(F)\bigr)
=
\rk_{\mathbb L}(A_1)
\rk_{\mathbb L}(A_2)
=9.
\]
\end{example}

\subsection{Generalized Moore matrices and the sum-rank weight}
\begin{notation}\label{notazione moore}
Let $v\in\mathbb L^{Nt}$. Using the order on $J$ fixed in Notation \ref{notation sezione 3}, we write $v$ as the ordered tuple of blocks $v=\bigl(v^{(\bm j)}\bigr)_{\bm j\in J},$ where for each multi-index $\bm{j} \in J$, the corresponding block is a vector in $\mathbb{L}^N$ defined as:
\[
v^{(\bm{j})} = (v_{\bm{j},1}, v_{\bm{j},2}, \dots, v_{\bm{j},N}).
\]
The rows of the generalized Moore matrix will be indexed by
$\bm i\in\Delta(\bm\nu)$, while its columns will be indexed by $(\bm j,k)\in J\times\{1,\ldots,N\},$ using the lexicographic order on both index sets.
\end{notation}
We now define a Moore matrix adapted to the decomposition $G=\langle\sigma_1\rangle\times\cdots\times\langle\sigma_m\rangle$ and to the family $\bm\alpha=(\alpha_{i,j})$ fixed in Notation~\ref{notation sezione 3}.
\begin{definition}\label{def moore matrix}
Let $v \in \mathbb{L}^{Nt}$. The $(G, \bm{\alpha})$-Moore matrix of $v$ is the matrix $M_{G,\bm\alpha}(v)\in\mathbb{L}^{Nt\times Nt}$ with entries
\[
M_{G,\bm\alpha}(v)_{\bm i,(\bm j,k)} := \N^{\bm i}(\bm\alpha^{(\bm j)})\bm\sigma^{\bm i}(v_{\bm j,k})
\]
for $\bm{i} \in\Delta(\bm\nu)$ and $(\bm{j},k) \in J \times \{1, \dots, N\}$.
\end{definition}

\begin{lemma}\label{lem:block-moore-independence}
Let $w_1,\ldots,w_r\in\mathbb L$ be linearly independent over $\mathbb K$, let $\bm w=(w_1,\ldots,w_r,0,\ldots,0)\in\mathbb L^N$, and let $\bm\alpha\in\mathbb E_1^*\times\cdots\times\mathbb E_m^*$. Consider the matrix $B\in\mathbb L^{Nt\times N}$ with entries
\[
B_{\bm i,k} := \N^{\bm i}(\bm\alpha)\bm\sigma^{\bm i}(w_k)\quad \text{for}\;\bm i\in\Delta(\bm\nu)\; \text{and}\; 1 \le k \le N. 
\]
Then, the $r$ nonzero columns of $B$ are linearly independent over $\mathbb{L}$, i.e., $\rk(B) = r$.
\end{lemma}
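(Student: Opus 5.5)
The plan is to reduce to a full-rank statement about the classical Moore matrix of the Galois group $G$, which follows from Dedekind's independence of characters. Only the first $r$ columns of $B$ can be nonzero, so $\rk(B)\le r$. It therefore suffices to find an $r\times r$ (or $N\times r$) submatrix of the first $r$ columns having rank $r$.

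First, I restrict attention to the rows indexed by $\bm i\in\Delta(\bm n)\subseteq\Delta(\bm\nu)$. By the uniqueness of the representation $\bm\sigma^{\bm i}$, $\bm i\in\Delta(\bm n)$, these $N$ rows correspond bijectively to the elements $\tau\in G$. Moreover, each factor $\N^{\bm i}(\bm\alpha)$ is a product of Galois conjugates of the nonzero elements $\alpha_\ell$, hence lies in $\mathbb L^*$. Multiplying each of these rows by $\N^{\bm i}(\bm\alpha)^{-1}$ does not change the rank. So the submatrix of $B$ on rows $\Delta(\bm n)$ and columns $1,\ldots,r$ has the same rank as
\[
W:=\bigl(\tau(w_k)\bigr)_{\tau\in G,\;1\le k\le r}\in\mathbb L^{N\times r}.
\]

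Next, I extend $w_1,\ldots,w_r$ to a $\mathbb K$-basis $w_1,\ldots,w_N$ of $\mathbb L$ and consider the square matrix $\widetilde W:=(\tau(w_k))_{\tau\in G,\,1\le k\le N}$. I claim that $\widetilde W$ is invertible. Suppose a row vector $(a_\tau)_{\tau\in G}\in\mathbb L^N$ satisfies $\sum_\tau a_\tau\tau(w_k)=0$ for all $k$. By $\mathbb K$-linearity of each $\tau$, the map $\sum_\tau a_\tau\tau$ then vanishes on all of $\mathbb L$. Dedekind's lemma on the linear independence of distinct field automorphisms (equivalently, injectivity of the map $\iota$ on $\mathbb L[G]$ recalled earlier) forces $a_\tau=0$ for all $\tau$. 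Hence the columns of $\widetilde W$ are $\mathbb L$-linearly independent, and in particular so are its first $r$ columns, i.e.\ $\rk W=r$.

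Combining the two steps, $\rk(B)\ge\rk W=r$, and together with the trivial upper bound this gives $\rk(B)=r$. The only substantive point is the invertibility of $\widetilde W$, and that is exactly the injectivity of $\iota$ (Dedekind). I expect no real obstacle beyond checking that the rows indexed by $\Delta(\bm n)$ exhaust $G$ exactly once and that the norm scalars are nonzero.
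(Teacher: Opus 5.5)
Your proof is correct, but it argues on the opposite side of the matrix from the paper. Both proofs keep only the rows indexed by $\Delta(\bm n)$ and strip off the nonzero factors $\N^{\bm i}(\bm\alpha)$. The paper then works with columns. It assumes a nontrivial relation $\sum_k d_k\bm\sigma^{\bm i}(w_k)=0$ and rescales it so that $\Tr_{\mathbb L/\mathbb K}(d_1)\neq0$, using nondegeneracy of the trace pairing. Applying $\bm\sigma^{-\bm i}$ and summing over $\bm i\in\Delta(\bm n)$ gives $\sum_k\Tr_{\mathbb L/\mathbb K}(d_k)\,w_k=0$, which contradicts the $\mathbb K$-independence of the $w_k$. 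You instead complete $w_1,\ldots,w_r$ to a $\mathbb K$-basis. You then show that the square matrix $(\tau(w_k))_{\tau\in G,\,k}$ has trivial left kernel by Dedekind's lemma, i.e.\ the injectivity of $\iota$ already recalled in the preliminaries, and read off column independence from invertibility. Your route reuses a stated fact and gives the slightly stronger conclusion that the full $N\times N$ Moore matrix of any $\mathbb K$-basis is invertible. The paper's route handles the $r$ given columns directly, without completing to a basis or passing through a square matrix. In exchange it invokes nondegeneracy of the trace form, which is itself a consequence of Dedekind's lemma, so the two arguments rest on the same underlying fact.
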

\begin{proof}
By contradiction, assume that the $r$ nonzero columns of $B$ are not linearly independent over $\mathbb{L}$.
Then
\begin{equation}\label{lin dip block}
\N^{\bm{i}}(\bm\alpha)\sum_{k=1}^rd_k\bm{\sigma}^{\bm i}(w_k)=0\qquad \forall \bm{i}\in\Delta(\bm\nu),\;\text{for some $d_1,\ldots,d_r\in\mathbb L$, not all zero}.
\end{equation}
Without loss of generality, we can assume $d_1\neq0$ and, up to a scalar multiple, $\Tr_{\mathbb{L}/\mathbb{K}}(d_1)\neq0$, due to the nondegeneracy of the trace pairing for finite separable
extensions.

Since $\N^{\bm{i}}(\bm{\alpha})$ is nonzero for every $\bm{i}\in\Delta(\bm\nu)$, we can rewrite \eqref{lin dip block} as follows
\begin{equation}\label{relation over I}
\sum_{k=1}^r\bm\sigma^{\bm{-i}}(d_k)\,w_k=0 \qquad \forall \bm{i}\in\Delta(\bm\nu)   ,
\end{equation}
after applying $\bm{\sigma}^{\bm{-i}}$. If we restrict the relations in \eqref{relation over I} to $\Delta(\bm n) = \Delta(n_1)\times\dots\times\Delta(n_m)\subseteq\Delta(\bm\nu)$,  and we sum them over all $\bm i\in\Delta(\bm n)$, we obtain
\begin{equation*}
\sum_{\bm{i}\in\Delta(\bm n)}\sum_{k=1}^r\bm{\sigma}^{\bm{-i}}(d_k)\,w_k=0   ,
\end{equation*}
which is equivalent to
\begin{equation}\label{recognize trace}
\sum_{k=1}^r\sum_{\bm{i}\in\Delta(\bm n)}\bm\sigma^{\bm{-i}}(d_k)\,w_k=0.  
\end{equation}
By the definition of the trace, this reads
\begin{equation*}
\sum_{k=1}^r\Tr_{\mathbb{L}/\mathbb{K}}(d_k)\,w_k=0  .
\end{equation*}
By hypothesis, the $w_k$ are all linearly independent over $\mathbb{K}$. This means that
\[
\Tr_{\mathbb{L}/\mathbb{K}}(d_k)=0\qquad\forall k=1,\dots,r,
\]
but it leads to a contradiction since we assumed $\Tr_{\mathbb{L}/\mathbb{K}}(d_1)\neq0$.
\end{proof}

\begin{lemma}\label{lem:moore-periodicity}
Let $\bm j=(j_1,\ldots,j_m)\in J$ and let $\bm i=(i_1,\dots,i_m)\in\Delta(\bm\nu)$. Write uniquely
\[
i_\ell=a_\ell+n_\ell q_\ell, \qquad 0\leq a_\ell<n_\ell,\qquad 0\leq q_\ell<t_\ell,
\]
for $\ell=1,\ldots,m$, and set $\bm a=(a_1,\ldots,a_m)$. Then, for every $z\in\mathbb L$,
\[
\N^{\bm i}(\bm\alpha^{(\bm j)}) \bm\sigma^{\bm i}(z)=\left(\prod_{\ell=1}^m\lambda_{\ell,j_\ell}^{q_\ell}\right)\N^{\bm a}(\bm\alpha^{(\bm j)}) \bm\sigma^{\bm a}(z).
\]
\end{lemma}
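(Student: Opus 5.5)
The identity factors over the coordinates. Both the truncated norm and $\bm\sigma^{\bm i}$ split into independent one-variable pieces, so the plan is to reduce to a single coordinate $\ell$ and then reassemble.

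First, $\bm\sigma^{\bm i}=\bm\sigma^{\bm a}\prod_\ell\sigma_\ell^{n_\ell q_\ell}$ and $\sigma_\ell^{n_\ell}=\id_{\mathbb L}$. Hence $\bm\sigma^{\bm i}(z)=\bm\sigma^{\bm a}(z)$ for every $z\in\mathbb L$, and the Galois factor needs no further work.

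Second, by definition
\[
\N^{\bm i}(\bm\alpha^{(\bm j)})=\prod_{\ell=1}^m\prod_{k=0}^{i_\ell-1}\sigma_\ell^k(\alpha_{\ell,j_\ell}),
\]
so it suffices to treat one factor. For fixed $\ell$, split the range $0\le k<a_\ell+n_\ell q_\ell$ into $q_\ell$ consecutive windows of length $n_\ell$ followed by a tail of length $a_\ell$. On the window $[pn_\ell,(p+1)n_\ell)$, write $k=pn_\ell+k'$ and use $\sigma_\ell^{pn_\ell+k'}=\sigma_\ell^{k'}$. The window product then becomes $\prod_{k'=0}^{n_\ell-1}\sigma_\ell^{k'}(\alpha_{\ell,j_\ell})$.

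Since $\alpha_{\ell,j_\ell}\in\mathbb E_\ell$ and $\sigma_\ell|_{\mathbb E_\ell}$ generates $\Gal(\mathbb E_\ell/\mathbb K)$ of order $n_\ell$, the powers $\sigma_\ell^{k'}|_{\mathbb E_\ell}$ for $0\le k'<n_\ell$ run over the whole group exactly once. Therefore each window product equals $\N_{\mathbb E_\ell/\mathbb K}(\alpha_{\ell,j_\ell})=\lambda_{\ell,j_\ell}$. The tail, reindexed in the same way with $k=q_\ell n_\ell+k'$, gives $\prod_{k'=0}^{a_\ell-1}\sigma_\ell^{k'}(\alpha_{\ell,j_\ell})$. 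The $\ell$-th factor is therefore $\lambda_{\ell,j_\ell}^{q_\ell}\prod_{k'=0}^{a_\ell-1}\sigma_\ell^{k'}(\alpha_{\ell,j_\ell})$.

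Multiplying over $\ell$, and using that each $\lambda_{\ell,j_\ell}\in\mathbb K$ commutes with everything, gives $\N^{\bm i}(\bm\alpha^{(\bm j)})=\bigl(\prod_\ell\lambda_{\ell,j_\ell}^{q_\ell}\bigr)\N^{\bm a}(\bm\alpha^{(\bm j)})$. Combining this with the first step yields the claimed identity.

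The only point needing care is identifying the window product with the relative norm $\N_{\mathbb E_\ell/\mathbb K}$ rather than $\N_{\mathbb L/\mathbb K}$. This is exactly where $\alpha_{\ell,j_\ell}\in\mathbb E_\ell$ and the order $n_\ell$ of $\sigma_\ell|_{\mathbb E_\ell}$ are used, as in the proof of Theorem~\ref{representation theorem}.
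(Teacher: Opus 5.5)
Your proposal is correct and follows the same route as the paper. You also use $\sigma_\ell^{n_\ell}=\id_{\mathbb L}$ to get $\bm\sigma^{\bm i}=\bm\sigma^{\bm a}$, and you peel off $q_\ell$ full periods of length $n_\ell$, each contributing $\N_{\mathbb E_\ell/\mathbb K}(\alpha_{\ell,j_\ell})=\lambda_{\ell,j_\ell}$, before multiplying over $\ell$. Your window-by-window reindexing simply spells out the step the paper states in one line.
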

\begin{proof}
For every $\ell$, the automorphism $\sigma_\ell$ has order $n_\ell$ and
\[\prod_{h=0}^{n_\ell-1}\sigma_\ell^h(\alpha_{\ell,j_\ell})=\N_{\mathbb E_\ell/\mathbb K}(\alpha_{\ell,j_\ell})=\lambda_{\ell,j_\ell}.
\]
Hence, 
\[
\prod_{h=0}^{i_\ell-1}\sigma_\ell^h(\alpha_{\ell,j_\ell})=\lambda_{\ell,j_\ell}^{q_\ell}\prod_{h=0}^{a_\ell-1}\sigma_\ell^h(\alpha_{\ell,j_\ell}).
\]
Moreover, $\sigma_\ell^{i_\ell}=\sigma_\ell^{a_\ell},$ because $\sigma_\ell^{n_\ell}=\id_\mathbb L$. Multiplying these identities over
$\ell=1,\ldots,m$ gives the claim.
\end{proof}
\begin{theorem}\label{Moore=srk}
Let $v \in \mathbb{L}^{Nt}$. Then
\[
\srk(v) = \rk(M_{G,\bm{\alpha}}(v)).
\]
\end{theorem}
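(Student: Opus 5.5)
Within each block $\bm j\in J$, we first make a $\mathbb K$-linear change of the $N$ columns so that the block reduces to a vector of the shape handled by Lemma~\ref{lem:block-moore-independence}. Then we use Lemma~\ref{lem:moore-periodicity} to split the whole matrix into a Vandermonde-type factor in the norms and a block-diagonal factor.

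\textbf{Step 1 (column reduction per block).} Put $r_{\bm j}:=\rk_{\mathbb K}(v^{(\bm j)})$. There is an invertible $T_{\bm j}\in\GL_N(\mathbb K)$ with $v^{(\bm j)}T_{\bm j}=(w_{\bm j,1},\dots,w_{\bm j,r_{\bm j}},0,\dots,0)$, where the $w_{\bm j,k}$ are $\mathbb K$-linearly independent. Every $\bm\sigma^{\bm i}$ is $\mathbb K$-linear and the factor $\N^{\bm i}(\bm\alpha^{(\bm j)})$ does not depend on $k$. Hence right-multiplying $M_{G,\bm\alpha}(v)$ by $\diag(T_{\bm j})_{\bm j\in J}$ replaces each block column by the corresponding matrix $B_{\bm j}$ built from $w_{\bm j}$. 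The rank over $\mathbb L$ is unchanged, so we may assume each $v^{(\bm j)}$ already has this shape and discard the zero columns. It then remains to show that the $\sum_{\bm j}r_{\bm j}=\srk(v)$ remaining columns are $\mathbb L$-linearly independent. This gives $\rk M=\srk(v)$, because the zero columns contribute nothing.

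\textbf{Step 2 (factorization).} By Lemma~\ref{lem:moore-periodicity}, the row $\bm i=\bm a+\bm n\bm q$ of the column $(\bm j,k)$ equals $\lambda^{(\bm j)\bm q}\,\N^{\bm a}(\bm\alpha^{(\bm j)})\bm\sigma^{\bm a}(w_{\bm j,k})$, where $\lambda^{(\bm j)\bm q}:=\prod_\ell\lambda_{\ell,j_\ell}^{q_\ell}$. Reorder the rows by the pair $(\bm q,\bm a)\in\Delta(t_1,\dots,t_m)\times\Delta(\bm n)$. The matrix restricted to the nonzero columns is then $(V\otimes I_N)\cdot\diag(C_{\bm j})_{\bm j\in J}$. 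Here $V=(\lambda^{(\bm j)\bm q})_{\bm q,\bm j}$ is $t\times t$ over $\mathbb K$, and $C_{\bm j}\in\mathbb L^{N\times r_{\bm j}}$ is the matrix with entries $\N^{\bm a}(\bm\alpha^{(\bm j)})\bm\sigma^{\bm a}(w_{\bm j,k})$ for $\bm a\in\Delta(\bm n)$.

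The matrix $V$ is the Kronecker product of the univariate Vandermonde matrices $(\lambda_{\ell,j}^{q})_{q,j}$, and each of these is invertible because the $\lambda_{\ell,j}$ are pairwise distinct. So $V\otimes I_N$ is invertible. The matrix $\diag(C_{\bm j})$ is block diagonal, so its rank is $\sum_{\bm j}\rk C_{\bm j}$. Each $C_{\bm j}$ is the restriction of $B$ in Lemma~\ref{lem:block-moore-independence} to the rows $\Delta(\bm n)$. The proof of that lemma uses only the rows in $\Delta(\bm n)$ (it sums over them to obtain the trace), so it also gives $\rk C_{\bm j}=r_{\bm j}$.

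\textbf{Main obstacle.} The delicate point is the bookkeeping in Step 2: we must check that the row and column reordering really produces the Kronecker-times-block-diagonal form, and that the Vandermonde matrix in several variables is invertible. For the latter I would write $V=\bigotimes_{\ell}V_\ell$, using that $J$ and $\Delta(t_1,\dots,t_m)$ are both ordered lexicographically, and then apply $\det(A\otimes B)\neq0$ whenever both factors are invertible.
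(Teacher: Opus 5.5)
Your proof is correct, but after the shared first step it follows a different route from the paper. Both arguments begin with the per-block column reduction by $\diag(T_{\bm j})$ and end with Lemma~\ref{lem:block-moore-independence}.

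\textbf{The paper's route.} It argues by contradiction. It takes a dependent set $S$ of nonzero columns of minimum size and fixes a coordinate $\ell$. Using Lemma~\ref{lem:moore-periodicity} and the invertibility of the \emph{univariate} Vandermonde matrix $(\lambda_{\ell,h}^{q_\ell})$, it splits the dependence into dependences supported on the classes $S_h=\{(\bm j,k)\in S\mid j_\ell=h\}$. By minimality only one class is nonempty. Iterating over $\ell$ places all of $S$ in a single block, which contradicts the lemma.

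\textbf{Your route.} You instead write down, after a row permutation, the global factorization $(V\otimes I_N)\diag(C_{\bm j})$ with $V=\bigotimes_\ell V_\ell$. The bookkeeping you flagged works out: the $((\bm q,\bm a),(\bm j,k))$ entry of the product is $V_{\bm q,\bm j}(C_{\bm j})_{\bm a,k}=\lambda^{(\bm j)\bm q}\N^{\bm a}(\bm\alpha^{(\bm j)})\bm\sigma^{\bm a}(w_{\bm j,k})$. Even if the orderings of $\bm q$ and $\bm j$ did not match, $V$ would only differ from the Kronecker product by permutations, so invertibility is unaffected. Your observation that the proof of Lemma~\ref{lem:block-moore-independence} uses only the rows in $\Delta(\bm n)$ is also accurate.

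\textbf{Comparison.} Your approach gives a direct argument with no contradiction, and it yields a stronger structural statement. The factorization holds for arbitrary $v$, without the column reduction, so $\rk M_{G,\bm\alpha}(v)=\sum_{\bm j}\rk C_{\bm j}(v^{(\bm j)})$. In other words, the generalized Moore matrix decouples into $t$ ordinary $N\times N$ Moore matrices of $G$ whose rows are rescaled by nonzero truncated norms. The theorem then reduces to the classical single-block fact, for which Step 1 is needed only inside each block. The paper's approach avoids the Kronecker bookkeeping and only inverts univariate Vandermonde matrices, at the cost of repeating the minimality argument for each coordinate.
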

\begin{proof}
Let $s=\srk(v)$. We prove that $\rk_{\mathbb L}M_{G,\bm\alpha}(v)=s$.

Writing $v = \left(v^{(\bm{j})}\right)_{\bm{j} \in J}$, we have $s = \sum_{\bm{j} \in J} r_{\bm{j}}$ with $r_{\bm{j}} = \rk(v^{(\bm{j})}).$
Hence, for every $\bm{j} \in J$, there exists $A^{(\bm{j})} \in \GL_N(\mathbb{K})$  such that, if $A = \diag\left(A^{(\bm{j})}\right)_{\bm{j} \in J}$, then  
\[
    v \cdot A = \left( v^{(\bm{j})} \cdot A^{(\bm{j})} \right)_{\bm{j} \in J} = \left( u^{(\bm{j})} \right)_{\bm{j} \in J} = u,
\]
with $u^{(\bm{j})} = (u_{\bm{j},1}, \dots, u_{\bm{j},r_{\bm{j}}}, 0, \dots, 0)$ and $\rk(u^{(\bm{j})}) = r_{\bm{j}}$. 
Then, since $A$ is invertible, we obtain
\[
    \rk_{\mathbb L}M_{G,\bm\alpha}(v)=\rk_{\mathbb L}\bigl(M_{G,\bm\alpha}(v)\cdot A\bigr)
\]
and since $A$ has entries in $\mathbb{K}$, it is fixed by any $\sigma\in\Gal(\mathbb{L}/\mathbb{K})$ and
\[
    M_{G,\bm{\alpha}}(v) \cdot A = M_{G,\bm{\alpha}}(v \cdot A)= M_{G,\bm{\alpha}}(u).
\]
Therefore,
\[
    \rk_{\mathbb L}M_{G,\bm\alpha}(u)\le s
\]
because $u$ has $\sum_{\bm{j} \in J} r_{\bm{j}} = s$ nonzero columns. It remains to prove that these $s$ columns are linearly independent.
Let
\[
C:=\{(\bm j,k)\mid \bm j\in J,\ 1\leq k\leq r_{\bm{j}}\}
\]
be their index set. Suppose, by contradiction, that they are linearly
dependent, and choose a subset $S\subseteq C$ of minimum cardinality
among the dependent subsets. Thus, there are coefficients
$c_{\bm j,k}\in \mathbb L^*$, $(\bm j,k)\in S$, such that
\begin{equation}\label{eq:minimal-moore-dependence}
\sum_{(\bm j,k)\in S}c_{\bm j,k}\N^{\bm i}(\bm\alpha^{(\bm j)})\bm\sigma^{\bm i}(u_{\bm j,k})=0
    \qquad\text{for every }\bm i\in\Delta(\bm\nu).
\end{equation}
By minimality, every nonempty proper subset of $S$ indexes a linearly
independent family of full columns.
Fix $\ell\in\{1,\ldots,m\}$. For $h\in\{1,\ldots,t_\ell\}$ set $S_h:=\{(\bm j,k)\in S\mid j_\ell=h\}.$ Observe that these sets form a disjoint partition of $S$, namely
\[
S=\bigsqcup_{h=1}^{t_\ell}S_h.
\]
Now fix $\bm a=(a_1,\ldots,a_m)\in\Delta(\bm n)$ and fix $q_r\in\Delta(t_r)$ for every $r\neq\ell$. Let $q_\ell$ vary in $\Delta(t_\ell)$ and consider the row
\[
i_r=a_r+n_rq_r\qquad(r=1,\ldots,m).
\]
We may therefore separate the factor corresponding to the $\ell$-th coordinate and, using Lemma~\ref{lem:moore-periodicity}, we rewrite the sum in \eqref{eq:minimal-moore-dependence} as follows:
\[
\begin{aligned}
0
&=\sum_{h=1}^{t_\ell}\sum_{(\bm j,k)\in S_h}c_{\bm j,k}\lambda_{\ell,j_\ell}^{q_\ell}\left(\prod_{\substack{r=1\\r\neq\ell}}^{m}\lambda_{r,j_r}^{q_r}\right)\N^{\bm a}(\bm\alpha^{(\bm j)})\bm\sigma^{\bm a}(u_{\bm j,k})
\\
&=
\sum_{h=1}^{t_\ell}\lambda_{\ell,h}^{q_\ell}\left[\sum_{(\bm j,k)\in S_h}c_{\bm j,k}\left(\prod_{\substack{r=1\\ r\neq\ell}}^{m}\lambda_{r,j_r}^{q_r}\right)\N^{\bm a}(\bm\alpha^{(\bm j)})\bm\sigma^{\bm a}(u_{\bm j,k})
\right].
\end{aligned}
\]
This is a homogeneous linear system of $t_\ell$ equations
\[
\sum_{h=1}^{t_\ell}\lambda_{\ell,h}^{q_\ell} A_h=0,\qquad q_\ell=0,\ldots,t_\ell-1,
\]
where
\[
A_h:=\sum_{(\bm j,k)\in S_h}c_{\bm j,k}\left(\prod_{\substack{r=1\\ r\neq\ell}}^m\lambda_{r,j_r}^{q_r}\right)\N^{\bm a}(\bm\alpha^{(\bm j)})\bm\sigma^{\bm a}(u_{\bm j,k}).
\]
Its coefficient matrix is the Vandermonde matrix
\[
V_\ell:=\left(\lambda_{\ell,h}^{q_\ell}\right)_{\substack{0\leq q_\ell<t_\ell\\
1\leq h\leq t_\ell
}}.
\]
Since the elements $\lambda_{\ell,1},\ldots,\lambda_{\ell,t_\ell}$
are pairwise distinct, the matrix above is invertible over $\mathbb K$. It follows that
\[
A_h=0\qquad\text{for every }h.
\]
Since $\bm a$ and the $q_r$, $r\neq\ell$, were arbitrary, another
application of Lemma~\ref{lem:moore-periodicity} shows that, for every
$h$,
\[
\sum_{(\bm j,k)\in S_h}c_{\bm j,k}\N^{\bm i}(\bm\alpha^{(\bm j)})\bm\sigma^{\bm i}(u_{\bm j,k})=0 \qquad\text{for every }\bm i\in\Delta(\bm\nu).
\]
Thus, every nonempty $S_h$ itself indexes a dependence among full
columns. By the minimality of $S$, exactly one $S_h$ is nonempty.
Therefore, all elements $(\bm j,k)\in S$ have the same $\ell$-th block
coordinate $j_\ell$.
Repeating the argument for $\ell=1,\ldots,m$, we conclude that all
columns indexed by $S$ belong to one and the same block
$\bm j\in J$. This contradicts
Lemma \ref{lem:block-moore-independence}, because the nonzero columns
of each fixed block are linearly independent over $\mathbb L$.
Hence, the $s$ nonzero columns of $M_{G,\bm\alpha}(u)$ are linearly
independent, and therefore
\[
\rk_\mathbb L M_{G,\bm\alpha}(v)=\rk_\mathbb L M_{G,\bm\alpha}(u)=s=\srk(v).
\]
\end{proof}

\subsection{The Dickson--Moore rank correspondence}
We prove that the Dickson matrix of $F$ and the generalized Moore matrix of the vector obtained by evaluating $\bm\Phi(F)$ on a $\mathbb K$-basis of $\mathbb L$ have the same rank over $\mathbb L$.

Let $\mathcal B_{\ev}=(\beta_1,\ldots,\beta_N)$ be an ordered $\mathbb K$-basis of $\mathbb L$. We define
\[
\ev_{\mathcal B_{\ev}^{(t)}}\colon(\mathbb L[G])^t\longrightarrow\mathbb L^{Nt},
\qquad
(P_{\bm j})_{\bm j\in J}\longmapsto\bigl(P_{\bm j}(\beta_1),\ldots,P_{\bm j}(\beta_N)\bigr)_{\bm j\in J},
\]
where $P(\beta)$ denotes the natural action of $P\in\mathbb L[G]$ on $\mathbb L$. This map is an $\mathbb L$-linear bijection, and $\srk\bigl(\ev_{\mathcal B_{\ev}^{(t)}}(P)\bigr)=\srk(P)$ for every $P\in(\mathbb L[G])^t$.
\begin{theorem}\label{rkMoore=rkDickson}
Let $F\in\Lx/H_\Lambda$ and set $F_{\bm\alpha}:=\bm\Phi(F)\in(\mathbb L[G])^t$.
Let $v = \ev_{\mathcal B_{\ev}^{(t)}}(F_{\bm{\alpha}})\in\mathbb{L}^{Nt}$. Then
\[
\rk_{\,\mathbb{L}}(D_{\mathcal B_{\mathrm{mon}}}(F)) = \rk_{\,\mathbb{L}}(M_{G,\bm{\alpha}}(v)).
\]
\end{theorem}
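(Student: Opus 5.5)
The plan is to show that $M_{G,\bm\alpha}(v)$ and $D_{\mathcal B_{\mathrm{mon}}}(F)$ differ, up to transposition, by an invertible change of coordinates. Specifically, the row of $M_{G,\bm\alpha}(v)$ indexed by $\bm i\in\Delta(\bm\nu)$ should be the image of the column of $D_{\mathcal B_{\mathrm{mon}}}(F)$ indexed by $\bm i$ under a fixed $\mathbb L$-linear bijection.

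Consider the composite map
\[
T:=\ev_{\mathcal B_{\ev}^{(t)}}\circ\bm\Phi\colon \Lx/H_\Lambda\longrightarrow\mathbb L^{Nt}.
\]
First I check that $T$ is $\mathbb L$-linear for the left $\mathbb L$-structure on the quotient. Each $\phi_{\bm j}$ and $\psi$ leaves left coefficients untouched, since they only substitute the variables, so $\bm\Phi(cG)=c\,\bm\Phi(G)$ for $c\in\mathbb L$. The map $\ev_{\mathcal B_{\ev}^{(t)}}$ is $\mathbb L$-linear. By the Representation Theorem (Theorem~\ref{representation theorem}) and the bijectivity of $\ev_{\mathcal B_{\ev}^{(t)}}$, the map $T$ is an $\mathbb L$-linear isomorphism. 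Let $E\in\mathbb L^{Nt\times Nt}$ be its matrix with respect to $\mathcal B_{\mathrm{mon}}$ and the standard basis of $\mathbb L^{Nt}$. Then $E$ is invertible.

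The key computation is $T(\bm x^{\bm i}F)$. Since $\bm\Phi$ is a ring homomorphism,
\[
\bm\Phi(\bm x^{\bm i}F)_{\bm j}=\psi(\phi_{\bm j}(\bm x^{\bm i}))\,\psi(\phi_{\bm j}(F)).
\]
Next I expand $\phi_{\bm j}(\bm x^{\bm i})=(\alpha_{1,j_1}x_1)^{i_1}\cdots(\alpha_{m,j_m}x_m)^{i_m}$ using $x_\ell a=\sigma_\ell(a)x_\ell$. Because $\sigma_r$ fixes $\mathbb E_\ell$ for $r\neq\ell$, this gives $\N^{\bm i}(\bm\alpha^{(\bm j)})\bm x^{\bm i}$, so $\psi$ sends it to $\N^{\bm i}(\bm\alpha^{(\bm j)})\bm\sigma^{\bm i}$. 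Evaluating at $\beta_k$ then yields
\[
T(\bm x^{\bm i}F)_{(\bm j,k)}=\N^{\bm i}(\bm\alpha^{(\bm j)})\,\bm\sigma^{\bm i}\bigl(F_{\bm\alpha,\bm j}(\beta_k)\bigr)=\N^{\bm i}(\bm\alpha^{(\bm j)})\,\bm\sigma^{\bm i}(v_{\bm j,k}),
\]
which is exactly $M_{G,\bm\alpha}(v)_{\bm i,(\bm j,k)}$.

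The class of $\bm x^{\bm i}F$ has coordinate vector equal to the column of $D_{\mathcal B_{\mathrm{mon}}}(F)$ indexed by $\bm i$. Hence the previous step says that column $\bm i$ of $E\,D_{\mathcal B_{\mathrm{mon}}}(F)$ equals row $\bm i$ of $M_{G,\bm\alpha}(v)$. In matrix form,
\[
M_{G,\bm\alpha}(v)^{\mathsf T}=E\,D_{\mathcal B_{\mathrm{mon}}}(F).
\]
Since $E$ is invertible, taking ranks gives the statement.

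The main obstacle is the bookkeeping in the key computation: the twisting must produce precisely the truncated norm $\N^{\bm i}(\bm\alpha^{(\bm j)})$, in the right order and with no stray Galois action on the coefficients. One also has to make sure that $\bm\Phi(\bm x^{\bm i}F)$ may be computed from any representative, so that no reduction modulo $H_\Lambda$ is needed. This holds because $\ker\bm\Phi=H_\Lambda$, and $\bm x^{\bm i}F$ for $\bm i\in\Delta(\bm\nu)$ represents the class that defines column $\bm i$ of $D_{\mathcal B_{\mathrm{mon}}}(F)$.
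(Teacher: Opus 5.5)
Your proposal is correct and is essentially the paper's argument. Both rest on the identity that $\ev_{\mathcal B_{\ev}^{(t)}}\bigl(\bm\Phi(\bm x^{\bm i}F)\bigr)$ is the row of $M_{G,\bm\alpha}(v)$ indexed by $\bm i$, together with the injectivity of $\ev_{\mathcal B_{\ev}^{(t)}}\circ\bm\Phi$ on the quotient. The paper phrases this as an identification of $\ker\mu_F$ with the left kernel of $M_{G,\bm\alpha}(v)$, while you write it as the factorization $M_{G,\bm\alpha}(v)^{\mathsf T}=E\,D_{\mathcal B_{\mathrm{mon}}}(F)$, which is the same statement in matrix form. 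Your explicit use of $\sigma_r|_{\mathbb E_\ell}=\id$ for $r\neq\ell$ to obtain the truncated norm is a detail the paper leaves implicit.
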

\begin{proof}
Every element $g\in\Lx/H_\Lambda$ can be uniquely written as
\[
g=\sum_{\bm i\in\Delta(\bm\nu)} d_{\bm i} \bm{x}^{\bm{i}},
\qquad d_{\bm i}\in \mathbb{L},
\]
where $\bm{i}=(i_1,\dots,i_m)$ and $\bm{x}^{\bm{i}}=x_1^{i_1}\cdots x_m^{i_m}.$

We want to characterize when $g$ belongs to the kernel of $\mu_F$. By definition,
\[
g\in \ker(\mu_F)
\iff
\mu_F(g)=0
\iff
g\cdot F=0 \quad \text{in }\;\mathbb{L}[x_1,\dots,x_m;\sigma_1,\dots,\sigma_m]/H_\Lambda.
\]
This is equivalent to $g\cdot F\in H_\Lambda.$

By Theorem \ref{representation theorem}, $H_\Lambda=\ker\bm{\Phi}$, hence
\[
g\in \ker(\mu_F)
\iff
g\cdot F\in \ker\bm{\Phi}.
\]
Therefore,
\[
g\in \ker(\mu_F)
\iff
\pi_{\bm j}\bigl(\phi(g\cdot F)\bigr)\in I_{\bm n}
\qquad \text{for all } \bm{j}\in J.
\]
By substituting the expansion of $g$, we obtain
\[
\sum_{\bm i\in\Delta(\bm\nu)} d_{\bm i}\bm{x}^{\bm{i}} \in \ker(\mu_F)
\iff
\pi_{\bm{j}}\!\left(
\phi\!\left(\left(\sum_{\bm{i}\in\Delta(\bm\nu)} d_{\bm{i}}\bm{x}^{\bm{i}}\right)F\right)
\right)\in I_{\bm n}
\qquad \text{for all } \bm{j}\in J.
\]
Since $\phi$ and $\pi_{\bm{j}}$ are linear, this is equivalent to
\[
\sum_{\bm{i}\in\Delta(\bm\nu)} d_{\bm{i}}\,\pi_{\bm{j}}\bigl(\phi(\bm{x}^{\bm{i}}F)\bigr)\in I_{\bm n}
\qquad \text{for all } \bm{j}\in J.
\]
Using that $\phi$ is a ring homomorphism, we may rewrite
\[
\pi_{\bm{j}}\bigl(\phi(\bm{x}^{\bm{i}}F)\bigr)
=
\pi_{\bm{j}}\bigl(\phi(\bm{x}^{\bm{i}})\phi(F)\bigr)
=
\pi_{\bm{j}}\bigl(\phi(\bm{x}^{\bm{i}})\bigr)\,\pi_{\bm{j}}\bigl(\phi(F)\bigr),
\]
so that
\[
\sum_{\bm{i}\in\Delta(\bm\nu)} d_{\bm{i}}\,\pi_{\bm{j}}\bigl(\phi(\bm{x}^{\bm{i}})\bigr)\,\pi_{\bm{j}}\bigl(\phi(F)\bigr)\in I_{\bm n}
\qquad \text{for all } \bm{j}\in J.
\]
Now let $\bm{j}=(j_1,\dots,j_m)\in J$. By definition of $\phi$, we have
\[
\pi_{\bm{j}}\bigl(\phi(\bm{x}^{\bm{i}})\bigr)
=
\pi_{\bm{j}}\bigl(\phi(x_1^{i_1}\cdots x_m^{i_m})\bigr)
=
({\alpha}_{1,j_1}x_1)^{i_1}\cdots({\alpha}_{m,j_m}x_m)^{i_m}.
\]
Using the multiplication rule in the skew polynomial ring, for each $\ell\in\{1,\dots,m\}$ we obtain
\[
({\alpha}_{\ell, j_\ell}x_\ell)^{i_\ell}
=
\left(\prod_{k=0}^{i_\ell-1}\sigma_\ell^k\!\left({\alpha}_{\ell, j_\ell}\right)\right)x_\ell^{i_\ell}.
\]
Therefore,
\[
\pi_{\bm{j}}\bigl(\phi(\bm{x}^{\bm i})\bigr)
=
\prod_{\ell=1}^m
\left(\prod_{k=0}^{i_\ell-1}\sigma_\ell^k\!\left({\alpha}_{\ell, j_\ell}\right)\right)x_\ell^{i_\ell}.
\]
If we denote the truncated norm of $\alpha_{\ell,j_\ell}$ by
\[
\N_{\mathbb{E}_\ell/\mathbb K}^{\,i_\ell}\!\left({\alpha}_{\ell, j_\ell}\right)
:=
\prod_{k=0}^{i_\ell-1}\sigma_\ell^k\!\left({\alpha}_{\ell, j_\ell}\right),
\]
then the previous identity becomes
\[
\pi_{\bm{j}}\bigl(\phi(\bm{x}^{\bm{i}})\bigr)
=
\prod_{\ell=1}^m
\N_{\mathbb{E}_\ell/\mathbb K}^{\,i_\ell}\!\left({\alpha}_{\ell, j_\ell}\right)x_\ell^{i_\ell}.
\]
Consequently,
\[
\sum_{\bm{i}\in\Delta(\bm\nu)}
d_{\bm{i}}\,
\N^{\bm{i}}(\bm{\alpha}^{(\bm{j})})\,\bm{x}^{\bm{i}}
\pi_{\bm{j}}(\phi(F))
\in I_{\bm n}
\qquad \text{for all } \bm{j}\in J,
\]
where
\[
\N^{\bm{i}}(\bm{\alpha}^{(\bm j)})\,\bm{x}^{\bm{i}}=\prod_{\ell=1}^m
\N_{\mathbb{E}_\ell/\mathbb K}^{\,i_\ell}\!\left({\alpha}_{\ell, j_\ell}\right)x_\ell^{i_\ell}.
\]
Since $c\,\bm x^{\bm i}=\bm x^{\bm i}\,\bm\sigma^{-\bm i}(c)$ for $c\in\mathbb L$, this condition reads
\[
\sum_{\bm i\in\Delta(\bm\nu)}d_{\bm i}\,\bm x^{\bm i}\,\bm\sigma^{-\bm i}\bigl(\N^{\bm i}(\bm\alpha^{(\bm j)})\bigr)\,\phi_{\bm j}(F)\in I_{\bm n}
\qquad\text{for all }\bm j\in J.
\]
By Theorem~\ref{Theorem 60 ACLN}, this means that $\psi$ maps the left-hand side to $0$. An element of $\mathbb L[G]$ is zero if and only if it vanishes on the basis $\mathcal B_{\ev}$. Put $v_{\bm j,k}:=\psi\bigl(\phi_{\bm j}(F)\bigr)(\beta_k)$, so that $v=(v_{\bm j,k})_{\bm j,k}=\ev_{\mathcal B_{\ev}^{(t)}}(F_{\bm\alpha})$. Since $\psi(\bm x^{\bm i}cP)(z)=\bm\sigma^{\bm i}\bigl(c\,\psi(P)(z)\bigr)$ for $c,z\in\mathbb L$ and $P\in\Lx$, we have
\[
\psi\Bigl(\bm x^{\bm i}\,\bm\sigma^{-\bm i}\bigl(\N^{\bm i}(\bm\alpha^{(\bm j)})\bigr)\,\phi_{\bm j}(F)\Bigr)(\beta_k)
=\N^{\bm i}(\bm\alpha^{(\bm j)})\,\bm\sigma^{\bm i}(v_{\bm j,k}).
\]
Therefore, $g=\sum_{\bm i\in\Delta(\bm\nu)}d_{\bm i}\bm x^{\bm i}$ lies in $\ker(\mu_F)$ if and only if
\[
\sum_{\bm i\in\Delta(\bm\nu)}d_{\bm i}\,\N^{\bm i}(\bm\alpha^{(\bm j)})\,\bm\sigma^{\bm i}(v_{\bm j,k})=0
\qquad\text{for all }(\bm j,k)\in J\times\{1,\ldots,N\},
\]
that is, if and only if the row vector $(d_{\bm i})_{\bm i\in\Delta(\bm\nu)}$ lies in the left kernel of $M_{G,\bm\alpha}(v)$. Hence, $\dim_{\mathbb L}\ker(\mu_F)=Nt-\rk_{\mathbb L}M_{G,\bm\alpha}(v)$. On the other hand, $\dim_{\mathbb L}\ker(\mu_F)=Nt-\rk_{\mathbb L}D_{\mathcal B_{\mathrm{mon}}}(F)$ by Remark~\ref{rk=dimrispcolonideal}, which proves the claim.
\end{proof}

\begin{corollary}\label{corollario moore dickson e srk}
With the notation of Theorem~\ref{rkMoore=rkDickson},
\[
\srk\bigl(\bm\Phi(F)\bigr)=\srk(v)=\rk_{\mathbb L}D_{\mathcal B_{\mathrm{mon}}}(F)=\rk_{\mathbb L}M_{G,\bm\alpha}(v).
\]
\end{corollary}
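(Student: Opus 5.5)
This is a chain of three equalities, each of which is essentially already available; the plan is to assemble them in order.

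The first equality, $\srk(\bm\Phi(F))=\srk(v)$, follows from the remark just before Theorem~\ref{rkMoore=rkDickson}. There it is observed that $\ev_{\mathcal B_{\ev}^{(t)}}$ preserves the sum-rank weight. Concretely, for each block $\bm j\in J$ the $\mathbb K$-span of $P_{\bm j}(\beta_1),\ldots,P_{\bm j}(\beta_N)$ is exactly $P_{\bm j}(\mathbb L)$, because $P_{\bm j}$ is $\mathbb K$-linear and $\mathcal B_{\ev}$ is a $\mathbb K$-basis. Hence $\rk_{\mathbb K}(v^{(\bm j)})=\dim_{\mathbb K}P_{\bm j}(\mathbb L)$, and summing over $\bm j\in J$ gives the equality of sum-rank weights. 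I will spell this out in one line.

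The third quantity is handled by Theorem~\ref{Moore=srk}, applied to the vector $v\in\mathbb L^{Nt}$, which gives $\srk(v)=\rk_{\mathbb L}M_{G,\bm\alpha}(v)$. Theorem~\ref{rkMoore=rkDickson} then gives $\rk_{\mathbb L}D_{\mathcal B_{\mathrm{mon}}}(F)=\rk_{\mathbb L}M_{G,\bm\alpha}(v)$. Putting the three equalities together yields the full chain.

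I expect no real obstacle. The one point to check is that the $v$ in Theorem~\ref{rkMoore=rkDickson} is the same vector used in Theorem~\ref{Moore=srk}. This holds because the blocks of $v$ are indexed by $J$ in the fixed lexicographic order, with $N$ entries each, as in Notation~\ref{notazione moore}. I will also note that $F$ may be taken to be its reduced representative, since $\bm\Phi$ is well defined on the quotient and $D_{\mathcal B_{\mathrm{mon}}}(F)$ depends only on the class of $F$.
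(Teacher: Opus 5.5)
Your proposal is correct and matches the paper's proof: the first equality holds because $\ev_{\mathcal B_{\ev}^{(t)}}$ preserves the sum-rank weight, and the other two follow from Theorems~\ref{Moore=srk} and~\ref{rkMoore=rkDickson}. Your one-line justification of the first step, that the $\mathbb K$-span of $P_{\bm j}(\beta_1),\ldots,P_{\bm j}(\beta_N)$ equals $P_{\bm j}(\mathbb L)$, is exactly the point the paper leaves implicit.
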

\begin{proof}
The first equality holds because $\ev_{\mathcal B_{\ev}^{(t)}}$ preserves the sum-rank weight; the others follow from Theorems~\ref{Moore=srk} and~\ref{rkMoore=rkDickson}.
\end{proof}

\begin{example}[Continuation of Example~\ref{ex:running-example}]\label{ex:running-moore}
 Let $F=1+x_1+x_2+x_1x_2\in\mathcal{R}$ and consider
$F_{\bm{\alpha}}=\bm\Phi(F)\in(\mathbb L[G])^4.
$

For every $\bm{j}=(j_1,j_2)\in\{1,2\}^2$, the corresponding
component of $F_{\bm{\alpha}}$ is
\[
F_{\bm{\alpha}}^{(\bm{j})}
=
\bigl(1+\alpha_{1,j_1}\sigma_1\bigr)
\bigl(1+\alpha_{2,j_2}\sigma_2\bigr).
\]
Equivalently,
\[
F_{\bm{\alpha}}^{(\bm{j})}
=
1+\alpha_{1,j_1}\sigma_1
+\alpha_{2,j_2}\sigma_2
+\alpha_{1,j_1}\alpha_{2,j_2}
 \sigma_1\sigma_2.
\]
We choose the $\mathbb K$-basis of $\mathbb L/\mathbb K$
\[
\mathcal B_{\ev}=(\beta_1,\beta_2,\beta_3,\beta_4)=(1,\sqrt2,\sqrt3,\sqrt6)
\]
and set $\mathcal B_{\ev}^{(4)}=(\mathcal B_{\ev})^4.$
Let
\[
v
=
\ev_{\mathcal B_{\ev}^{(4)}}
(F_{\bm{\alpha}})
=
\bigl(
v^{(1,1)},
v^{(1,2)},
v^{(2,1)},
v^{(2,2)}
\bigr)\in(\mathbb {L}^4)^4.
\]
By evaluating the four components of
$F_{\bm{\alpha}}$ on
$\mathcal B_{\ev}$, we obtain
\[
v^{(1,1)}=(4,0,0,0),\quad v^{(1,2)}=(4+2\sqrt3,\;0,\;-6,\;0),
\]
\[
v^{(2,1)}=(4+2\sqrt2,\;-4,\;0,\;0),
\quad
v^{(2,2)}=\bigl(4+2\sqrt2+2\sqrt3+\sqrt6,-4-2\sqrt3,-6-3\sqrt2,6\bigr).
\]
The associated Moore matrix has the block decomposition
\[
M_{G,\bm{\alpha}}(v)
=
\left(
M^{(1,1)}
\;\middle|\;
M^{(1,2)}
\;\middle|\;
M^{(2,1)}
\;\middle|\;
M^{(2,2)}
\right)
\in\mathbb L^{16\times16},
\]
where each block $M^{(j_1,j_2)}$ belongs to
$\mathbb L^{16\times4}$ and is defined by
\[
M^{(j_1,j_2)}_{(i_1,i_2),k}
=
\N^{(i_1,i_2)}
\bigl(
\alpha_{1,j_1},\alpha_{2,j_2}
\bigr)
\sigma_1^{i_1}\sigma_2^{i_2}
\bigl(v^{(j_1,j_2)}_k\bigr),
\]
for $(i_1,i_2)\in\Delta(4)\times\Delta(4),$ and $k\in\{1,2,3,4\}.$
We now compute the truncated norms appearing in the definition of the Moore matrix. Recall that, for
\[ 
\bm{i}=(i_1,i_2)\in\{0,1,2,3\}^2,\qquad \bm{\alpha}^{(\bm{j})} = \left( \alpha_{1,j_1},\alpha_{2,j_2} \right), \qquad \bm{j}=(j_1,j_2)\in\{1,2\}^2, 
\] 
we have 
\[ 
\N^{(i_1,i_2)} \left( \alpha_{1,j_1},\alpha_{2,j_2} \right) = \left( \prod_{h=0}^{i_1-1} \sigma_1^h\left(\alpha_{1,j_1}\right) \right) \left( \prod_{\ell=0}^{i_2-1} \sigma_2^\ell\left(\alpha_{2,j_2}\right) \right).
\] 
As usual, an empty product is equal to $1$.
Let us first consider the elements associated with the first variable. Since $\alpha_{1,1}=1 $ and every field automorphism fixes $1$, we obtain 
\[ 
\left( \prod_{h=0}^{i_1-1} \sigma_1^h\left(\alpha_{1,1}\right) \right)_{i_1=0}^3 = (1,1,1,1).
\] 
For $ \alpha_{1,2}=1+\sqrt{2}, $ we have 
\[ 
\sigma_1(1+\sqrt{2})=1-\sqrt{2}\qquad\text{and}\qquad \sigma_1^2=\id_{\mathbb L}.
\] 
Therefore, 
\[ 
\prod_{h=0}^{-1} \sigma_1^h(1+\sqrt{2}) = 1, 
\] 
\[ 
\prod_{h=0}^{0} \sigma_1^h(1+\sqrt{2}) = 1+\sqrt{2}, 
\] 
\[
\prod_{h=0}^{1} \sigma_1^h(1+\sqrt{2})= (1+\sqrt{2})(1-\sqrt{2})=-1,
\] 
\[
\prod_{h=0}^{2} \sigma_1^h(1+\sqrt{2})= (1+\sqrt{2})(1-\sqrt{2}) \sigma_1^2(1+\sqrt{2})= (-1)(1+\sqrt{2})=-1-\sqrt{2}.
\] 
Hence, 
\[
\left( \prod_{h=0}^{i_1-1} \sigma_1^h\left(\alpha_{1,2}\right) \right)_{i_1=0}^3 = (1,\,1+\sqrt{2},\,-1,\,-1-\sqrt{2}).
\] 
We proceed analogously for the second variable. 

The truncated norm $ \N^{\bm{i}} \left( \bm{\alpha}^{(\bm{j})} \right) $ is obtained by multiplying the value associated with the first coordinate by the corresponding one associated with the second coordinate. For instance, let 
\[ \bm{j}=(2,2), \qquad \bm{\alpha}^{(2,2)} = (1+\sqrt{2},1+\sqrt{3}),\qquad\text{and}\qquad \bm{i}=(2,3).
\] 
Then 
\[ 
\N^{(2,3)} \left( \bm{\alpha}^{(2,2)} \right) = \left( \prod_{h=0}^{1} \sigma_1^h(1+\sqrt{2}) \right) \left( \prod_{\ell=0}^{2} \sigma_2^\ell(1+\sqrt{3}) \right)= (-1)(-2-2\sqrt{3})= 2+2\sqrt{3}.
\] 
The general entry of the Moore matrix is 
\[ 
M_{G,\bm{\alpha}}(v)_{ \bm{i},(\bm{j},k)} = \N^{\bm{i}} \left( \bm{\alpha}^{(\bm{j})} \right) \sigma^{\bm{i}} \left(v_{\bm{j},k}\right),
\qquad \text{where}\quad 
\sigma^{\bm{i}} = \sigma_1^{i_1}\sigma_2^{i_2}.
\] 
Consequently, for the particular choice $\bm{i}=(2,3)$ and $\bm{j}=(2,2),$ we have $\sigma^{(2,3)} = \sigma_1^2\sigma_2^3 = \sigma_2,$ and hence 
\[ 
M_{G,\bm{\alpha}}(v)_{ (2,3),((2,2),k)} = (2+2\sqrt{3})\, \sigma_2\left(v_{(2,2),k}\right)\;,\qquad k\in\{1,2,3,4\}.
\]
Since $v^{(1,1)}=(4,0,0,0)$ and
$\alpha_{1,1}=\alpha_{2,1}=1$, its first block is
\[
M^{(1,1)}
=
4
\begin{pmatrix}
1&0&0&0\\
1&0&0&0\\
\vdots&\vdots&\vdots&\vdots\\
1&0&0&0
\end{pmatrix}
\in\mathbb L^{16\times4}
\qquad
\text{and}\quad \rk_{\mathbb L}\bigl(M^{(1,1)}\bigr)=1.
\]
For readability, set $a:=\sqrt{2}$ and $b:=\sqrt{3}$. The vertical lines in the matrix below separate the four blocks corresponding, respectively, to \[ (1,1),\qquad (1,2),\qquad (2,1),\qquad (2,2). \] \begingroup \setlength{\arraycolsep}{2.5pt} \renewcommand{\arraystretch}{1.15} \begin{center} \resizebox{\textwidth}{!}{$ M_{G,\bm{\alpha}}(v) = \left( \begin{array}{cccc|cccc|cccc|cccc} 4&0&0&0 &4+2b&0&-6&0 &4+2a&-4&0&0 &4+2a+2b+ab&-4-2b&-6-3a&6 \\ 4&0&0&0 &-2+2b&0&-6-6b&0 &4+2a&-4&0&0 &-2-a+2b+ab&2-2b&-6-3a-6b-3ab&6+6b \\ 4&0&0&0 &-8-4b&0&12&0 &4+2a&-4&0&0 &-8-4a-4b-2ab&8+4b&12+6a&-12 \\ 4&0&0&0 &4-4b&0&12+12b&0 &4+2a&-4&0&0 &4+2a-4b-2ab&-4+4b&12+6a+12b+6ab&-12-12b \\[1mm] 4&0&0&0 &4+2b&0&-6&0 &2a&-4-4a&0&0 &2a+ab&-4-4a-2b-2ab&-3a&6+6a \\ 4&0&0&0 &-2+2b&0&-6-6b&0 &2a&-4-4a&0&0 &-a+ab&2+2a-2b-2ab&-3a-3ab&6+6a+6b+6ab \\ 4&0&0&0 &-8-4b&0&12&0 &2a&-4-4a&0&0 &-4a-2ab&8+8a+4b+4ab&6a&-12-12a \\ 4&0&0&0 &4-4b&0&12+12b&0 &2a&-4-4a&0&0 &2a-2ab&-4-4a+4b+4ab&6a+6ab&-12-12a-12b-12ab \\[1mm] 4&0&0&0 &4+2b&0&-6&0 &-4-2a&4&0&0 &-4-2a-2b-ab&4+2b&6+3a&-6 \\ 4&0&0&0 &-2+2b&0&-6-6b&0 &-4-2a&4&0&0 &2+a-2b-ab&-2+2b&6+3a+6b+3ab&-6-6b \\ 4&0&0&0 &-8-4b&0&12&0 &-4-2a&4&0&0 &8+4a+4b+2ab&-8-4b&-12-6a&12 \\ 4&0&0&0 &4-4b&0&12+12b&0 &-4-2a&4&0&0 &-4-2a+4b+2ab&4-4b&-12-6a-12b-6ab&12+12b \\[1mm] 4&0&0&0 &4+2b&0&-6&0 &-2a&4+4a&0&0 &-2a-ab&4+4a+2b+2ab&3a&-6-6a \\ 4&0&0&0 &-2+2b&0&-6-6b&0 &-2a&4+4a&0&0 &a-ab&-2-2a+2b+2ab&3a+3ab&-6-6a-6b-6ab \\ 4&0&0&0 &-8-4b&0&12&0 &-2a&4+4a&0&0 &4a+2ab&-8-8a-4b-4ab&-6a&12+12a \\ 4&0&0&0 &4-4b&0&12+12b&0 &-2a&4+4a&0&0 &-2a+2ab&4+4a-4b-4ab&-6a-6ab&12+12a+12b+12ab \end{array} \right). $} \end{center} \endgroup 
Notice that the first row coincides with $v$. Indeed, for $\bm{i}=(0,0)$, one has $\N^{(0,0)} \left(\bm{\alpha}^{(\bm{j})}\right)=1$ and $\sigma^{(0,0)}=\id_{\mathbb L}$, so $M_{G,\bm{\alpha}}(v)_{ (0,0),(\bm{j},k)} = v_{\bm{j},k}.$ The $\mathbb K$-ranks of the four blocks of $v$ are
\[
\rk_{\mathbb K}(v^{(1,1)})=1,
\quad
\rk_{\mathbb K}(v^{(1,2)})=2,
\quad
\rk_{\mathbb K}(v^{(2,1)})=2,
\quad
\rk_{\mathbb K}(v^{(2,2)})=4.
\]
Consequently, $\srk(v)=1+2+2+4=9.$ By Corollary \ref{corollario moore dickson e srk},
\[
\rk_{\mathbb L}
\bigl(M_{G,\bm{\alpha}}(v)\bigr)
=
9
=
\rk_{\mathbb L}
\bigl(D_{\mathcal B_{\mathrm{mon}}}(F)\bigr).
\]

\end{example}

\section{Sum-rank Reed--Muller-type codes}\label{sezione codici}
In Section~\ref{sec: rapp theorem}, more precisely in Theorem \ref{representation theorem}, we proved that the sum-rank ambient space $(\mathbb L[G])^t$
can be represented as a quotient of the multivariate skew polynomial ring
$\mathbb L[x_1,\ldots,x_m;\sigma_1,\ldots,\sigma_m]$. The following remark will clarify the suitable ambient space for our Reed--Muller-type construction.
\begin{remark}\label{rem:metric-quotient}
For $F_1,F_2\in\Lx/H_\Lambda$ put
\[
d_\Lambda(F_1,F_2):=\dim_{\mathbb L}\frac{\Lx}{(H_\Lambda:F_2-F_1)},
\qquad
\srk(F):=d_\Lambda(F,0).
\]
These quantities are well defined, because $(H_\Lambda:F)$ depends only on the class of $F$. Let $\mathcal B_{\ev}$ be an ordered $\mathbb K$-basis of $\mathbb L$, and let $\Theta_{\mathcal B_{\ev}}:=\rho_{\mathcal B_{\ev}^{(t)}}\circ\bm\Phi\colon\Lx/H_\Lambda\to(\mathbb K^{N\times N})^t$, with $\rho_{\mathcal B_{\ev}^{(t)}}$ as in Section~\ref{sec:preliminaries}. By Remark~\ref{rk=dimrispcolonideal} and Corollary~\ref{corollario moore dickson e srk},
\[
\srk(F)=\rk_{\mathbb L}D_{\mathcal B_{\mathrm{mon}}}(F)=\srk\bigl(\bm\Phi(F)\bigr)=\srk\bigl(\Theta_{\mathcal B_{\ev}}(F)\bigr).
\]
Hence, $d_\Lambda$ is a metric, and $\Theta_{\mathcal B_{\ev}}$ is a $\mathbb K$-linear isometry from $(\Lx/H_\Lambda,d_\Lambda)$ onto $\bigl((\mathbb K^{N\times N})^t,d_{\srk}\bigr)$. All codes below are $\mathbb L$-subspaces of $\Lx/H_\Lambda$ with this metric.
\end{remark}
\begin{definition}\label{def: srk reed muller codes}
Let $r$ be a nonnegative integer. The \textbf{sum-rank Reed--Muller-type code} of order $r$ is defined as
\[
\SRRM_{\mathbb L/\mathbb K}(r, m)
:=
\left\{
F \bmod (H_1,\ldots,H_m)
\;\middle|\;
\deg F\le r
\right\}.
\]
Equivalently,
\[
\SRRM_{\mathbb L/\mathbb K}(r, m)=\Span_{\mathbb L}\left\{\bm x^{\bm i} \bmod (H_1,\ldots,H_m)\;\middle|\;0\le i_h<\nu_h,\ |\bm i|\le r\right\}.
\]
\end{definition}

\begin{remark}
For every integer $r\ge \sum_{i=1}^m(\nu_i-1)$, $\SRRM_{\mathbb L/\mathbb K}(r, m)$ coincides with the full ambient space and, hence, has minimum sum-rank distance $1$. Unless otherwise stated, we therefore assume that the order $r$
satisfies $0\le r<\sum_{i=1}^m(\nu_i-1)$.
\end{remark}

\begin{proposition}
The code $\SRRM_{\mathbb L/\mathbb K}(r, m)$ is an $\mathbb L$-linear code in the sum-rank ambient space
$(\mathbb L[G])^t$. Its sum-rank length is
\[
Nt=\prod_{i=1}^m \nu_i.
\]
Moreover,
\[
\dim_{\mathbb L} \SRRM_{\mathbb L/\mathbb K}(r, m)
=
\#\left\{
\bm i=(i_1,\ldots,i_m)\in\mathbb N^m
\;\middle|\;
0\le i_h<\nu_h,\ |\bm i|\le r
\right\}.
\]
\end{proposition}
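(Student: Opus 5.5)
The plan is to read everything off Remark~\ref{rem:universal-grobner-basis} and the Representation Theorem~\ref{representation theorem}, together with the metric identification of Remark~\ref{rem:metric-quotient}. There are three claims: $\mathbb L$-linearity inside the sum-rank ambient space, the value of the length, and the dimension count.

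\textbf{Linearity.} The set $\{F\in\Lx \mid \deg F\le r\}$ is closed under addition and under left multiplication by scalars of $\mathbb L$. Coefficients are written on the left and $\deg(cF)\le\deg F$, so this set is a left $\mathbb L$-subspace. The reduction map $\Lx\to\Lx/H_\Lambda$ is left $\mathbb L$-linear, so its image $\SRRM_{\mathbb L/\mathbb K}(r,m)$ is an $\mathbb L$-subspace of the quotient. Transporting it through $\bm\Phi$ gives an $\mathbb L$-subspace of $(\mathbb L[G])^t$, because $\bm\Phi$ is an $\mathbb L$-linear bijection and $\bm\Phi(cF)=c\,\bm\Phi(F)$ for $c\in\mathbb L$. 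Remark~\ref{rem:metric-quotient} shows that this transport preserves the sum-rank weight. The length is $Nt$, with $t$ blocks of size $N\times N$, and $Nt=\prod_i n_i\prod_i t_i=\prod_i\nu_i$.

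\textbf{Equality of the two descriptions in Definition~\ref{def: srk reed muller codes}.} We must check that reducing a polynomial of degree $\le r$ modulo $H_\Lambda$ stays inside the span of the standard monomials $\bm x^{\bm i}$ with $\bm i\in\Delta(\bm\nu)$ and $|\bm i|\le r$. Take a monomial $\bm x^{\bm u}$ with $|\bm u|\le r$ and some $u_i\ge\nu_i$. Since $H_i$ is central, we can reduce using
\[
x_i^{\nu_i}\equiv x_i^{\nu_i}-H_i(x_i),
\]
where the right-hand side has degree $<\nu_i$ in $x_i$. This replaces $\bm x^{\bm u}$ by a combination of monomials of strictly smaller total degree. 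Inducting on, say, $\sum_i\max(u_i-\nu_i+1,0)$ together with the degree, the reduced representative is a combination of standard monomials of total degree $\le|\bm u|\le r$. The reverse inclusion is trivial.

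\textbf{Dimension.} By Remark~\ref{rem:universal-grobner-basis}, the classes of $\bm x^{\bm i}$ for $\bm i\in\Delta(\bm\nu)$ form an $\mathbb L$-basis of the quotient. Any subfamily of them is therefore linearly independent, and the spanning set in the second description is exactly the family indexed by $\{\bm i\in\Delta(\bm\nu)\mid |\bm i|\le r\}$. The dimension is the cardinality of that set.

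The main obstacle is the degree-nonincreasing reduction in the second step, and it is mild. The ring is noncommutative, but $H_i$ lies in the subring $\mathbb K[x_i^{n_i}]$ and is central. Hence left multiplication of $H_i$ by $c\,\bm x^{\bm u-\nu_i\bm e_i}$ produces a polynomial with leading part $c\,\bm x^{\bm u}$ and all other terms of lower total degree. This is where I will be careful about the twist on coefficients, which does not affect exponents.
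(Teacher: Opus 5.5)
Your proposal is correct and follows essentially the same route as the paper. The paper also observes that reducing a monomial modulo the central, monic $H_i$ replaces $x_i^{\nu_i}$ by a $\mathbb K$-linear combination of powers $x_i^a$ with $a<\nu_i$, so total degree never increases and the two descriptions agree; it then reads the length and dimension off Theorem~\ref{representation theorem} and the standard monomial basis of Remark~\ref{rem:universal-grobner-basis}.
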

\begin{proof}
Reducing a monomial modulo $H_i$ replaces $x_i^{\nu_i}$ by a $\mathbb K$-linear combination of powers $x_i^a$ with $a<\nu_i$, so it does not increase the total degree. Hence, the two descriptions in Definition~\ref{def: srk reed muller codes} agree, and the claims follow from Theorem~\ref{representation theorem} and Remark~\ref{rem:universal-grobner-basis}.
\end{proof}
The dimension can also be written as follows.
\begin{proposition}\label{prop: dim srk muller codes}
We have
\[
\dim_{\mathbb L} \SRRM_{\mathbb L/\mathbb K}(r, m)
=
k(r,\bm{\nu}),
\]
where
\[
k(r,\bm{\nu}):=
\sum_{\ell=0}^r c(\ell,\bm{\nu})
=
\sum_{\ell=0}^r
[z^\ell]
\prod_{j=1}^m
\left(\frac{1-z^{\nu_j}}{1-z}\right),
\]
and $c(\ell,\bm{\nu})$ is the number of weak compositions of
the integer $\ell$ into $m$ parts in which the $j$-th part is at most
$\nu_j-1$, and $[z^\ell]p(z)$ denotes the coefficient of $z^\ell$ in
$p(z)$.
\end{proposition}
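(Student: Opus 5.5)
The plan is to start from the preceding proposition, which already identifies $\dim_{\mathbb L}\SRRM_{\mathbb L/\mathbb K}(r,m)$ with the number of exponent vectors $\bm i\in\Delta(\bm\nu)$ satisfying $|\bm i|\le r$. Only a combinatorial identity then remains. We partition this set of exponent vectors by total degree. For each $\ell\in\{0,\ldots,r\}$, the vectors with $|\bm i|=\ell$ are exactly the weak compositions $\ell=i_1+\cdots+i_m$ with $0\le i_j\le\nu_j-1$. There are $c(\ell,\bm\nu)$ of them by definition, and summing over $\ell$ gives the first equality $k(r,\bm\nu)=\sum_{\ell=0}^r c(\ell,\bm\nu)$.

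The second equality is a generating-function identity. For each $j$, the truncated geometric series gives
\[
\sum_{a=0}^{\nu_j-1}z^a=\frac{1-z^{\nu_j}}{1-z},
\]
which is a genuine polynomial in $\mathbb Z[z]$. Multiplying these over $j=1,\ldots,m$ and expanding the product, every choice of exponents $(i_1,\ldots,i_m)\in\Delta(\bm\nu)$ contributes exactly one monomial $z^{i_1+\cdots+i_m}$. Hence the coefficient of $z^\ell$ in $\prod_j(1-z^{\nu_j})/(1-z)$ counts exactly those tuples with $|\bm i|=\ell$, so
\[
c(\ell,\bm\nu)=[z^\ell]\prod_{j=1}^m\frac{1-z^{\nu_j}}{1-z}.
\]
Summing over $0\le\ell\le r$ yields the stated formula.

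There is no real obstacle here. The only point needing care is the very first step: it relies on the standard monomials $\bm x^{\bm i}$, $\bm i\in\Delta(\bm\nu)$, being $\mathbb L$-linearly independent in $\Lx/H_\Lambda$ (Remark~\ref{rem:universal-grobner-basis}), and on reduction modulo $H_\Lambda$ not raising the total degree. Both are already established in the preceding proposition.
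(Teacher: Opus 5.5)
Your proposal is correct and follows the paper's approach. The paper gives no separate proof: it presents this statement as a rewriting of the monomial count from the preceding proposition, and your partition by total degree together with the expansion of $\prod_{j}\sum_{a=0}^{\nu_j-1}z^a$ carries out exactly that rewriting.
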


\begin{example}[Continuation of Example~\ref{ex:running-example}]
For $r=2$, the associated sum-rank Reed--Muller-type code is
\[
\SRRM_{\mathbb L/\mathbb K}(2,2)=\Span_{\mathbb L}\{1,x_1,x_2,x_1^2,x_1x_2,x_2^2\}.
\]
In particular, $\dim_{\mathbb L}\SRRM_{\mathbb L/\mathbb K}(2,2)=6.$
\end{example}
\subsection{Degree bounds for the sum-rank weight}
In this section we derive degree bounds for the sum-rank weight of reduced
skew polynomial representatives and we compute the minimum distance of the
codes introduced in the previous section.
\begin{notation}\label{notation min dist}
Up to reordering the variables, we will assume from now on that $\nu_1\ge \nu_2\ge \cdots \ge \nu_m.$
For an integer $r$ with $0\le r<\sum_{i=1}^m(\nu_i-1),$ we define
\[
d(r,\bm{\nu})
:=
\min
\left\{
\prod_{i=1}^m(\nu_i-u_i)
\;\middle|\;
\bm{u}=(u_1,\ldots,u_m)\in\Delta(\bm{\nu}),\ |\bm{u}|\le r
\right\}.
\]
\end{notation}
We first recall the closed form of $d(r,\bm\nu)$. Since $\prod_i(\nu_i-u_i)$ decreases when a coordinate of $\bm u$ increases, the minimum defining $d(r,\bm\nu)$ is attained at some $\bm u$ with $|\bm u|=r$. With $y_i:=\nu_i-u_i$, it is the minimum of $y_1\cdots y_m$ subject to $1\le y_i\le\nu_i$ and $\sum_iy_i=\sum_i\nu_i-r$, which is computed in the following lemma.
\begin{lemma}[{\cite[Lemma 2.2]{CFS2017}, \cite[Lemma 2.1]{BCPS2018}}]\label{lem:distance-function}
Let $0\le r< \sum_{i=1}^m(\nu_i-1).$ Let $s$ and $\ell$ be uniquely determined by
\[
r=\sum_{i=s+1}^m(\nu_i-1)+\ell,
\qquad
0\le \ell< \nu_s-1.
\]
Then
\[
d(r,\bm{\nu})
=
(\nu_s-\ell)\prod_{i=1}^{s-1}\nu_i.
\]
\end{lemma}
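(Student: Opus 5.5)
The plan is to solve the purely combinatorial optimization problem stated just before the lemma: minimize $y_1\cdots y_m$ subject to $1\le y_i\le\nu_i$ and $\sum_i y_i=S:=\sum_i\nu_i-r$. The remark preceding the lemma already reduces $d(r,\bm\nu)$ to this problem. Since $r<\sum_i(\nu_i-1)$, we have $S>m$, so the feasible set is nonempty and some $y_i\ge2$. I will show that an optimal vector can be taken to have the ``greedy'' shape: all coordinates but one are at an extreme value, either $1$ or $\nu_i$. The free coordinate sits at index $s$, the variables $y_{s+1},\dots,y_m$ (the smallest $\nu$'s) are pushed down to $1$, and $y_1,\dots,y_{s-1}$ stay at $\nu_i$. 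That vector gives exactly $(\nu_s-\ell)\prod_{i<s}\nu_i$.

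\textbf{Step 1: at most one coordinate is interior.} Take an optimal integer vector $\bm y$ and suppose two coordinates satisfy $1<y_a\le y_b<\nu_b$. I replace $(y_a,y_b)$ by $(y_a-1,y_b+1)$. This keeps the sum and decreases the product of that pair: $(y_a-1)(y_b+1)=y_ay_b-(y_b-y_a)-1<y_ay_b$. The other factors are positive, so the total product strictly decreases, which contradicts optimality. The degenerate case $y_a=y_b$ is also covered, since the decrease is then still $1$. So in an optimal vector at most one coordinate $c$ has $1<y_c<\nu_c$, and every other coordinate is $1$ or $\nu_i$.

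\textbf{Step 2: the saturated coordinates are the ones with the largest $\nu_i$.} Suppose $y_a=1$ and $y_b=\nu_b$ with $\nu_a>\nu_b$; here $\nu_a\ge2$, so $a$ could absorb more. I swap the roles: set $y_a=\nu_b$ and $y_b=1$. This preserves the sum and the product and keeps the vector feasible, because $\nu_b\le\nu_a$. Then I apply Step 1 again if needed. It is cleaner to argue directly that there is an optimal vector with $y_i=\nu_i$ for $i<s'$, $y_i=1$ for $i>s'$, and $y_{s'}\in[1,\nu_{s'}]$, using these swaps and ties among equal $\nu$'s. Matching the sum gives $\sum_{i>s'}(\nu_i-1)+(\nu_{s'}-y_{s'})=r$. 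Uniqueness of the representation $r=\sum_{i>s}(\nu_i-1)+\ell$ with $0\le\ell<\nu_s-1$ then forces $s'=s$ and $y_s=\nu_s-\ell\ge2$. One boundary case needs care: when $y_{s'}=1$, we reindex to $s'-1$ with $\ell=0$.

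\textbf{Main obstacle.} The delicate point is Step 2 together with the bookkeeping of ties and the boundary cases, namely showing that the exchange arguments really produce the prescribed index $s$ rather than some other arrangement. I would handle this by proving a short exchange lemma: among optimal vectors, take one maximizing $\sum_i y_i\nu_i$, or lexicographically maximal. Steps 1 and 2 then show that its shape is forced, so there is no need to track swaps individually. Finally I check the upper bound directly: the vector $u_i=0$ for $i<s$, $u_s=\ell$, $u_i=\nu_i-1$ for $i>s$ lies in $\Delta(\bm\nu)$, has $|\bm u|=r$, and attains the value $(\nu_s-\ell)\prod_{i<s}\nu_i$.
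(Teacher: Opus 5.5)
The paper gives no proof of this lemma. It only reduces $d(r,\bm\nu)$ to minimizing $y_1\cdots y_m$ under $1\le y_i\le\nu_i$ and $\sum_i y_i=\sum_i\nu_i-r$, and then cites \cite[Lemma~2.2]{CFS2017} and \cite[Lemma~2.1]{BCPS2018}. Your exchange argument is therefore a genuinely different, self-contained route, and it is sound in outline. However, Step~2 as you state it only swaps a coordinate equal to $1$ with a saturated one. That does not control where the interior coordinate $c$ sits: it does not exclude $y_a=1<\nu_a$ with $a<c$, nor a saturated $y_b=\nu_b\ge2$ with $b>c$. So the sentence ``Steps 1 and 2 then show that its shape is forced'' is not yet justified.

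The fix is a single dichotomy applied to a lexicographically maximal minimizer $\bm y$. Suppose $a<b$, $y_a<\nu_a$ and $y_b>1$.
\begin{itemize}
\item If $y_b\le y_a$, the move $(y_a,y_b)\mapsto(y_a+1,y_b-1)$ is feasible. It is your Step~1 with the roles exchanged, and it lowers the pair product by $y_a-y_b+1\ge1$, contradicting minimality.
\item If $y_b>y_a$, transposing the two values is feasible, because $y_b\le\nu_b\le\nu_a$ and $y_a<y_b\le\nu_b$. It preserves sum and product and increases the $a$-th entry, contradicting lexicographic maximality.
\end{itemize}
Hence $y_i=\nu_i$ below the first unsaturated index $s'$ and $y_i=1$ above it. Your separate count of interior coordinates becomes unnecessary.

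Use the lexicographic tie-breaker rather than $\sum_i y_i\nu_i$. The transposition changes the latter by $(y_b-y_a)(\nu_a-\nu_b)$, which vanishes when $\nu_a=\nu_b$.

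Your closing bookkeeping then goes through. We have $r=(\nu_{s'}-y_{s'})+\sum_{i>s'}(\nu_i-1)$ with $\nu_{s'}\ge2$. If $y_{s'}\ge2$, uniqueness gives $s'=s$. If $y_{s'}=1$, then $s'\ge2$ because $r<\sum_i(\nu_i-1)$, and $\nu_{s'-1}\ge\nu_{s'}\ge2$, so $s=s'-1$ and $\ell=0$. For $r=0$ the only feasible vector is $\bm\nu$, and the formula holds since $\nu_i=1$ for $i>s$.

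Compared with the citation, this costs a short paragraph. In return it makes the section independent of the Alon--F\"uredi literature and shows directly why filling the smallest $\nu_i$ first is optimal.
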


\begin{theorem}[Alon--Füredi-type bound]\label{thm:sum-rank-alon-furedi}
Let $0\neq F$ be a reduced representative in
\[
\frac{\mathbb L[x_1,\ldots,x_m;\sigma_1,\ldots,\sigma_m]}
{(H_1(x_1),\ldots,H_m(x_m))}
\]
with $0\le \deg(F)< \sum_{i=1}^m(\nu_i-1)$. Then
\[
\srk(F)
\ge
(\nu_s-\ell)\prod_{i=1}^{s-1}\nu_i,
\]
where $\ell$ and $s$ are the unique integers satisfying
\[\deg F=\sum_{i=s+1}^m(\nu_i-1)+\ell,
\qquad
0\le \ell<\nu_s-1,\quad 1\le s\le m.
\]
\end{theorem}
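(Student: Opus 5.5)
We will use the colon-ideal description of the weight. By Remark~\ref{rk=dimrispcolonideal} and Remark~\ref{rem:metric-quotient},
\[
\srk(F)=\dim_{\mathbb L}\frac{\Lx}{(H_\Lambda:F)}.
\]
The left colon ideal $(H_\Lambda:F)$ is a left ideal containing $H_\Lambda$. Fix a monomial order $\prec$ (say graded lexicographic), and let $\LM(F)=\bm x^{\bm u}$ with $\bm u\in\Delta(\bm\nu)$, so that $|\bm u|\le\deg F$. Left division by a Gröbner basis works in $\Lx$ exactly as in the commutative case: leading monomials multiply additively, and $\mathbb L$ is a field, so leading coefficients can always be cancelled. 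Hence for every left ideal $I\supseteq H_\Lambda$, $\dim_{\mathbb L}\Lx/I$ equals the number of monomials $\bm x^{\bm a}$ with $\bm a\in\Delta(\bm\nu)$ that are not leading monomials of elements of $I$.

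The key step is to show that no element of $(H_\Lambda:F)$ has a leading monomial $\bm x^{\bm a}$ with $\bm a\in\Delta(\bm\nu)$ and $\bm a+\bm u\in\Delta(\bm\nu)$, i.e.\ with $a_i<\nu_i-u_i$ for every $i$. Let $g\in(H_\Lambda:F)$ be nonzero and replace $g$ by its reduced representative; this is allowed since $H_\Lambda F\subseteq H_\Lambda$. Suppose $\LM(g)=\bm x^{\bm a}$ with $a_i+u_i<\nu_i$ for all $i$. Then $\LM(gF)=\bm x^{\bm a+\bm u}$, because leading coefficients stay nonzero under twisting by automorphisms. Since $gF\in H_\Lambda$ and $\{H_1,\dots,H_m\}$ is a Gröbner basis (Remark~\ref{rem:universal-grobner-basis}), $\LM(gF)$ must be divisible by some $x_i^{\nu_i}$, which contradicts $a_i+u_i<\nu_i$.

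The leading monomials of $(H_\Lambda:F)$ therefore avoid the box $\prod_i\Delta(\nu_i-u_i)$, and
\[
\srk(F)\ge\prod_{i=1}^m(\nu_i-u_i).
\]
Since $\bm u\in\Delta(\bm\nu)$ and $|\bm u|\le\deg F$, this product is at least $d(\deg F,\bm\nu)$ by Notation~\ref{notation min dist}. The product decreases in each $u_i$, and any $\bm u$ can be raised coordinatewise to reach total degree exactly $\deg F$ without leaving $\Delta(\bm\nu)$, since $\deg F<\sum_i(\nu_i-1)$. Lemma~\ref{lem:distance-function} then evaluates $d(\deg F,\bm\nu)=(\nu_s-\ell)\prod_{i<s}\nu_i$, which is the claimed bound.

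I expect the main obstacle to be the noncommutative Gröbner bookkeeping: checking that the leading monomials of a left ideal containing $H_\Lambda$ count the codimension, and that $\LM(gF)=\LM(g)\LM(F)$ even though the coefficient becomes $g_{\bm a}\bm\sigma^{\bm a}(F_{\bm u})$. Both are routine once the left-module structure is used, but they need stating carefully.
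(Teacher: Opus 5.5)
Your proof is correct and is essentially the paper's argument. The paper shows that right multiplication by $F$ is injective on $\mathcal T_{\bm u}=\Span_{\mathbb L}\{\bm x^{\bm v}\mid v_i<\nu_i-u_i\}$, using the same fact you use: $\LM(gF)=\bm x^{\bm v+\bm u}$ is a reduced monomial with nonzero twisted coefficient. You phrase this dually, as the staircase of $(H_\Lambda:F)$ containing that box, and then count standard monomials. Your preliminary replacement of $g$ by its reduced representative is unnecessary, and as written it would need the remark that reduction preserves a leading monomial lying in $\Delta(\bm\nu)$: the identity $\LM(gF)=\bm x^{\bm a+\bm u}$ and the Gr\"obner property of $\{H_1,\ldots,H_m\}$ apply to $g$ directly.
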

\begin{proof}
This proof follows the same approach used in \cite[Theorem 41]{ACLN}. Fix a monomial order refining the total degree and let $\LM(F)=\bm x^{\bm u}$ be the leading monomial of $F$, with $\bm u=(u_1,\ldots,u_m)\in\Delta(\bm{\nu}).$ Set $r:=\deg(F)$. Since the order refines the total degree, $|\bm u|=r$.
Consider the $\mathbb L$-subspace
\[
\mathcal T_{\bm u}:=\Span_{\mathbb L}\left\{\bm x^{\bm v}\;\middle|\;0\le v_i<\nu_i-u_i,\ i=1,\ldots,m\right\}
\]
inside the quotient $\mathbb L[x_1,\ldots,x_m;\sigma_1,\ldots,\sigma_m]/(H_1,\ldots,H_m)$.
We claim that right multiplication by $F$ is injective on
$\mathcal T_{\bm u}$. Let $0\neq g\in\mathcal T_{\bm u}$, and write
\[
\LT(g)=b_{\bm v}\bm x^{\bm v},
\qquad
\LT(F)=a_{\bm u}\bm x^{\bm u},
\]
with $a_{\bm u},b_{\bm v}\in \mathbb L^*$. Since multiplication in the skew polynomial
ring only twists the coefficients, we have
\[
\LT(gF)
=
b_{\bm v}\bm\sigma^{\bm v}(a_{\bm u})\bm x^{\bm v+\bm u}.
\]
In particular, its coefficient is nonzero.
Since $g\in\mathcal T_{\bm u}$, we have $v_i+u_i<\nu_i$ for every $i\in\{1,\ldots,m\}.$
Hence, $\bm x^{\bm v+\bm u}$ is already reduced modulo
$(H_1,\ldots,H_m)$.

Moreover, every other monomial occurring in $gF$ is strictly smaller
than $\bm x^{\bm v+\bm u}$. If such a monomial has to be reduced modulo some
$H_i$, its total degree strictly decreases, since $H_i$ is monic of
degree $\nu_i$ in $x_i$. As the chosen monomial order refines the
total degree, no reduction of the remaining terms can produce the
monomial $\bm x^{\bm v+\bm u}$. Therefore, the coefficient
$b_{\bm v}\bm\sigma^{\bm v}(a_{\bm u})$ of $\bm x^{\bm v+\bm u}$ remains nonzero in the reduced
representative of $gF$.
It follows that $gF\neq 0$ in the quotient. Thus, right multiplication by $F$ is injective on $\mathcal T_{\bm u}$.
It follows that the rank of the Dickson matrix associated with the right multiplication map by $F$ is at least
\[
\dim_{\mathbb L}\mathcal T_{\bm u}
=
\prod_{i=1}^m(\nu_i-u_i).
\]
By the rank identity proved in the previous section, $\srk(F)
=\rk_{\mathbb L}(D_{\mathcal B_{\mathrm{mon}}}(F)).$
Thus,
\[
\srk(F)
\ge
\prod_{i=1}^m(\nu_i-u_i).
\]
The statement of the theorem is now an immediate consequence of Lemma \ref{lem:distance-function} by taking the minimum over all $\bm u\in\Delta(\bm \nu)$ with $|\bm u|\le r$.
\end{proof}
\begin{corollary}[Schwartz--Zippel-type bound]\label{cor:sum-rank-schwartz-zippel}
Let $F\neq0$ be a reduced representative, and let $\bm x^{\bm u}=\LM(F)$ with respect to a monomial order refining the total degree. Then
\[
\frac{\srk(F)}{Nt}\ \ge\ \prod_{i=1}^m\Bigl(1-\frac{u_i}{\nu_i}\Bigr)\ \ge\ 1-\sum_{i=1}^m\frac{u_i}{\nu_i}\ \ge\ 1-\frac{\deg F}{\nu_m}.
\]
\end{corollary}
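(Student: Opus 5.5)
The first inequality comes straight from the proof of Theorem~\ref{thm:sum-rank-alon-furedi}; the other two are elementary estimates on the product.

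\emph{First inequality.} Let $\bm x^{\bm u}=\LM(F)$ for a monomial order refining the total degree. The proof of Theorem~\ref{thm:sum-rank-alon-furedi} shows that right multiplication by $F$ is injective on $\mathcal T_{\bm u}$. That argument uses only the leading monomial, so it does not need the degree restriction. Combined with Remark~\ref{rk=dimrispcolonideal} and Corollary~\ref{corollario moore dickson e srk}, it gives
\[
\srk(F)=\rk_{\mathbb L}D_{\mathcal B_{\mathrm{mon}}}(F)\ge\prod_{i=1}^m(\nu_i-u_i).
\]
Dividing by $Nt=\prod_i\nu_i$ gives $\srk(F)/Nt\ge\prod_i(1-u_i/\nu_i)$. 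If $\deg F\ge\sum_i(\nu_i-1)$, the theorem as stated does not apply, but the injectivity argument on $\mathcal T_{\bm u}$ is unchanged, so this case needs no separate treatment.

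\emph{Second inequality.} Put $a_i:=u_i/\nu_i$. Since $\bm u\in\Delta(\bm\nu)$, each $a_i\in[0,1)$. We show $\prod_{i=1}^k(1-a_i)\ge 1-\sum_{i=1}^k a_i$ by induction on $k$ (a Weierstrass-type inequality). Suppose it holds for $k-1$, and write $P:=\prod_{i<k}(1-a_i)$ and $S:=\sum_{i<k}a_i$. If $1-S\le 0$, then the right-hand side $1-S-a_k$ is negative while the product is nonnegative, so there is nothing to prove. Otherwise $P\ge 1-S>0$, and since $1-a_k\ge 0$,
\[
P(1-a_k)\ge(1-S)(1-a_k)=1-S-a_k+Sa_k\ge 1-S-a_k.
\]

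\emph{Third inequality.} By Notation~\ref{notation min dist}, $\nu_m$ is the smallest $\nu_i$. Hence
\[
\sum_i\frac{u_i}{\nu_i}\le\frac{1}{\nu_m}\sum_i u_i=\frac{|\bm u|}{\nu_m}=\frac{\deg F}{\nu_m},
\]
where $|\bm u|=\deg F$ because the order refines the total degree.

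The main obstacle is the first inequality: one has to confirm that the injectivity argument of the Alon--Füredi proof works for arbitrary reduced $F$, without its degree hypothesis. It does, since that hypothesis is used only when invoking Lemma~\ref{lem:distance-function}.
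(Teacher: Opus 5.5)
Your proposal is correct and follows the paper's own proof. The first inequality comes from the injectivity of right multiplication on $\mathcal T_{\bm u}$ in the proof of Theorem~\ref{thm:sum-rank-alon-furedi}, which, as you rightly note, does not use the degree hypothesis; the second is the Weierstrass product inequality for $u_i/\nu_i\in[0,1)$, and the third follows from $\nu_m=\min_i\nu_i$ and $|\bm u|=\deg F$ (your case split in the induction is harmless but unnecessary, since multiplying $P\ge 1-S$ by $1-a_k\ge 0$ works whatever the sign of $1-S$).
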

\begin{proof}
The proof of Theorem~\ref{thm:sum-rank-alon-furedi} shows that $\srk(F)\ge\prod_{i=1}^m(\nu_i-u_i)$; dividing by $Nt=\prod_{i=1}^m\nu_i$ gives the first inequality. The second holds because $0\le u_i/\nu_i\le1$, and the third because $\nu_m\le\nu_i$ for every $i$ and $|\bm u|=\deg F$.
\end{proof}
For $\deg F<\nu_m$ the last bound reads $\srk(F)\ge(\nu_m-\deg F)\prod_{i=1}^{m-1}\nu_i$, which coincides with the bound of Theorem~\ref{thm:sum-rank-alon-furedi} in this range.
\begin{lemma}\label{lem:existence monic right divisor}
For every $i\in\{1,\ldots,m\}$ and every integer $d$ with $0\le d\le \nu_i$,  the polynomial
\[
H_i(x_i) = \prod_{j=1}^{t_i}(x_i^{n_i}-\lambda_{i,j})
\]
has a monic right divisor $P_i(x_i)\in \mathbb E_i[x_i;\sigma_i]$
of degree $d$.
\end{lemma}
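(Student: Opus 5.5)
Write $d=qn_i+e$ with $0\le q\le t_i$ and $0\le e<n_i$ (if $d=\nu_i$ take $q=t_i$, $e=0$). The candidate is
\[
P_i(x_i):=Q_e(x_i)\prod_{j=1}^{q}(x_i^{n_i}-\lambda_{i,j}),
\]
where $Q_e\in\mathbb E_i[x_i;\sigma_i]$ is a monic right divisor of degree $e$ of the remaining factor $x_i^{n_i}-\lambda_{i,q+1}$; if $e=0$ we set $Q_e=1$. The factors $x_i^{n_i}-\lambda_{i,j}$ are central, so they can be moved freely, and
\[
H_i=\Bigl(\prod_{j>q+1}(x_i^{n_i}-\lambda_{i,j})\Bigr)\,(x_i^{n_i}-\lambda_{i,q+1})\prod_{j\le q}(x_i^{n_i}-\lambda_{i,j}).
\]
If $x_i^{n_i}-\lambda_{i,q+1}=A\,Q_e$, then centrality of the product over $j\le q$ gives $H_i=(\ast)\,A\,Q_e\prod_{j\le q}(\cdots)=(\ast)\,A\,P_i$. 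So $P_i$ is a monic right divisor of degree $qn_i+e=d$ with coefficients in $\mathbb E_i$ (the $\lambda$'s lie in $\mathbb K$). This reduces the lemma to the single-factor case.

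For the key step we need a monic right divisor of every degree $e\le n_i$ of $x^{n}-\lambda$ in $\mathbb E[x;\sigma]$, where $\mathbb E=\mathbb E_i$, $\sigma=\sigma_i$ generates the cyclic group $\Gal(\mathbb E/\mathbb K)$ of order $n=n_i$, and $\lambda=\N_{\mathbb E/\mathbb K}(\alpha)$ with $\alpha=\alpha_{i,q+1}$. First apply the substitution $x\mapsto\alpha x$, as in Lemma~\ref{lem:phi-ring-homomorphism} restricted to $\mathbb E[x;\sigma]$. This is an automorphism preserving degree and the monic property up to a nonzero leading scalar, and as in the proof of Theorem~\ref{representation theorem} it sends $x^n-1$ to $\lambda x^n-1$. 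Hence the preimage of $x^n-1$ is $\lambda^{-1}(x^n-\lambda)$, and monic right divisors of $x^n-1$ pull back to right divisors of $x^n-\lambda$, which we then normalize to be monic.

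It remains to treat $x^n-1$. Under $\psi$ (Theorem~\ref{Theorem 60 ACLN} in the cyclic case), $\mathbb E[x;\sigma]/(x^n-1)\cong\End_{\mathbb K}(\mathbb E)$, and the left ideals of this quotient correspond to the subspaces $\{f:\ f(U)=0\}$ for $\mathbb K$-subspaces $U\subseteq\mathbb E$. Concretely, choose a $\mathbb K$-subspace $U$ of dimension $e$ and define $Q_e$ to be the minimal monic skew polynomial of degree at most $e$ vanishing on $U$, i.e. the $\sigma$-linearized subspace polynomial. It exists and has degree exactly $e$ by the nonsingularity of the Moore matrix, which is Lemma~\ref{lem:block-moore-independence} with $m=1,t=1$. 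The operator $\sigma^n-1$ vanishes on all of $\mathbb E\supseteq U$, so right Euclidean division $x^n-1=AQ_e+R$ with $\deg R<e$ yields an $R$ vanishing on $U$. By the minimality and Moore-rank argument, $R=0$.

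I expect the main obstacle to be this last step: making precise that a skew polynomial of degree less than $e$ cannot vanish on an $e$-dimensional $\mathbb K$-subspace. Evaluation of a skew polynomial here means the operator action of $\psi$, not the remainder-evaluation. The plan is to invoke Lemma~\ref{lem:block-moore-independence} with a basis $w_1,\dots,w_e$ of $U$ and rows $0,\dots,e-1$, restricting that lemma's argument to the first $e$ rows. An alternative is to quote the standard fact from \cite{ACLN} or \cite{M-P2018}.
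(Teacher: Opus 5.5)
Your reduction to a single factor via centrality, and the twist $x\mapsto\alpha x$ that carries right divisors of $x^n-1$ to right divisors of $x^n-\lambda$, are correct and match the paper. The gap is the one step everything rests on: that the monic annihilator $Q_e$ of minimal degree of an $e$-dimensional $\mathbb K$-subspace $U\subseteq\mathbb E$ has degree exactly $e$. Equivalently, no nonzero skew polynomial of degree $<e$ kills $U$. (That $Q_e$ right-divides $x^n-1$ then follows from minimality alone.) You want this from Lemma~\ref{lem:block-moore-independence} with $m=t=1$, restricting its argument to rows $0,\dots,e-1$. That argument cannot be restricted. It sums the relations $\sum_l\sigma^{-k}(d_l)w_l=0$ over all $k\in\{0,\dots,n-1\}$ to produce $\Tr(d_l)$, and with only $e<n$ rows no such sum is available. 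The lemma only says that the full $n\times e$ matrix $\bigl(\sigma^k(w_l)\bigr)_{0\le k<n,\,1\le l\le e}$ has rank $e$, which by itself says nothing about its leading $e\times e$ block. As written, the degree claim is unproved. Quoting the standard nonsingularity of truncated Moore matrices over cyclic extensions would close it, but that is not what the cited lemma states.

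The paper gets the degree by a dimension count, and the full-row version of the lemma you cite suffices for it.
\begin{itemize}
\item Let $\mathcal A_U$ be the set of $A\in\mathbb E[x;\sigma]$ with $\psi(A)(U)=0$. Since $\psi$ is multiplicative, this is a left ideal, hence generated by a monic $A_U$ (right division). It contains $x^n-1$, so $\deg A_U\le n$.
\item Under $\mathbb E[x;\sigma]/(x^n-1)\cong\End_{\mathbb K}(\mathbb E)$, the quotient $\mathcal A_U/(x^n-1)$ is the left $\mathbb E$-space of endomorphisms vanishing on $U$, of $\mathbb E$-dimension $n-e$. This is exactly the full-row statement of Lemma~\ref{lem:block-moore-independence}: the left kernel of the $n\times e$ Moore matrix, i.e.\ the polynomials of degree $<n$ killing $U$, has dimension $n-e$.
\item The elements of $\mathcal A_U$ of degree $<n$ are the products $BA_U$ with $\deg B<n-\deg A_U$, a space of dimension $n-\deg A_U$.
\end{itemize}
Hence $\deg A_U=e$, and $A_U$ right-divides $x^n-1$ by construction, so no Euclidean-division step is needed. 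Taking $Q_e:=A_U$, the rest of your proof goes through unchanged.
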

\begin{proof}
Fix $i\in\{1,\ldots,m\}$ and set $h_{i,j}(x_i):=x_i^{n_i}-\lambda_{i,j}$ for $j\in\{1,\ldots,t_i\}.$

We first show that, for every $j$ and every integer $r$ with $0\le r\le n_i$, the polynomial $h_{i,j}$ has a monic right divisor of degree $r$ in $\mathbb E_i[x_i;\sigma_i]$. Let $V\subseteq\mathbb E_i$ be a $\mathbb K$-subspace of dimension $r$, and let $\mathcal A_V$ be the left ideal of all $A=\sum_ka_kx_i^k\in\mathbb E_i[x_i;\sigma_i]$ whose operator evaluation $A(z):=\sum_ka_k\sigma_i^k(z)$ vanishes on $V$. Since $\mathbb E_i[x_i;\sigma_i]$ has a right division algorithm, $\mathcal A_V$ is generated by a unique monic $A_V$, and $x_i^{n_i}-1\in\mathcal A_V$. By Corollary~\ref{Corollary Thm 60 ACLN} for the cyclic extension $\mathbb E_i/\mathbb K$, the quotient $\mathcal A_V/(x_i^{n_i}-1)$ corresponds to the endomorphisms of $\mathbb E_i$ that vanish on $V$; these form a left $\mathbb E_i$-subspace of $\End_{\mathbb K}(\mathbb E_i)$ of dimension $n_i-r$. Hence
\[
\deg A_V=\dim_{\mathbb E_i}\frac{\mathbb E_i[x_i;\sigma_i]}{\mathcal A_V}=n_i-(n_i-r)=r,
\]
and there exists $B_V(x_i)\in \mathbb E_i[x_i;\sigma_i]$ such that
\begin{equation}\label{annihilator Ei right division}
x_i^{n_i}-1=B_V(x_i)A_V(x_i).
\end{equation}
Consider the automorphism
\[
\tau_{i,j}\colon \mathbb E_i[x_i;\sigma_i]
\longrightarrow \mathbb E_i[x_i;\sigma_i]
\qquad \text{where}\quad \tau_{i,j}(f(x_i))=f(\alpha_{i,j}^{-1}x_i).
\]
Observe that $\tau_{i,j}(x_i^{n_i}-1)=\lambda_{i,j}^{-1}\bigl(x_i^{n_i}-\lambda_{i,j}\bigr)$. Thus, by applying $\tau_{i,j}$ to \eqref{annihilator Ei right division} we obtain
\[
\lambda_{i,j}^{-1}
\bigl(x_i^{n_i}-\lambda_{i,j}\bigr)
=
\tau_{i,j}(B_V)\tau_{i,j}(A_V).
\]
Let $c\in\mathbb E_i^*$ be the leading coefficient of $\tau_{i,j}(A_V)$, and put $R_V:=c^{-1}\tau_{i,j}(A_V)$, which is monic of degree $r$. Then $h_{i,j}=\bigl(\lambda_{i,j}\tau_{i,j}(B_V)\,c\bigr)R_V$, so $R_V$ is a monic right divisor of $h_{i,j}$ of degree $r$.

We now prove the statement for $H_i$. If $d=\nu_i$, we simply take $P_i=H_i$. Hence, assume that $d<\nu_i$. Write
\[
d=qn_i+r,
\qquad
0\le q<t_i,
\qquad
0\le r<n_i.
\]
By the preceding argument, we can choose a monic right divisor $R(x_i)$ of $h_{i,q+1}(x_i)$
of degree $r$. Thus, there exists a polynomial $S(x_i)$ such that
\[
h_{i,q+1}(x_i)=S(x_i)R(x_i).
\]
Define
\[
C(x_i)
=
\prod_{j=1}^{q}h_{i,j}(x_i)
\quad\text{and}\quad
P_i(x_i)=C(x_i)R(x_i).
\]
Since $C$ and $R$ are monic, $\deg P_i=qn_i+r=d$ and $P_i$ is monic.

It remains to show that $P_i$ right-divides $H_i$. Set
\[
D(x_i)=\prod_{j=q+2}^{t_i}h_{i,j}(x_i).
\]
The polynomials $h_{i,j}$ are central, and therefore $C$ and $D$ commute with every element of $\mathbb E_i[x_i;\sigma_i]$. Consequently,
\[
H_i=C\,h_{i,q+1}\,D=D\,h_{i,q+1}\,C=
D\,S\,R\,C=D\,S\,C\,R=D\,S\,P_i.
\]
Finally, $P_i$ right-divides $H_i$, as desired.
\end{proof}
We now prove that the lower bound in Theorem \ref{thm:sum-rank-alon-furedi} is sharp.

\begin{theorem}\label{thm:minimum-distance}
Let $0\le r<\sum_{i=1}^{m}(\nu_i-1)$ and assume $\nu_1\ge\nu_2\ge\cdots\ge\nu_m$.

Then the minimum sum-rank distance of
$\SRRM_{\mathbb L/\mathbb K}(r,m)$ is
\[
d_{\min}
\bigl(
\SRRM_{\mathbb L/\mathbb K}(r,m)
\bigr)
=
d(r,\bm\nu).
\]
\end{theorem}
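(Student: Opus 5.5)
The proof splits into a lower bound, already in hand, and a matching upper bound, obtained by exhibiting an explicit codeword.

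\emph{Lower bound.} Let $F$ be a nonzero codeword, taken as its reduced representative, so $\deg F\le r$. Theorem~\ref{thm:sum-rank-alon-furedi} gives $\srk(F)\ge\prod_i(\nu_i-u_i)$, where $\bm x^{\bm u}=\LM(F)$ satisfies $|\bm u|=\deg F\le r$ and $\bm u\in\Delta(\bm\nu)$. By the definition of $d(r,\bm\nu)$ in Notation~\ref{notation min dist}, this product is at least $d(r,\bm\nu)$. Hence $d_{\min}\ge d(r,\bm\nu)$.

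\emph{Upper bound.} Choose $\bm u\in\Delta(\bm\nu)$ with $|\bm u|\le r$ attaining the minimum in $d(r,\bm\nu)$; by Lemma~\ref{lem:distance-function} one may take $\bm u=(0,\dots,0,\ell,\nu_{s+1}-1,\dots,\nu_m-1)$. By Lemma~\ref{lem:existence monic right divisor}, pick monic right divisors $P_i(x_i)\in\mathbb E_i[x_i;\sigma_i]$ of $H_i$ with $\deg P_i=u_i$, and set $F:=P_1(x_1)\cdots P_m(x_m)$. Then $\deg F=|\bm u|\le r$. Moreover $F$ is already reduced, since $u_i<\nu_i$, and it is nonzero. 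So $F\in\SRRM_{\mathbb L/\mathbb K}(r,m)$.

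It remains to show $\srk(F)\le\prod_i(\nu_i-u_i)$. By Remark~\ref{rk=dimrispcolonideal}, $\srk(F)=\dim_{\mathbb L}\Lx/(H_\Lambda:F)$, so it suffices to find a left ideal $\mathcal I\subseteq(H_\Lambda:F)$ with $\dim_{\mathbb L}\Lx/\mathcal I=\prod_i(\nu_i-u_i)$. Write $H_i=Q_iP_i$ with $\deg Q_i=\nu_i-u_i$, and take $\mathcal I$ to be the left ideal generated by $Q_1(x_1),\dots,Q_m(x_m)$.

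\emph{Key step: $Q_iF\in H_\Lambda$.} Since $P_i\in\mathbb E_i[x_i;\sigma_i]$ and $P_k\in\mathbb E_k[x_k;\sigma_k]$, the two commute for $i\ne k$: $x_i$ fixes $\mathbb E_k$, $x_k$ fixes $\mathbb E_i$, and $x_i$ commutes with $x_k$. Likewise $Q_i$, whose coefficients lie in $\mathbb E_i$ (the division happens in $\mathbb E_i[x_i;\sigma_i]$ because $H_i$ has coefficients in $\mathbb K$), commutes with each $P_k$ for $k\ne i$. Therefore
\[
Q_iF=\Bigl(\prod_{k\ne i}P_k\Bigr)Q_iP_i=\Bigl(\prod_{k\ne i}P_k\Bigr)H_i\in H_\Lambda,
\]
so $\mathcal I\subseteq(H_\Lambda:F)$.

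\emph{Dimension count, the main obstacle.} We need $\dim_{\mathbb L}\Lx/\mathcal I=\prod_i(\nu_i-u_i)$; in fact $\le$ suffices for the upper bound. Use that each $Q_i$ is monic in $x_i$ with leading monomial $x_i^{\nu_i-u_i}$. Left division by $Q_1,\dots,Q_m$, eliminating one variable at a time, reduces every element modulo $\mathcal I$ to the span of $\{\bm x^{\bm v}\mid v_i<\nu_i-u_i\}$. The point to check is that reducing by $Q_i$ does not reintroduce high powers of $x_k$ for $k\ne i$. This holds because $Q_i$ involves only $x_i$, and left multiplication by monomials only twists coefficients. Hence $\dim_{\mathbb L}\Lx/\mathcal I\le\prod_i(\nu_i-u_i)$, which gives $\srk(F)\le\prod_i(\nu_i-u_i)=d(r,\bm\nu)$.

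Combining the two bounds yields $d_{\min}\bigl(\SRRM_{\mathbb L/\mathbb K}(r,m)\bigr)=d(r,\bm\nu)$.
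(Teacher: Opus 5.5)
Your proposal is correct and follows the paper's proof essentially step for step. The lower bound comes from the leading-monomial injectivity argument of Theorem~\ref{thm:sum-rank-alon-furedi}, and sharpness comes from $F=\prod_i P_i(x_i)$, a product of monic univariate right divisors of the $H_i$, for which the left ideal $\sum_i\Lx\,Q_i\subseteq(H_\Lambda:F)$ bounds $\srk(F)$ by $\prod_i\deg Q_i$. The only cosmetic difference is that you allow an arbitrary minimizer $\bm u$, as the paper itself does in the proof of Theorem~\ref{thm:lower-set-distance}, rather than fixing $\bm u$ through $(s,\ell)$.
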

\begin{proof}
Let $s$ and $\ell$ be determined uniquely by 
\[
r
=
\sum_{i=s+1}^{m}(\nu_i-1)+\ell,
\qquad
0\le\ell<\nu_s-1.
\]
For each $i \in \{s+1,\ldots, m\}$, choose a monic right divisor $P_i(x_i)\in \mathbb E_i[x_i;\sigma_i]$ of $H_i(x_i)$ such that $\deg P_i=\nu_i-1$.
Similarly, select a  monic right divisor $P_s(x_s)\in \mathbb E_s[x_s;\sigma_s]$ of $H_s(x_s)$ such that $\deg P_s=\ell$.

For $i<s$, set $P_i=1$. For every $i$, let $Q_i(x_i)$ be the polynomial such that $H_i(x_i)=Q_i(x_i)P_i(x_i).$
Define
\[
F
=
\prod_{i=1}^{m}P_i(x_i)
=
\prod_{i=s}^{m}P_i(x_i).
\]
Since
\[
\deg F=\ell+\sum_{i=s+1}^{m}(\nu_i-1)=r,
\]
$F$ is reduced and represents a nonzero codeword of $\SRRM_{\mathbb L/\mathbb K}(r,m)$.
Consider the left ideal
\[
\mathcal J = \sum_{i=1}^{m}\Lx Q_i.
\]
We claim that $\mathcal J\subseteq(H_\Lambda:F)$.
Let $g\in\mathcal J$. Then $g=\sum_{i=1}^{m}A_iQ_i$ for suitable $A_i\in\Lx$. Being the right quotient of $H_i$ by the monic polynomial $P_i$, $Q_i$ lies in $\mathbb E_i[x_i;\sigma_i]$. Since $\sigma_i$ fixes $\mathbb E_k$ and $\sigma_k$ fixes $\mathbb E_i$, $Q_i$ commutes with $P_k$ whenever $k\neq i$, and we have
\[
Q_iF=P_1\cdots P_{i-1}(Q_iP_i)P_{i+1}\cdots P_m=
P_1\cdots P_{i-1}H_iP_{i+1}\cdots P_m\in H_\Lambda.
\]
Consequently, $gF=\sum_{i=1}^{m}A_i(Q_iF)\in H_\Lambda$.

Thus, $g\in(H_\Lambda:F)$, and hence $\mathcal J\subseteq(H_\Lambda:F)$.
It follows that
\begin{equation}\label{diseg colon ideal}
\dim_{\mathbb L}
\frac{\Lx}{(H_\Lambda:F)}
\le
\dim_{\mathbb L}
\frac{\Lx}{\mathcal J}.
\end{equation}

Since each $Q_i$ is monic and depends only on $x_i$, subtracting
suitable left multiples of $Q_i$ reduces the degree in $x_i$
without increasing the degrees in the other variables.
Applying this procedure successively for $i=1,\ldots,m$, every
polynomial in $\Lx$ is congruent modulo $\mathcal J$ to an
$\mathbb L$-linear combination of the monomials
\[
x_1^{a_1}\cdots x_m^{a_m},
\qquad
0\le a_i<\deg Q_i
\quad\text{for every }i.
\]
Their residue classes therefore span $\Lx/\mathcal J$ over
$\mathbb L$, and hence
\[
\dim_{\mathbb L}\frac{\Lx}{\mathcal J}
\le
\prod_{i=1}^{m}\deg Q_i.
\]
By construction,
\[
\deg Q_i
=
\begin{cases}
\nu_i, & i<s,\\
\nu_s-\ell, & i=s,\\
1, & i>s.
\end{cases}
\]
Therefore,
\[
\begin{aligned}
\dim_{\mathbb L}
\frac{\Lx}{\mathcal J}
&\le
\left(
\prod_{i=1}^{s-1}\nu_i
\right)
(\nu_s-\ell)
\left(
\prod_{i=s+1}^{m}1
\right)\\
&=
(\nu_s-\ell)
\prod_{i=1}^{s-1}\nu_i=d(r,\bm\nu).
\end{aligned}
\]
Recalling Remark \ref{rk=dimrispcolonideal} and 
Corollary \ref{corollario moore dickson e srk}, and using \eqref{diseg colon ideal} we conclude
\[
\srk(F)=\dim_{\mathbb L}\frac{\Lx}{(H_\Lambda:F)}
\le
d(r,\bm\nu).
\]
On the other hand, since $F\ne0$ and $\deg F=r$, Theorem \ref{thm:sum-rank-alon-furedi} gives $\srk(F)\ge d(r,\bm\nu).$

In conclusion, the lower bound is attained by  $F$, i.e., $\srk(F)=d(r,\bm\nu)$.
\end{proof}
\begin{example}[Continuation of Example~\ref{ex:running-example}]
Consider the code $\SRRM_{\mathbb L/\mathbb K}(2,2)$ of order $r=2$. In this case, the integers $s$ and $\ell$ appearing in the minimum-distance formula are $s=2$ and $\ell=r=2$. Therefore,
\[
d(2,(4,4))=(\nu_2-\ell)\nu_1=(4-2)\cdot4=8.
\]
Following the construction in the proof of the previous theorem, set $P_1(x_1)=1$ and choose $P_2(x_2)=x_2^2-1$ as the monic right divisor of $H_2(x_2)$. Thus, $\deg P_2=2=\ell$ and
\[
H_2(x_2)=Q_2(x_2)P_2(x_2),
\qquad\text{where}\quad
Q_2(x_2)=x_2^2+2.
\]
Moreover,
\[
H_1(x_1)=Q_1(x_1)P_1(x_1),
\qquad\text{where}\quad
Q_1(x_1)=H_1(x_1).
\]
The polynomial constructed in the proof is therefore $F=P_1P_2=x_2^2-1$ and it is a nonzero, reduced polynomial of degree $2$. Hence, it represents a codeword of
$\SRRM_{\mathbb L/\mathbb K}(2,2)$.

Consider the left ideal
\[
\mathcal J=\mathbb L[x_1,x_2;\sigma_1,\sigma_2]Q_1+\mathbb L[x_1,x_2;\sigma_1,\sigma_2]Q_2=
\mathbb L[x_1,x_2;\sigma_1,\sigma_2]H_1+\mathbb L[x_1,x_2;\sigma_1,\sigma_2](x_2^2+2).
\]
As seen in the proof of Theorem~\ref{thm:minimum-distance}, $\mathcal J\subseteq(H_\Lambda:F)$, and $\mathbb L[x_1,x_2;\sigma_1,\sigma_2]/\mathcal J$ is spanned by the classes of $x_1^{a_1}x_2^{a_2}$ with $a_1<\deg Q_1=4$ and $a_2<\deg Q_2=2$. Hence
\[
\srk(F)=\dim_{\mathbb L}\frac{\mathbb L[x_1,x_2;\sigma_1,\sigma_2]}{(H_{\Lambda}:F)}\le\dim_{\mathbb L}\frac{\mathbb L[x_1,x_2;\sigma_1,\sigma_2]}{\mathcal J}\le4\cdot2=8,
\]
while Theorem~\ref{thm:sum-rank-alon-furedi} gives $\srk(F)\ge(\nu_2-\ell)\nu_1=8$. This can also be checked directly: on the block indexed by $\bm j$, $\bm\Phi(F)$ equals $(\alpha_{2,j_2}\sigma_2)^2-1=\lambda_{2,j_2}-1$, which is $0$ for $j_2=1$ and $-3$ for $j_2=2$. Thus, $\bm\Phi(F)$ vanishes on two blocks and is invertible on the other two, and $d_{\min}\bigl(\SRRM_{\mathbb L/\mathbb K}(2,2)\bigr)=8$.
\end{example}

\subsection{Lower-set sum-rank Reed--Muller-type codes}
\label{subsec:lower-set-codes}

The proofs of Theorems~\ref{thm:sum-rank-alon-furedi} and~\ref{thm:minimum-distance} only use that the support is closed under componentwise decrease. We record the resulting extension, which is used in Section~\ref{subsec:comparison-LRM}.
\begin{notation}
For $\bm u=(u_1,\ldots,u_m),\bm v=(v_1,\ldots,v_m)\in\mathbb{N}^m$, we write $\bm v\leq \bm u$ if $v_i\leq u_i$ for every $i\in\{1,\ldots,m\}$.
\end{notation}
\begin{definition}
\label{def:lower-set}
A subset
\[
D\subseteq\Delta(\bm\nu)=\Delta(\nu_1)\times\cdots\times\Delta(\nu_m)
\]
is called a \textbf{lower set} if
\[
\bm u\in D,\quad \bm v\in\Delta(\bm\nu),\quad \bm v\leq \bm u
\qquad\Longrightarrow\qquad\bm v\in D.
\]
For a nonempty lower set $D\subseteq\Delta(\bm\nu)$, we define the
\textbf{lower-set sum-rank Reed--Muller-type code associated with $D$} as
\[
\SRRM_{\LK}(D):=\Span_\mathbb L\left\{\bm x^{\bm u} \bmod (H_1,\dots, H_m)\ \middle| \,\bm u\in D\right\}.
\]
\end{definition}
Since all the elements of $D$ belong to $\Delta(\bm\nu)$, the corresponding monomials are reduced representatives in $\Lx/H_\Lambda$. Consequently, their residue classes are linearly independent over $\mathbb L$, and
hence
\[
\dim_\mathbb L\SRRM_{\LK}(D)=\#D.
\]
We now show that the minimum sum-rank distance can also be determined exactly.

\begin{theorem}
\label{thm:lower-set-distance}
Let $D\subseteq\Delta(\bm\nu)$ be a nonempty lower set. Then
\[
d_{\min}\left(\SRRM_{\LK}(D)\right)=\min_{\bm u\in D}\prod_{i=1}^m(\nu_i-u_i).
\]
\end{theorem}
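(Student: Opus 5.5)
We prove the two inequalities separately, following the pattern of Theorems~\ref{thm:sum-rank-alon-furedi} and~\ref{thm:minimum-distance}. Put $\delta(D):=\min_{\bm u\in D}\prod_{i=1}^m(\nu_i-u_i)$.

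For the lower bound, let $0\neq F\in\SRRM_{\LK}(D)$ and write $F$ as its reduced representative. Then $\mathrm{Supp}(F)\subseteq D$, because the monomials $\bm x^{\bm u}$ with $\bm u\in D\subseteq\Delta(\bm\nu)$ are already reduced. Fix any monomial order refining the total degree and let $\LM(F)=\bm x^{\bm u}$; then $\bm u\in D$. The injectivity argument in the proof of Theorem~\ref{thm:sum-rank-alon-furedi} uses only that $\bm x^{\bm u}$ is the leading monomial of a reduced $F$. Hence right multiplication by $F$ is injective on $\mathcal T_{\bm u}=\Span_{\mathbb L}\{\bm x^{\bm v}\mid v_i<\nu_i-u_i\}$. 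By Remark~\ref{rem:metric-quotient},
\[
\srk(F)=\rk_{\mathbb L}D_{\mathcal B_{\mathrm{mon}}}(F)\ge\dim_{\mathbb L}\mathcal T_{\bm u}=\prod_{i=1}^m(\nu_i-u_i)\ge\delta(D).
\]
The lower-set property is not needed for this direction.

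For the upper bound, choose $\bm u\in D$ attaining $\delta(D)$. By Lemma~\ref{lem:existence monic right divisor}, for each $i$ pick a monic right divisor $P_i(x_i)\in\mathbb E_i[x_i;\sigma_i]$ of $H_i$ with $\deg P_i=u_i$, and write $H_i=Q_iP_i$, so that $\deg Q_i=\nu_i-u_i$. Set $F:=P_1(x_1)\cdots P_m(x_m)$.

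\emph{Membership.} Each $P_i$ is a combination of powers $x_i^{a}$ with $a\le u_i<\nu_i$. Multiplying out (the factors lie in distinct $\mathbb E_i[x_i;\sigma_i]$ and commute), $F$ is an $\mathbb L$-combination of monomials $\bm x^{\bm v}$ with $\bm v\le\bm u$. All such $\bm v$ lie in $\Delta(\bm\nu)$, so $F$ is reduced. Since $D$ is a lower set and $\bm u\in D$, every such $\bm v$ lies in $D$, so $F\in\SRRM_{\LK}(D)$. Moreover $F\neq0$, since $F$ is monic with leading monomial $\bm x^{\bm u}$. This is the only place the lower-set hypothesis is used, and it is exactly where it is needed.

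\emph{Weight.} As in the proof of Theorem~\ref{thm:minimum-distance}, $Q_i$ commutes with $P_k$ for $k\neq i$, so $Q_iF\in H_\Lambda$. Hence $\mathcal J=\sum_i\Lx\,Q_i\subseteq(H_\Lambda:F)$. The same reduction argument gives $\dim_{\mathbb L}\Lx/\mathcal J\le\prod_i\deg Q_i=\prod_i(\nu_i-u_i)$. By Remark~\ref{rk=dimrispcolonideal} and Corollary~\ref{corollario moore dickson e srk},
\[
\srk(F)=\dim_{\mathbb L}\Lx/(H_\Lambda:F)\le\delta(D).
\]

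Combining the two directions gives $d_{\min}=\delta(D)$. The main point to verify carefully is the membership step: the expansion of $\prod_iP_i$ must produce only exponents $\bm v\le\bm u$, which holds because the product is a commuting product of univariate factors and the twisting in the multiplication affects only coefficients, not exponents. The rest is a direct transcription of the earlier proofs, with $\bm u$ now an arbitrary minimizer in $D$ rather than the specific exponent determined by $s$ and $\ell$.
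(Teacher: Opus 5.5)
Your proposal is correct and follows essentially the same route as the paper. For the lower bound you use the leading-monomial injectivity of right multiplication on $\mathcal T_{\bm u}$. For the upper bound you take the product of univariate monic right divisors $P_i$ of degree $u_i$ at a minimizer $\bm u\in D$, use the lower-set property only to place $\prod_i P_i$ in the code, and bound the weight via $\sum_i\Lx\,Q_i\subseteq(H_\Lambda:F)$.
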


\begin{proof}
Set
\[
d(D,\bm\nu):=\min_{\bm u\in D}\prod_{i=1}^m(\nu_i-u_i).
\]
We first prove that every nonzero codeword has sum-rank weight at least $d(D,\bm\nu)$. Let $0\neq F\in\SRRM_{\LK}(D)$. Fix a monomial order refining the total degree and write $\LM(F)=\bm x^{\bm u},$ with $\bm u=(u_1,\ldots,u_m)\in D.$
The injectivity argument used in the proof of Theorem \ref{thm:sum-rank-alon-furedi}, applied to
\[
T_{\bm u}:=\Span_\mathbb L\left\{\bm x^{\bm v}\mid 0\leq v_i<\nu_i-u_i,\;i=1,\dots,m\right\}
\]
gives
\[
\srk(F)\geq \dim_\mathbb L T_{\bm u}=\prod_{i=1}^m(\nu_i-u_i)\geq d(D,\bm\nu).
\]
Hence, 
\[
d_{\min}\bigl(\SRRM_{\LK}(D)\bigr)\geq d(D,\bm\nu).
\]
It remains to prove that this bound is attained. Choose $\bm u=(u_1,\ldots,u_m)\in D$ such that
\[
d(D,\bm\nu)=\prod_{i=1}^m(\nu_i-u_i).
\]
By Lemma \ref{lem:existence monic right divisor}, for every $i\in\{1,\ldots,m\}$ there exists a monic right divisor $P_i(x_i)\in \mathbb E_i[x_i;\sigma_i]$ of $H_i(x_i)$ satisfying $\deg P_i=u_i$. Write $H_i(x_i)=Q_i(x_i)P_i(x_i)$ and define
\[
F_{\bm u}:=\prod_{i=1}^m P_i(x_i).
\]
Every exponent vector $\bm v$ occurring in $F_{\bm u}$ satisfies
$\bm v\leq\bm u$. Since $D$ is a lower set, $\mathrm{Supp}(F_{\bm u})\subseteq D,$ and therefore $F_{\bm u}\in\SRRM_{\LK}(D)$.

Finally, following the same reasoning and the colon-ideal argument used in the proof of Theorem \ref{thm:minimum-distance} we obtain
\[
\srk(F_{\bm u})\leq\prod_{i=1}^m\deg Q_i=\prod_{i=1}^m(\nu_i-u_i)=d(D,\bm\nu).
\]
Together with the lower bound, this proves the claim.
\end{proof}

Theorem~\ref{thm:lower-set-distance} contains the minimum-distance formula of
Theorem \ref{thm:minimum-distance} as a special case. Indeed, for
\[
D_r:=\left\{\bm u\in\Delta(\bm\nu) \ \middle|\ |\bm u|\leq r\right\},
\]
we have
\[
\SRRM_{\LK}(D_r)=\SRRM_{\LK}(r,m),
\]
while
\[
\min_{\bm u\in D_r}\prod_{i=1}^m(\nu_i-u_i)=d(r,\bm\nu).
\]

A particularly natural class of lower sets is obtained by imposing a weighted degree condition.

\begin{definition}
\label{def:weighted-srrm}
Let $\omega=(\omega_1,\ldots,\omega_m)\in\mathbb{R}_{>0}^m$ and let $c\in\mathbb Z_{\geq 0}$. Define
\[
D_{\omega}(c):=\left\{\bm u\in\Delta(\bm\nu) \ \middle|\ \sum_{i=1}^m\omega_i u_i\leq c\right\}.
\]
The associated code
\[
\SRRM_{\LK}^{\omega}(c,m):=\SRRM_{\LK}\bigl(D_{\omega}(c)\bigr)
\]
will be called a \textbf{weighted sum-rank Reed--Muller-type code}.
\end{definition}

Since all the weights $\omega_i$ are positive, the set $D_\omega(c)$ is a lower set. Hence, Theorem~\ref{thm:lower-set-distance} immediately gives the following.

\begin{corollary}
\label{cor:weighted-distance}
With the notation of Definition~\ref{def:weighted-srrm}, we have
\[
\dim_\mathbb L\SRRM_{\LK}^{\omega}(c,m)=\#\left\{\bm u\in\Delta(\bm\nu)
\ \middle|\ \sum_{i=1}^m\omega_i u_i\leq c\right\}
\]
and
\[
d_{\min}\left(\SRRM_{\LK}^{\omega}(c,m)\right)=\min\left\{\prod_{i=1}^m(\nu_i-u_i) \ \middle|\ \bm u\in\Delta(\bm\nu),\ \sum_{i=1}^m\omega_i u_i\leq c\right\}.
\]
\end{corollary}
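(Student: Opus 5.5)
The plan is to reduce Corollary~\ref{cor:weighted-distance} to Theorem~\ref{thm:lower-set-distance} and to the dimension count for lower-set codes stated just before that theorem. Everything hinges on showing that $D_\omega(c)$ is a nonempty lower set in $\Delta(\bm\nu)$. After that, both formulas follow by specialising the general lower-set results to $D=D_\omega(c)$.

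\textbf{Nonemptiness.} Since $c\in\mathbb Z_{\ge0}$, the origin $\bm 0\in\Delta(\bm\nu)$ satisfies $\sum_i\omega_i\cdot 0=0\le c$. Hence $\bm 0\in D_\omega(c)$.

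\textbf{Lower-set property.} Take $\bm u\in D_\omega(c)$ and $\bm v\in\Delta(\bm\nu)$ with $\bm v\le\bm u$. Every $\omega_i$ is positive, so $\omega_iv_i\le\omega_iu_i$ for each $i$. Summing gives
\[
\sum_{i=1}^m\omega_iv_i\le\sum_{i=1}^m\omega_iu_i\le c,
\]
so $\bm v\in D_\omega(c)$. Positivity is used only to obtain $\omega_iv_i\le\omega_iu_i$; nonnegative weights would already suffice.

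\textbf{Dimension.} The monomials $\bm x^{\bm u}$ with $\bm u\in D_\omega(c)\subseteq\Delta(\bm\nu)$ are reduced, so they are $\mathbb L$-linearly independent modulo $H_\Lambda$ by Remark~\ref{rem:universal-grobner-basis}. Therefore
\[
\dim_{\mathbb L}\SRRM^{\omega}_{\LK}(c,m)=\#D_\omega(c).
\]

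\textbf{Minimum distance.} Theorem~\ref{thm:lower-set-distance} applied to $D=D_\omega(c)$ gives the minimum distance as $\min_{\bm u\in D_\omega(c)}\prod_{i=1}^m(\nu_i-u_i)$, which is exactly the stated formula.

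No genuine obstacle arises. The only point needing a check is that $D_\omega(c)$ is nonempty, which holds because $c\ge0$, and this is what allows Theorem~\ref{thm:lower-set-distance} to be invoked.
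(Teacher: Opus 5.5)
Your proposal is correct and follows the paper's argument. The paper likewise notes that positive weights make $D_\omega(c)$ a lower set, then applies Theorem~\ref{thm:lower-set-distance} and the dimension count $\#D$ for lower-set codes; your explicit check that $\bm 0\in D_\omega(c)$ (so the set is nonempty) supplies a detail the paper leaves implicit.
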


\begin{remark}
\label{rem:weighted-total-degree}
For $\omega=(1,\ldots,1)$ and every admissible order $r$, taking $c=r$ gives $D_{\omega}(r)=D_r$.
Consequently,
\[
\SRRM^{\omega}_{\LK}(r,m)=\SRRM_{\LK}(r,m).
\]
Thus, the weighted construction contains the codes of Definition~\ref{def: srk reed muller codes} as a special case.

In the finite-field setting considered in Section \ref{subsec:comparison-LRM}, weighted supports of
the form
\[
    \frac{u_1}{n_1}+\cdots+\frac{u_m}{n_m}\leq c
\]
arise naturally from lattice simplices. We will return to this specialization
when comparing our construction with the linearized Reed--Muller codes of
Berardini and Caruso.
\end{remark}

\section{Comparison with previous constructions}\label{sec: comparisons}

The aim of this section is to position the family introduced above
with respect to some classical constructions in the Hamming, rank,
and sum-rank metrics. 
\subsection{Recovering classical code families}
By specializing the
parameters $N,t$ and $m$, we recover several known families of codes.

For each $i\in\{1,\ldots,m\}$, let
\[
\Lambda_i:=\{\lambda_{i,1},\ldots,\lambda_{i,t_i}\}\subseteq\mathbb K^*.
\]
It will be convenient to compare an isometrically equivalent version of our code defined as
\[
\mathcal C_r:=\bm\Phi\left(\SRRM_{\mathbb L/\mathbb K}(r,m)\right)\subseteq(\mathbb L[G])^t
\]
directly with previously known evaluation codes.
Table~\ref{tab:specializations} summarizes the main specializations.
\begin{table}[H]
\centering
\begin{tabularx}{\textwidth}{@{}c c X@{}}
\toprule
\textbf{Parameter choice}
&
\textbf{Metric}
&
\textbf{Resulting code family}
\\
\midrule
$m=1$
&
Sum-rank
&
Linearized Reed--Solomon codes with $t$ blocks of length $N$ \cite{M-P2018}
\\[2mm]

$t=1$
&
Rank
&
 $\bm{\sigma}$-Reed--Muller codes  \cite{ACLN}
\\[2mm]

$m=t=1$
&
Rank
&
Gabidulin codes over cyclic Galois extensions \cite{G85}
\\[2mm]

$N=1$
&
Hamming
&
affine Cartesian codes on grids in $(\mathbb K^*)^m$ \cite{GT2013,LRMV2014}
\\[2mm]

$N=m=1$
&
Hamming
&
Reed--Solomon codes with nonzero evaluation points \cite{RS60}
\\
\bottomrule
\end{tabularx}
\caption{Relevant specializations of the proposed construction.}
\label{tab:specializations}
\end{table}
These identifications displayed in Table~\ref{tab:specializations} are made precise in the following proposition.
\begin{proposition}\label{codes specializations}
Let
\[
\mathcal C_r:=\bm\Phi\left(\SRRM_{\mathbb L/\mathbb K}(r,m)\right)\subseteq(\mathbb L[G])^t
\]
be the code isometrically equivalent to $\SRRM_{\mathbb L/\mathbb K}(r,m)$ through the representation isomorphism $\bm\Phi$. Then the following statements hold:
\begin{enumerate}[label=\textup{(\arabic*)}]
    \item if $m=1$, then $\mathcal C_r$ is a linearized Reed--Solomon code with $t$ blocks, each of length $N$ \cite{M-P2018};

    \item if $t=1$, then $\mathcal C_r$ coincides with the
    $\bm\sigma$-Reed--Muller code of order $r$ introduced  in \cite{ACLN};

    \item if $N=1$, then $\mathcal C_r$ is an affine Cartesian
    evaluation code, endowed with the Hamming metric \cite{GT2013,LRMV2014}.
\end{enumerate}
\end{proposition}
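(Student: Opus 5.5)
In all three cases the plan is to compute $\bm\Phi$ on the monomial spanning set of $\SRRM_{\mathbb L/\mathbb K}(r,m)$ explicitly and match the result with the known generators. By Theorem~\ref{representation theorem} and the computation in the proof of Theorem~\ref{rkMoore=rkDickson}, for $\bm i\in\Delta(\bm\nu)$ and $\bm j\in J$ we have
\[
\psi\bigl(\phi_{\bm j}(\bm x^{\bm i})\bigr)=\N^{\bm i}(\bm\alpha^{(\bm j)})\,\bm\sigma^{\bm i}.
\]
Hence
\[
\mathcal C_r=\Span_{\mathbb L}\Bigl\{\bigl(\N^{\bm i}(\bm\alpha^{(\bm j)})\bm\sigma^{\bm i}\bigr)_{\bm j\in J}\;\Bigm|\;\bm i\in\Delta(\bm\nu),\ |\bm i|\le r\Bigr\}.
\]
Each specialization then amounts to reading this formula in the given parameter range. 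We also compose with the isometry $\ev_{\mathcal B_{\ev}^{(t)}}$ whenever the reference family is defined in vector form.

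\textbf{Case (1), $m=1$.} Here $G=\langle\sigma\rangle$ is cyclic, $\mathbb E_1=\mathbb L$, and $\nu_1=Nt$, so $0\le r<Nt$ and all $i\le r$ lie in $\Delta(\nu_1)$. Evaluating the $i$-th generator at $\beta_1,\ldots,\beta_N$ gives the row
\[
\bigl(\N^{\,i}_{\mathbb L/\mathbb K}(\alpha_j)\,\sigma^i(\beta_k)\bigr)_{1\le j\le t,\,1\le k\le N}.
\]
These are exactly the generating rows of the linearized Reed--Solomon code recalled in the preliminaries. The $\alpha_j=\alpha_{1,j}$ have pairwise distinct norms by hypothesis, so the two codes coincide.

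\textbf{Case (2), $t=1$.} Here $J$ is a singleton, $\nu_i=n_i$, and the generators are $\N^{\bm i}(\bm\alpha)\bm\sigma^{\bm i}$ with $\bm i\in\Delta(\bm n)$ and $|\bm i|\le r$. Since $\N^{\bm i}(\bm\alpha)\in\mathbb L^*$ and we take $\mathbb L$-spans, the scalar can be absorbed. The span therefore equals $\Span_{\mathbb L}\{\bm\sigma^{\bm u}\mid \bm u\in\Delta(\bm n),\ |\bm u|\le r\}$, which is the code of \cite{ACLN}. One point to mention: $\mathcal C_r$ does not depend on the choice of $\bm\alpha$ when $t=1$.

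\textbf{Case (3), $N=1$.} Here $\mathbb L=\mathbb K$, every $\sigma_i=\id$, every $n_i=1$, and $\mathbb L[G]=\mathbb K$. The skew ring becomes the commutative ring $\mathbb K[x_1,\ldots,x_m]$, and $\alpha_{i,j}=\lambda_{i,j}$. Then $\bm\Phi(F)=\bigl(F(\lambda_{1,j_1},\ldots,\lambda_{m,j_m})\bigr)_{\bm j\in J}$, which is the Cartesian evaluation $\ev_S$ on $S=\Lambda_1\times\cdots\times\Lambda_m$. The ideal $H_\Lambda=(\prod_j(x_i-\lambda_{i,j}))_i$ is its kernel. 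Each block is a $1\times1$ matrix, so the sum-rank weight is the Hamming weight. Since $\nu_i=t_i=\#\Lambda_i$, the code is the affine Cartesian code of order $r$ on $S$.

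The only delicate points are bookkeeping. In (1) we must match the indexing and normalization conventions of \cite{M-P2018}: the truncated norm must be $\prod_{a<i}\sigma^a(\alpha)$, and the evaluation points form a $\mathbb K$-basis. In (3) we must check that the truncated norm $\N^{\bm i}(\bm\alpha^{(\bm j)})$ reduces to the ordinary monomial value $\prod_\ell\lambda_{\ell,j_\ell}^{i_\ell}$. I expect (1) to need the most care, but no genuine obstacle.
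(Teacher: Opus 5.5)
Your proposal is correct and follows the paper's proof essentially step by step. You compute $\bm\Phi$ on the standard monomials as $\N^{\bm i}(\bm\alpha^{(\bm j)})\bm\sigma^{\bm i}$, then for $m=1$ compose with $\ev_{\mathcal B_{\ev}^{(t)}}$ to match the linearized Reed--Solomon generator rows, for $t=1$ absorb the nonzero truncated norms into the $\mathbb L$-span, and for $N=1$ read $\bm\Phi$ as Cartesian evaluation on $\Lambda_1\times\cdots\times\Lambda_m$; the only addition in the paper is that in case (1) it also derives dimension $r+1$ and minimum distance $Nt-r$ from Theorem~\ref{thm:minimum-distance} to exhibit the MSRD property, which the identification itself does not need.
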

\begin{proof}
\begin{enumerate}[label=\textup{(\arabic*)}]
    \item Suppose first that $m=1$. In this case $\mathbb L/\mathbb K$ is cyclic, with $G=\langle\sigma\rangle$. Fix a $\mathbb K$-basis $\mathcal B_{\mathrm{ev}}= (\beta_1,\ldots,\beta_N)$ of $\mathbb L$. For every $h\in\{0,\ldots,r\}$, the row obtained from the monomial $x^h$ has entries
\[
\N^h(\alpha_j)\sigma^h(\beta_k),
\qquad
1\leq j\leq t,
\quad
1\leq k\leq N.
\]
These are the entries of the generator matrix of the linearized Reed--Solomon code of \cite{M-P2018} with the same $\alpha_j$ and $\beta_k$, so $\ev_{\mathcal B_{\mathrm{ev}}^{(t)}}(\mathcal C_r)$ is that code. Moreover, $\dim_{\mathbb L}\mathcal C_r=r+1$, and for $0\le r<Nt-1$ Theorem~\ref{thm:minimum-distance} (with $m=1$, $\nu_1=Nt$, $s=1$, $\ell=r$) gives
\[
d_{\min}(\mathcal C_r)=Nt-r=Nt-(r+1)+1;
\]
for $r=Nt-1$ the code is the whole space and $d_{\min}=1$. Hence, $\mathcal C_r$ attains the sum-rank Singleton bound.
\item Assume now that $t=1$. Then $\nu_i=n_i$ for every $i$, while the ambient space consists of a single copy of $\mathbb L[G]$. Writing $\bm\alpha=(\alpha_{1,1},\ldots,\alpha_{m,1}),$ for every $\bm u=(u_1,\ldots,u_m)\in\Delta(\bm n)$ we have $\bm\Phi\left(x_1^{u_1}\cdots x_m^{u_m}\right)=\N^{\bm u}(\bm\alpha)\bm\sigma^{\bm u}.$
Since $\N^{\bm u}(\bm\alpha)\neq0$, multiplying each generator
by this scalar does not change its $\mathbb L$-linear span. Therefore,
\[
\mathcal C_r=\Span_{\mathbb L}\left\{\bm\sigma^{\bm u}\;\middle|\;\bm u\in\Delta(\bm n),\ |\bm u|\leq r\right\},
\]
which is exactly the rank-metric $\bm\sigma$-Reed--Muller code of \cite{ACLN}.
\item  Finally, assume that $N=1$. Then $\mathbb L=\mathbb K$, the Galois group is trivial, i.e., $n_i=1$ and $\sigma_i=\id$ for every $i$. Consequently, the skew polynomial ring reduces to the ordinary commutative polynomial ring $\mathbb K[x_1,\ldots,x_m]$ and
\[
\nu_i=t_i,\qquad H_i(x_i)=\prod_{j_i=1}^{t_i}(x_i-\lambda_{i,j_i}).
\]
Moreover, $\mathbb L[G]\cong\mathbb K$. Since $\mathbb E_i=\mathbb K$, the norm map $\N_{\mathbb E_i/\mathbb K}$ is the identity, and therefore $\lambda_{i,j}=\alpha_{i,j}.$ 

For every $\bm j=(j_1,\ldots,j_m)\in J$, the corresponding
component of $\bm\Phi$ is
\[
\bm\Phi_{\bm j}(F)=
\psi\left(
F(\alpha_{1,j_1}x_1,\ldots,\alpha_{m,j_m}x_m)
\right)=F(\alpha_{1,j_1},\ldots,\alpha_{m,j_m})=
F(\lambda_{1,j_1},\ldots,\lambda_{m,j_m}),
\]
because $\psi(x_i)=\sigma_i=\id_{\mathbb K}$. Hence,
\[
\mathcal C_r=\left\{\left(F(\lambda_{1,j_1},\ldots,\lambda_{m,j_m})\right)_{\bm j\in J}\;\middle|\;F\in\mathbb K[x_1,\ldots,x_m],\ \deg F\le r\right\}.
\]
Thus, $\mathcal C_r$ is precisely the affine Cartesian evaluation
code associated with the grid $\Lambda_1\times\cdots\times\Lambda_m$. Since every block has size $1$, the sum-rank metric reduces to the Hamming metric.
\end{enumerate}
\end{proof}    
\begin{remark}
The intersections of the parameter regimes considered in
Proposition \ref{codes specializations} recover two other classical code families.

First, if $m=t=1$, then $G=\langle\sigma\rangle$ is cyclic of order $N$, and 
\[
\mathcal C_r=\Span_{\mathbb L}\left\{\id,\sigma,\sigma^2,\dots,\sigma^r\right\}
\]
is a Gabidulin code of dimension $r+1$ over the cyclic Galois
extension $\mathbb L/\mathbb K$ \cite{G85}.

Second, if $N=m=1$, then $\mathbb L=\mathbb K$, $G$ is trivial, and the construction reduces to
\[
\mathcal C_r=\left\{\left(F(\lambda_1),\ldots,F(\lambda_t)\right)\;\middle|\;F\in\mathbb K[x],\ \deg F\le r\right\}.
\]
Since the elements $\lambda_1,\dots,\lambda_t$ are pairwise distinct and $0\le r\le t-1$, this is a Reed--Solomon code with parameters $[t,r+1,t-r]_{\mathbb K}$ \cite{RS60}.
\end{remark}

\subsection{Comparison with linearized Reed--Muller codes}
\label{subsec:comparison-LRM}

Berardini and Caruso introduced in \cite{BC2025} linearized Reed--Muller codes in the sum-rank metric, obtained by evaluating multivariate Ore polynomials over finite fields. We show that, after a monomial change of variables, each of their codes is the image under $\bm\Phi$ of a monomial code in one of our quotient algebras (Proposition~\ref{prop:BC-identification}). This determines the dimension of all their codes and reduces the comparison of the two families to a comparison of monomial supports. Whenever the support becomes a lower set, Theorem~\ref{thm:lower-set-distance} gives the exact minimum distance.

\subsubsection{The codes of Berardini and Caruso}

Let $\mathbb K=\F_q$ and $\mathbb L=\F_{q^N}$, and let $\varphi\colon a\mapsto a^q$ be the Frobenius automorphism, so that $G=\langle\varphi\rangle$. Fix $\e=(e_1,\ldots,e_m)\in\mathbb Z^m$ with $\gcd(e_1,\ldots,e_m,N)=1$, and let
\[
\mathcal A_{\e}:=\mathbb L[X_1^{\pm1},\ldots,X_m^{\pm1};\varphi^{e_1},\ldots,\varphi^{e_m}]
\]
be the Laurent Ore polynomial ring in which the variables commute and $X_ia=\varphi^{e_i}(a)X_i$ for $a\in\mathbb L$. For $\u\in\mathbb Z^m$ we write $\bm X^{\u}:=X_1^{u_1}\cdots X_m^{u_m}$, so that $\bm X^{\u}\bm X^{\v}=\bm X^{\u+\v}$ and $\bm X^{\u}a=\varphi^{\e\cdot\u}(a)\bm X^{\u}$, where $\e\cdot\u:=\sum_ie_iu_i$. The \textbf{central lattice}
\[
L_{\e}:=\{\u\in\mathbb Z^m\mid \e\cdot\u\equiv0\pmod N\}
\]
is the kernel of the surjective homomorphism $\mathbb Z^m\to\mathbb Z/N\mathbb Z$, $\u\mapsto\e\cdot\u$. Hence, $[\mathbb Z^m:L_{\e}]=N$ and $\mathbb Z^m/L_{\e}$ is cyclic.

The blocks are indexed by the characters $\gamma\in\widehat{L}_{\e}:=\Hom(L_{\e},\F_q^*)$. Every $\gamma$ extends to a map $\widetilde\gamma\colon\mathbb Z^m\to\mathbb L^*$ such that
\begin{equation}\label{eq:BC-cocycle}
\widetilde\gamma(\u+\v)=\widetilde\gamma(\u)\,\varphi^{\e\cdot\u}\bigl(\widetilde\gamma(\v)\bigr)\qquad\text{for all }\u,\v\in\mathbb Z^m
\end{equation}
\cite[Lemma~2.1]{BC2025}, and $\varepsilon_{\widetilde\gamma}(\bm X^{\u}):=\widetilde\gamma(\u)\varphi^{\e\cdot\u}$ extends $\mathbb L$-linearly to a ring homomorphism $\varepsilon_{\widetilde\gamma}\colon\mathcal A_{\e}\to\mathbb L[G]$. Here, as in Section~\ref{sec:preliminaries}, $\mathbb L[G]$ is identified with $\End_{\F_q}(\mathbb L)$. Fixing one extension for every $\gamma$, Berardini and Caruso consider $\varepsilon:=(\varepsilon_{\widetilde\gamma})_{\gamma\in\widehat{L}_{\e}}$. For a finite set $S\subseteq\mathbb Z^m$ we write
\[
\LRM(\e;S):=\varepsilon\bigl(\Span_{\mathbb L}\{\bm X^{\u}\mid\u\in S\}\bigr)\subseteq(\mathbb L[G])^{\widehat{L}_{\e}}.
\]
These codes have sum-rank length $N(q-1)^m$. In \cite{BC2025}, $S=C\cap\mathbb Z^m$ for a compact convex set $C\subseteq\mathbb R^m$. The \textbf{standard} codes are $\LRM(\e;c):=\LRM(\e;S)$ with $S=\{\u\in\mathbb N^m\mid|\u|\le c\}$ \cite[Definition~3.1]{BC2025}. The \textbf{lattice-simplex} codes are $\LRM(\e;cS_w):=\LRM(\e;cS_w\cap\mathbb Z^m)$, where, for a basis $w=(w_1,\ldots,w_m)$ of $L_{\e}$,
\[
S_w:=\Bigl\{\textstyle\sum_{i=1}^m\mu_iw_i\Bigm|\mu_i\ge0,\ \sum_{i=1}^m\mu_i\le1\Bigr\}.
\]
For $1\le c\le q-2$ the code $\LRM(\e;c)$ has dimension $\binom{m+c}{m}$, and for $0\le c\le q-2$ the code $\LRM(\e;cS_w)$ has dimension $\#(cS_w\cap\mathbb Z^m)$; in both cases the minimum distance is at least $N(q-1)^{m-1}(q-1-c)$ \cite[Theorems~3.1 and~3.3]{BC2025}.

\subsubsection{Adapted bases and the identification}

\begin{lemma}\label{lem:BC-adapted}
Let $b=(b_1,\ldots,b_m)$ be a basis of $\mathbb Z^m$, and put $\sigma_i:=\varphi^{\e\cdot b_i}$ and $n_i:=\ord(\sigma_i)$. The following are equivalent:
\begin{enumerate}[label=\textup{(\alph*)}]
\item $(n_1b_1,\ldots,n_mb_m)$ is a basis of $L_{\e}$;
\item $n_1\cdots n_m=N$;
\item $G=\langle\sigma_1\rangle\times\cdots\times\langle\sigma_m\rangle$.
\end{enumerate}
If they hold, the integers $n_1,\ldots,n_m$ are pairwise coprime.
\end{lemma}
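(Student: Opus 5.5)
The plan is to work with the surjective homomorphism $f\colon\mathbb Z^m\to\mathbb Z/N\mathbb Z$, $\u\mapsto\e\cdot\u$, whose kernel is $L_{\e}$. Since $G=\langle\varphi\rangle$ is cyclic of order $N$, the assignment $\u\mapsto\varphi^{\e\cdot\u}$ identifies $\mathbb Z/N\mathbb Z$ with $G$. Under this identification $\sigma_i$ corresponds to $f(b_i)$, and $n_i=\ord(\sigma_i)$ is the order of $f(b_i)$ in $\mathbb Z/N\mathbb Z$. Because $b$ is a basis and $f$ is surjective, the elements $\sigma_1,\ldots,\sigma_m$ generate $G$. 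Consequently $G$ is a quotient of $\langle\sigma_1\rangle\times\cdots\times\langle\sigma_m\rangle$ via the multiplication map
\[
\mu\colon\prod_i\langle\sigma_i\rangle\to G .
\]

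I would first prove (b)$\Leftrightarrow$(c). The map $\mu$ is surjective, and its source has order $n_1\cdots n_m$. Condition (c) says exactly that $\mu$ is an isomorphism, which happens if and only if $n_1\cdots n_m=\#G=N$.

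Next I would prove (a)$\Leftrightarrow$(b). Let $M:=\bigoplus_i n_i\mathbb Z b_i$. Each $n_ib_i$ lies in $L_{\e}$, since $n_i\,\e\cdot b_i\equiv0\pmod N$ by the definition of $n_i$; hence $M\subseteq L_{\e}$. The index is $[\mathbb Z^m:M]=n_1\cdots n_m$, while $[\mathbb Z^m:L_{\e}]=N$. Therefore $M=L_{\e}$ if and only if $n_1\cdots n_m=N$.

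For the final claim, use the surjectivity of $f$ and the fact that $\gcd(\e,N)=1$. Under (c), $G\cong\prod_i\mathbb Z/n_i\mathbb Z$ is cyclic. A direct product of cyclic groups is cyclic only if their orders are pairwise coprime: if $p$ divided both $n_i$ and $n_j$, the $p$-torsion would contain $(\mathbb Z/p)^2$, which is impossible in a cyclic group. Hence the $n_i$ are pairwise coprime.

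No step here looks like a real obstacle. The only point needing care is the inclusion $M\subseteq L_{\e}$ together with the index computation in the basis $b$, which is where the hypothesis that $b$ is a $\mathbb Z$-basis of $\mathbb Z^m$ is used.
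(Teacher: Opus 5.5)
Your proof is correct and differs from the paper's only in how it reaches (c) and the coprimality claim.

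(a)$\Leftrightarrow$(b) is the paper's argument: $M=\bigoplus_i n_i\mathbb Z b_i\subseteq L_{\e}$ has index $n_1\cdots n_m$, and $L_{\e}$ has index $N$.

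For (b)$\Leftrightarrow$(c), you first note that the $\sigma_i$ generate $G$, because $\u\mapsto\e\cdot\u$ is surjective (from $\gcd(e_1,\ldots,e_m,N)=1$) and $b$ spans $\mathbb Z^m$. The multiplication map $\prod_i\langle\sigma_i\rangle\to G$ is then a surjective homomorphism, and (c) reduces to comparing orders. You then read off pairwise coprimality from $G$ being cyclic.

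The paper never states generation separately. It goes (b)$\Rightarrow$(a) and gets coprimality from the cyclicity of $\mathbb Z^m/L_{\e}\cong\prod_i\mathbb Z/n_i\mathbb Z$. It then obtains (c) because subgroups of $G$ with pairwise coprime orders whose product is $N=\#G$ must form an internal direct product filling $G$. The implication (c)$\Rightarrow$(b) is immediate in both.

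What your route buys is a cleaner separation of hypotheses. Your (b)$\Leftrightarrow$(c) uses only that $G$ is abelian and generated by the $\sigma_i$, and cyclicity enters only in the final claim. The paper chains everything through (a) and lets generation come out of the order count.

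One small correction: for the final claim you only need $G=\langle\varphi\rangle$ cyclic. The surjectivity of $f$ and the gcd condition were already used, for generation, and play no further role there.
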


\begin{proof}
Since $\sigma_i^{n_i}=\id_{\mathbb L}$, we have $n_ib_i\in L_{\e}$, so $L_0:=\bigoplus_{i}\mathbb Zn_ib_i\subseteq L_{\e}$ and $[\mathbb Z^m:L_0]=n_1\cdots n_m$. As $[\mathbb Z^m:L_{\e}]=N$, (a) and (b) are equivalent. If (a) holds, then $\mathbb Z^m/L_{\e}\cong\prod_i\mathbb Z/n_i\mathbb Z$ is cyclic, so the $n_i$ are pairwise coprime. Then the subgroups $\langle\sigma_i\rangle$ of the cyclic group $G$ have pairwise coprime orders with product $N=\#G$, which gives (c). Finally, (c) implies (b).
\end{proof}

We call such a basis $b$ \textbf{adapted} to $\e$. Two cases will be used.
\begin{enumerate}[label=\textup{(\roman*)}]
\item The standard basis $(\bm\delta_1,\ldots,\bm\delta_m)$ of $\mathbb Z^m$ is adapted if and only if the orders $n_i=\ord(\varphi^{e_i})$ are pairwise coprime with product $N$. In this case $L_{\e}=n_1\mathbb Z\bm\delta_1\oplus\cdots\oplus n_m\mathbb Z\bm\delta_m$, and we call $L_{\e}$ \textbf{rectangular}.
\item There is always an adapted basis with $(n_1,\ldots,n_m)=(1,\ldots,1,N)$. Indeed, since $\mathbb Z^m/L_{\e}$ is cyclic of order $N$, there is a basis $(v_1,\ldots,v_m)$ of $\mathbb Z^m$ such that $(v_1,\ldots,v_{m-1},Nv_m)$ is a basis of $L_{\e}$ (see the proof of \cite[Lemma~2.1]{BC2025}). Then $\varphi^{\e\cdot v_i}=\id$ for $i<m$. Moreover, if $\varphi^{\e\cdot v_m}$ had order $n<N$, then $nv_m\in L_{\e}$, which is impossible because $(v_1,\ldots,v_{m-1},Nv_m)$ is a basis of $L_{\e}$. Hence, the orders are $(1,\ldots,1,N)$, and $(v_1,\ldots,v_m)$ is adapted. In these variables, the ring is almost commutative in the sense of \cite[Section~3.4.1]{BC2025}.
\end{enumerate}
If $N$ is a prime power, then by Lemma~\ref{lem:BC-adapted} every adapted basis has orders $(1,\ldots,1,N)$ up to the order of the variables.

Fix an adapted basis $b$ with orders $\bm n=(n_1,\ldots,n_m)$. We apply our construction to the decomposition $G=\langle\sigma_1\rangle\times\cdots\times\langle\sigma_m\rangle$ of Lemma~\ref{lem:BC-adapted}, with $\mathbb E_i=\mathbb L^{\langle\sigma_1,\ldots,\widehat{\sigma_i},\ldots,\sigma_m\rangle}=\F_{q^{n_i}}$ and with the full norm grids $\Lambda_i=\F_q^*$. Thus, $t_i=q-1$, $\nu_i=n_i(q-1)$, and the blocks are indexed by $\lambda=(\lambda_1,\ldots,\lambda_m)\in(\F_q^*)^m$. Since the norm $\N_{\mathbb E_i/\F_q}\colon\mathbb E_i^*\to\F_q^*$ is surjective, we may choose $\alpha_{i,\mu}\in\mathbb E_i^*$ with $\N_{\mathbb E_i/\F_q}(\alpha_{i,\mu})=\mu$ for every $\mu\in\F_q^*$. With these choices,
\[
H_i(x_i)=\prod_{\mu\in\F_q^*}(x_i^{n_i}-\mu)=x_i^{n_i(q-1)}-1=x_i^{\nu_i}-1.
\]
The identity $H_i(x_i)=x_i^{\nu_i}-1$ is the main additional feature provided by our quotient description: every $x_i$ is a unit of $\Lx/H_\Lambda$, so $\bm x^{\bm w}$ is defined for every $\bm w\in\mathbb Z^m$, and $\bm x^{\bm w}=\bm x^{\overline{\bm w}}$, where $\overline{\bm w}\in\Delta(\bm\nu)$ denotes the componentwise reduction of $\bm w$ modulo $\bm\nu$. For $\u\in\mathbb Z^m$ let $[\u]_b\in\mathbb Z^m$ be its coordinate vector with respect to $b$, and for $S\subseteq\mathbb Z^m$ put
\[
\overline S_b:=\bigl\{\overline{[\u]_b}\bigm|\u\in S\bigr\}\subseteq\Delta(\bm\nu).
\]

\begin{proposition}\label{prop:BC-identification}
Let $b$ be an adapted basis, with the notation above. For every finite set $S\subseteq\mathbb Z^m$, the code $\LRM(\e;S)$ is sum-rank equivalent to
\[
\bm\Phi\Bigl(\Span_{\mathbb L}\bigl\{\bm x^{\bm w}\bigm|\bm w\in\overline S_b\bigr\}\Bigr)\subseteq(\mathbb L[G])^{(q-1)^m}.
\]
The equivalence sends the block indexed by $\gamma\in\widehat{L}_{\e}$ to the block indexed by $\lambda=\bigl(\gamma(n_1b_1),\ldots,\gamma(n_mb_m)\bigr)$, and on each block it is the conjugation by an element of $\mathbb L^*$.
\end{proposition}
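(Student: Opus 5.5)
Since $b$ is a basis of $\mathbb Z^m$, the change of variables $Y_i:=\bm X^{b_i}$ gives $\mathcal A_{\e}=\mathbb L[Y_1^{\pm1},\ldots,Y_m^{\pm1};\sigma_1,\ldots,\sigma_m]$ with $Y_ia=\sigma_i(a)Y_i$ and $\bm X^{\u}=\bm Y^{[\u]_b}$. So the span of $\{\bm X^{\u}\mid\u\in S\}$ is the span of $\{\bm Y^{\bm w}\mid\bm w\in[S]_b\}$. I would then compare, block by block, the homomorphism $\varepsilon_{\widetilde\gamma}$ on these monomials with the $\lambda$-component of $\bm\Phi$ applied to $\bm x^{\bm w}$. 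Here $\bm x$ is a unit in $\Lx/H_\Lambda$, because $H_i=x_i^{\nu_i}-1$. That component is $\psi\circ\phi_\lambda$, which sends $x_i\mapsto\alpha_{i,\lambda_i}\sigma_i$ and hence $\bm x^{\bm w}\mapsto \N^{\bm w}(\bm\alpha^{(\lambda)})\bm\sigma^{\bm w}$. This extends to $\bm w\in\mathbb Z^m$ through inverses in $\mathbb L[G]$, and it factors through reduction modulo $\bm\nu$. It is also a ring homomorphism on the Laurent ring $\mathbb L[x^{\pm1};\sigma]$, since $\alpha\sigma$ is invertible.

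The key step is to show that for each $\gamma$ there are $\lambda$ and $c\in\mathbb L^*$ such that $\varepsilon_{\widetilde\gamma}(P)=c\,\Phi_\lambda(P)\,c^{-1}$ for all $P$. Both maps are ring homomorphisms from the same Laurent Ore ring that restrict to the identity on $\mathbb L$. It therefore suffices to compare them on the generators $Y_i$. On the $\varepsilon$ side, $\varepsilon_{\widetilde\gamma}(Y_i)=\beta_i\sigma_i$ with $\beta_i:=\widetilde\gamma(b_i)\in\mathbb L^*$. On the $\Phi_\lambda$ side, $\Phi_\lambda(Y_i)=\alpha_{i,\lambda_i}\sigma_i$. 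Conjugation gives $c(\alpha\sigma_i)c^{-1}=\alpha\,c\,\sigma_i(c)^{-1}\sigma_i$. So I need a single $c$ with
\[
\beta_i=\alpha_{i,\lambda_i}\,c\,\sigma_i(c)^{-1}\quad\text{for all } i.
\]
This is a twisted Hilbert~90 problem for the family $(\beta_i/\alpha_{i,\lambda_i})_i$. Solving it is the main obstacle.

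To solve it, I first set $\lambda_i:=\gamma(n_ib_i)$. The cocycle identity \eqref{eq:BC-cocycle}, iterated, gives $\widetilde\gamma(n_ib_i)=\prod_{k<n_i}\sigma_i^k(\beta_i)$. Since $n_ib_i\in L_{\e}$, this equals $\gamma(n_ib_i)\in\F_q^*$. Next, \eqref{eq:BC-cocycle} applied to $b_i+b_k=b_k+b_i$ gives $\beta_i\sigma_i(\beta_k)=\beta_k\sigma_k(\beta_i)$. The $\alpha$'s satisfy the same relation because $\alpha_{i,\cdot}\in\mathbb E_i$. Hence $\delta_i:=\beta_i/\alpha_{i,\lambda_i}$ satisfies two conditions:
\begin{enumerate}
\item the commuting cocycle condition $\delta_i\sigma_i(\delta_k)=\delta_k\sigma_k(\delta_i)$;
\item $\prod_{k<n_i}\sigma_i^k(\delta_i)=\gamma(n_ib_i)/\N_{\mathbb E_i/\F_q}(\alpha_{i,\lambda_i})=1$.
\end{enumerate}
Together these say that $\bm\sigma^{\bm w}\mapsto\delta^{(\bm w)}:=\prod_i\N^{w_i}$-type products defines a $1$-cocycle $G\to\mathbb L^*$, well defined on $G=\prod\langle\sigma_i\rangle$ by Lemma~\ref{lem:BC-adapted}. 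Hilbert~90 then yields $c$ with $\delta(\tau)=c/\tau(c)$, and taking $\tau=\sigma_i$ gives the required identity. If checking well-definedness directly is awkward, I would instead define the cocycle on $\mathbb Z^m$ as $\u\mapsto\widetilde\gamma(\u)/\N^{[\u]_b}(\bm\alpha^{(\lambda)})$ and verify that it is trivial on $L_{\e}$ using $\lambda_i=\gamma(n_ib_i)$.

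Finally, $\gamma\mapsto\lambda$ is a bijection $\widehat L_{\e}\to(\F_q^*)^m$, because $(n_ib_i)$ is a basis of $L_{\e}$ by Lemma~\ref{lem:BC-adapted}(a). Conjugation by $c\in\mathbb L^*$ preserves ranks, so it is a sum-rank isometry on each block. Combining the blocks, $\LRM(\e;S)$ is the image of $\Span\{\bm x^{\bm w}\mid\bm w\in[S]_b\}$ under $\bm\Phi$, up to blockwise conjugation and reindexing of blocks. Replacing each $\bm w$ by $\overline{\bm w}$ does not change $\bm x^{\bm w}$, which yields the span indexed by $\overline S_b$.
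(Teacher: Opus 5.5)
Your proposal is correct and follows essentially the paper's route. Your $\delta_i=\beta_i/\alpha_{i,\lambda_i}$ are the values at $b_i$ of the paper's ratio cocycle $\mu=\widetilde\gamma/\widetilde\gamma_\lambda$, which descends to $G$ and is trivialized by Hilbert~90 to give the blockwise conjugation, with the bijection of blocks coming from the basis $(n_ib_i)$ of $L_{\e}$; the only difference is bookkeeping, since the paper maps into the quotient via $\vartheta_b$ (where the $x_i$ are units) and works with the cocycle on all of $\mathbb Z^m$, while you extend $\bm\Phi_\lambda$ to the Laurent ring and check only on generators.
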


\begin{proof}
Since every $x_i$ is a unit of $\Lx/H_\Lambda$, the assignment $a\bm X^{\u}\mapsto a\,\bm x^{[\u]_b}$ defines an $\mathbb L$-linear map $\vartheta_b\colon\mathcal A_{\e}\to\Lx/H_\Lambda$. It is a ring homomorphism: $[\u+\v]_b=[\u]_b+[\v]_b$ and $\bm\sigma^{[\u]_b}=\prod_i\varphi^{(\e\cdot b_i)[\u]_{b,i}}=\varphi^{\e\cdot\u}$, so that
\[
\vartheta_b\bigl((a\bm X^{\u})(c\bm X^{\v})\bigr)=a\varphi^{\e\cdot\u}(c)\,\bm x^{[\u]_b+[\v]_b}=\bigl(a\bm x^{[\u]_b}\bigr)\bigl(c\bm x^{[\v]_b}\bigr)
\]
for $a,c\in\mathbb L$.

Fix $\lambda\in(\F_q^*)^m$, put $\bm\alpha^{(\lambda)}:=(\alpha_{1,\lambda_1},\ldots,\alpha_{m,\lambda_m})$, and let $\bm\Phi_\lambda$ be the component of $\bm\Phi$ indexed by $\lambda$. For $\bm w\in\Delta(\bm\nu)$ we have $\bm\Phi_\lambda(\bm x^{\bm w})=\N^{\bm w}(\bm\alpha^{(\lambda)})\,\bm\sigma^{\bm w}$ with $\N^{\bm w}(\bm\alpha^{(\lambda)})\in\mathbb L^*$. Since $\bm x^{\bm w}=\bm x^{\overline{\bm w}}$ and $\bm\sigma^{\overline{\bm w}}=\bm\sigma^{\bm w}$, the image of $\bm x^{\bm w}$ is a nonzero scalar multiple of $\bm\sigma^{\bm w}$ for every $\bm w\in\mathbb Z^m$. Hence
\[
\bm\Phi_\lambda\bigl(\vartheta_b(\bm X^{\u})\bigr)=\widetilde\gamma_\lambda(\u)\,\varphi^{\e\cdot\u}
\]
for a function $\widetilde\gamma_\lambda\colon\mathbb Z^m\to\mathbb L^*$, and the multiplicativity of $\bm\Phi_\lambda\circ\vartheta_b$ is equivalent to \eqref{eq:BC-cocycle} for $\widetilde\gamma_\lambda$. If $\u=\sum_ik_in_ib_i\in L_{\e}$, then $[\u]_b=(k_1n_1,\ldots,k_mn_m)$, and, since $(\alpha_{i,\lambda_i}\sigma_i)^{n_i}=\N_{\mathbb E_i/\F_q}(\alpha_{i,\lambda_i})\id_{\mathbb L}=\lambda_i\id_{\mathbb L}$,
\[
\widetilde\gamma_\lambda(\u)\,\id_{\mathbb L}=\prod_{i=1}^m\bigl(\bm\Phi_\lambda(x_i)^{n_i}\bigr)^{k_i}=\Bigl(\prod_{i=1}^m\lambda_i^{k_i}\Bigr)\id_{\mathbb L}.
\]
Thus, $\widetilde\gamma_\lambda$ extends the character $\gamma_\lambda$ of $L_{\e}$ defined by $\gamma_\lambda(n_ib_i)=\lambda_i$. Since $(n_1b_1,\ldots,n_mb_m)$ is a basis of $L_{\e}$, the map $\lambda\mapsto\gamma_\lambda$ is a bijection $(\F_q^*)^m\to\widehat{L}_{\e}$. Therefore,  $\bm\Phi\circ\vartheta_b$ is the evaluation map of \cite{BC2025} for the extensions $\widetilde\gamma_\lambda$.

Let $\widetilde\gamma$ be any other extension of $\gamma_\lambda$ satisfying \eqref{eq:BC-cocycle}, and put $\mu:=\widetilde\gamma/\widetilde\gamma_\lambda$. Then $\mu(\u+\v)=\mu(\u)\varphi^{\e\cdot\u}(\mu(\v))$ and $\mu|_{L_{\e}}=1$; in particular $\mu(\u+\v)=\mu(\u)$ for $\v\in L_{\e}$. Hence, $\mu$ induces a $1$-cocycle $G\to\mathbb L^*$ via $\varphi^{\e\cdot\u}\mapsto\mu(\u)$. By Hilbert's Theorem~90 there is $\beta\in\mathbb L^*$ with $\mu(\u)=\varphi^{\e\cdot\u}(\beta)\beta^{-1}$ for all $\u$, and then $\varepsilon_{\widetilde\gamma}(\bm X^{\u})=\beta^{-1}\varepsilon_{\widetilde\gamma_\lambda}(\bm X^{\u})\,\beta$, where $\beta$ also denotes multiplication by $\beta$ (compare \cite[Remark~2.1]{BC2025}). Conjugation preserves the rank of each block, so $\LRM(\e;S)$ is sum-rank equivalent to $\bm\Phi\bigl(\vartheta_b(\Span_{\mathbb L}\{\bm X^{\u}\mid\u\in S\})\bigr)$, which is the code in the statement because $\vartheta_b(\bm X^{\u})=\bm x^{\overline{[\u]_b}}$.
\end{proof}

\begin{corollary}\label{cor:BC-dimension}
For every finite set $S\subseteq\mathbb Z^m$, $\dim_{\mathbb L}\LRM(\e;S)$ equals the number of classes of $S$ modulo the lattice $(q-1)L_{\e}$. In particular, $\dim_{\mathbb L}\LRM(\e;S)=\#S$ if and only if no two distinct elements of $S$ are congruent modulo $(q-1)L_{\e}$; this holds for $S=\{\u\in\mathbb N^m\mid|\u|\le c\}$ and for $S=cS_w\cap\mathbb Z^m$ whenever $c\le q-2$.
\end{corollary}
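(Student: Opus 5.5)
Since $\bm\Phi$ is an $\mathbb L$-linear isomorphism and the equivalence of Proposition~\ref{prop:BC-identification} is $\mathbb L$-linear on each block (conjugation by $\beta\in\mathbb L^*$ and left $\mathbb L$-scalars commute appropriately), $\dim_{\mathbb L}\LRM(\e;S)$ equals the dimension of $\Span_{\mathbb L}\{\bm x^{\bm w}\mid \bm w\in\overline S_b\}$ inside $\Lx/H_\Lambda$. By Remark~\ref{rem:universal-grobner-basis} the classes $\bm x^{\bm w}$, $\bm w\in\Delta(\bm\nu)$, form a basis, so this dimension is $\#\overline S_b$. Hence the whole statement reduces to a lattice computation: two elements $\u,\v\in S$ give the same element of $\overline S_b$ if and only if $[\u]_b\equiv[\v]_b$ componentwise modulo $\bm\nu=(n_1(q-1),\ldots,n_m(q-1))$, i.e.\ if and only if $[\u-\v]_b\in\bigoplus_i\mathbb Z\,n_i(q-1)\bm\delta_i$, i.e.\ $\u-\v\in\bigoplus_i\mathbb Z(q-1)n_ib_i$. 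By Lemma~\ref{lem:BC-adapted}(a), $(n_1b_1,\ldots,n_mb_m)$ is a basis of $L_{\e}$, so this lattice is exactly $(q-1)L_{\e}$. This proves the first claim, and the ``if and only if'' is immediate.

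For the two explicit families with $c\le q-2$, I would show no two distinct points of $S$ differ by a nonzero vector of $(q-1)L_{\e}$. For $S=\{\u\in\mathbb N^m\mid|\u|\le c\}$: a difference $\u-\v$ of two points of $S$ has every coordinate of absolute value at most $c\le q-2$; but every nonzero element of $(q-1)L_{\e}\subseteq(q-1)\mathbb Z^m$ has some coordinate of absolute value at least $q-1$. Done.

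For $S=cS_w\cap\mathbb Z^m$ with $w$ a basis of $L_{\e}$: write $\u=\sum\mu_iw_i$, $\v=\sum\mu'_iw_i$ with $\mu_i,\mu'_i\ge0$ and $\sum\mu_i,\sum\mu'_i\le c$. If $\u-\v\in(q-1)L_{\e}$, then since $w$ is a basis of $L_{\e}$ the coordinates $\mu_i-\mu'_i$ lie in $(q-1)\mathbb Z$. But $|\mu_i-\mu'_i|\le\max(\mu_i,\mu'_i)\le c\le q-2$, forcing $\mu_i=\mu'_i$ and $\u=\v$. The only point needing care, which I expect to be the mild obstacle, is that $\mu_i$ here are real coordinates with respect to $w$ and need not be integers; the argument only uses that their differences are in $(q-1)\mathbb Z$, which follows from uniqueness of coordinates in the $\mathbb R$-basis $w$ together with $\u-\v\in(q-1)L_{\e}=\bigoplus(q-1)\mathbb Z w_i$, so it goes through. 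The case $c=0$ is trivial ($S=\{0\}$).
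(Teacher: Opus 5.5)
Your proposal is correct and follows the paper's proof essentially step by step. You show that the dimension is $\#\overline S_b$ via the monomial basis and injectivity of $\bm\Phi$, identify the fibres of $\u\mapsto\overline{[\u]_b}$ with the classes modulo $\bigoplus_i(q-1)n_i\mathbb Z b_i=(q-1)L_{\e}$, and handle the two families with the same coordinate bounds (in the standard basis, resp.\ in the basis $w$). Your remark that the blockwise conjugation is left $\mathbb L$-linear makes explicit a detail the paper leaves tacit. You also rely, without saying so, on the existence of an adapted basis from case (ii), which the paper likewise uses implicitly.
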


\begin{proof}
By Remark~\ref{rem:universal-grobner-basis} and Theorem~\ref{representation theorem}, distinct elements of $\Delta(\bm\nu)$ give linearly independent images under $\bm\Phi$, so the dimension is $\#\overline S_b$. Moreover, $\overline{[\u]_b}=\overline{[\u']_b}$ if and only if $[\u-\u']_b\in\bigoplus_i\nu_i\mathbb Z$, that is, $\u-\u'\in\bigoplus_i(q-1)n_i\mathbb Zb_i=(q-1)L_{\e}$. Since $(q-1)L_{\e}\subseteq(q-1)\mathbb Z^m$, two distinct elements of $\{\u\in\mathbb N^m\mid|\u|\le c\}$ are not congruent if $c\le q-2$. If $\u,\u'\in cS_w$, the coordinates of $\u-\u'$ with respect to $w$ lie in $[-c,c]$, whereas every nonzero element of $(q-1)L_{\e}$ has a coordinate of absolute value at least $q-1$ with respect to $w$.
\end{proof}

\subsubsection{Standard codes for rectangular central lattices}

Assume that $L_{\e}$ is rectangular, that is, the orders $n_i=\ord(\varphi^{e_i})$ are pairwise coprime with product $N$, and take $b$ to be the standard basis, so that $\sigma_i=\varphi^{e_i}$ and $\vartheta_b(\bm X^{\u})=\bm x^{\u}$.

\begin{proposition}\label{prop:BC-standard-all-degrees}
Assume that $L_{\e}$ is rectangular. For every integer $c\ge0$, the code $\LRM(\e;c)$ is sum-rank equivalent to $\bm\Phi\bigl(\SRRM_{\LK}(c,m)\bigr)$ with $\Lambda_i=\F_q^*$ for every $i$. More generally, let $\Lambda_1,\ldots,\Lambda_m\subseteq\F_q^*$ be nonempty, and let $\LRM_\Lambda(\e;c)$ be the restriction of $\LRM(\e;c)$ to the blocks indexed by the characters $\gamma$ with
\[
\bigl(\gamma(n_1\bm\delta_1),\ldots,\gamma(n_m\bm\delta_m)\bigr)\in\Lambda_1\times\cdots\times\Lambda_m.
\]
Then $\LRM_\Lambda(\e;c)$ is sum-rank equivalent to $\bm\Phi\bigl(\SRRM_{\LK}(c,m)\bigr)$.
\end{proposition}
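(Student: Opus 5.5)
The plan is to apply Proposition~\ref{prop:BC-identification} with $b$ the standard basis and $S=\{\u\in\mathbb N^m\mid|\u|\le c\}$, and then to identify the reduced support $\overline S_b$. The difficulty is that for $c$ beyond $q-2$ the reduction modulo $\bm\nu$ is no longer injective on $S$.

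Since $L_{\e}$ is rectangular, $b=(\bm\delta_1,\ldots,\bm\delta_m)$ is adapted with $\sigma_i=\varphi^{e_i}$ and orders $n_i$. We also have $[\u]_b=\u$. By Proposition~\ref{prop:BC-identification}, $\LRM(\e;c)$ is therefore sum-rank equivalent to
\[
\bm\Phi\bigl(\Span_{\mathbb L}\{\bm x^{\bm w}\mid\bm w\in\overline S_b\}\bigr),
\qquad
\overline S_b=\{\overline{\u}\mid\u\in\mathbb N^m,\ |\u|\le c\}.
\]
The key combinatorial step is the claim
\[
\overline S_b=D_c:=\{\bm w\in\Delta(\bm\nu)\mid|\bm w|\le c\}.
\]
The inclusion $D_c\subseteq\overline S_b$ is clear, since every $\bm w\in D_c$ is its own reduction. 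For the reverse inclusion, reducing $u_i$ modulo $\nu_i$ never increases it for $u_i\ge0$. Hence $|\overline{\u}|\le|\u|\le c$, and $\overline{\u}\in\Delta(\bm\nu)$. This shows that the two monomial spans coincide, and the span over $D_c$ is exactly $\SRRM_{\LK}(c,m)$ by Definition~\ref{def: srk reed muller codes}. For $c\ge\sum_i(\nu_i-1)$ both sides are the full space, so every $c\ge0$ is covered. This is where working in the quotient with $H_i=x_i^{\nu_i}-1$ pays off: high-degree monomials do not create new support but fold back into the total-degree box.

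For the restricted version, I would rerun the proof of Proposition~\ref{prop:BC-identification} with the grids $\Lambda_i\subseteq\F_q^*$ in place of $\F_q^*$. For each $\mu\in\Lambda_i$ choose $\alpha_{i,\mu}\in\mathbb E_i^*$ of norm $\mu$, which is possible by surjectivity of the norm. Then $H_i=\prod_{\mu\in\Lambda_i}(x_i^{n_i}-\mu)$ and $\nu_i=n_i\#\Lambda_i$. The obstacle here is that the $x_i$ need not be units, and $\bm x^{\u}$ for $\u\in\mathbb N^m$ is no longer congruent to its componentwise reduction.

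I would handle this by avoiding reduced exponents. Since $S\subseteq\mathbb N^m$, I would work directly with $\vartheta\colon\mathbb L[X_1,\ldots,X_m;\varphi^{e_1},\ldots,\varphi^{e_m}]\to\Lx/H_\Lambda$, given by $X_i\mapsto x_i$. This is a ring homomorphism because $\sigma_i=\varphi^{e_i}$. Its image on polynomials of total degree at most $c$ is the set of classes of $F$ with $\deg F\le c$, which is $\SRRM_{\LK}(c,m)$ by Definition~\ref{def: srk reed muller codes}.

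The block-by-block part of the earlier proof applies unchanged to each $\lambda\in\Lambda_1\times\cdots\times\Lambda_m$. The map $\bm\Phi_\lambda\circ\vartheta$ sends $\bm X^{\u}$ to $\widetilde\gamma_\lambda(\u)\varphi^{\e\cdot\u}$, where $\widetilde\gamma_\lambda$ satisfies \eqref{eq:BC-cocycle} and restricts on $L_{\e}$ to the character with $\gamma(n_i\bm\delta_i)=\lambda_i$. The Hilbert~90 argument then shows that Berardini--Caruso's evaluation at the block $\gamma_\lambda$ is conjugate, by an element of $\mathbb L^*$, to $\bm\Phi_\lambda\circ\vartheta$. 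Two remaining points are to be checked: the restricted blocks correspond bijectively to $\Lambda_1\times\cdots\times\Lambda_m$, and $\widetilde\gamma_\lambda$ is defined only on $\mathbb N^m$. The latter is enough, because the code only involves $\u\in S\subseteq\mathbb N^m$.

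Taking $\Lambda_i=\F_q^*$ recovers the first statement.
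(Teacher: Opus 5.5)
Your first part (standard basis, $[\u]_b=\u$, and $\overline S_b=\{\bm w\in\Delta(\bm\nu)\mid|\bm w|\le c\}$ because reduction modulo $\bm\nu$ never increases a coordinate) is exactly the paper's argument. For general grids you take a slightly different route, and it rests on a false premise. The $x_i$ \emph{are} units of $\Lx/H_\Lambda$ for every nonempty $\Lambda_i\subseteq\F_q^*$. Indeed, $H_i=h_i(x_i^{n_i})$ with $h_i\in\mathbb K[y]$ and $h_i(0)=\prod_{\mu\in\Lambda_i}(-\mu)\neq0$. Writing $h_i(y)=h_i(0)+y\,g_i(y)$ gives $x_i\cdot\bigl(-h_i(0)^{-1}x_i^{n_i-1}g_i(x_i^{n_i})\bigr)=1$ modulo $H_i$, and this inverse has coefficients in $\mathbb K$, so it commutes with $x_i$.

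The paper uses exactly this to define $\vartheta_b$ on the Laurent ring $\mathcal A_{\e}$. It then reruns the proof of Proposition~\ref{prop:BC-identification} unchanged on the blocks $\lambda\in\Lambda_1\times\cdots\times\Lambda_m$. Like you, it does not pass through $\overline S_b$ in this case; you are right that $\bm x^{\u}$ need not equal $\bm x^{\overline{\u}}$ when $H_i\neq x_i^{\nu_i}-1$. It takes directly the image of the classes of polynomials of degree at most $c$, which is $\SRRM_{\LK}(c,m)$.

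Your workaround via the non-Laurent Ore ring $\mathbb L[X_1,\ldots,X_m;\varphi^{e_1},\ldots,\varphi^{e_m}]$ is still correct, but it needs one check you only gesture at. In the Hilbert~90 step, $\mu:=\widetilde\gamma/\widetilde\gamma_\lambda$ is now defined only on $\mathbb N^m$. You must show that $\varphi^{\e\cdot\u}\mapsto\mu(\u)$ is a well-defined function on $G$, and the fact that $S\subseteq\mathbb N^m$ does not by itself give this. It does hold, as follows.

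Every element of $G$ is $\varphi^{\e\cdot\u}$ for some $\u\in\mathbb N^m$ (add multiples of $N$ to the coordinates). Now let $\u,\u'\in\mathbb N^m$ with $\u-\u'\in L_{\e}$. Choose $\bm z\in N\mathbb N^m$ with large entries, so that $\v:=\bm z$ and $\v':=\bm z+\u-\u'$ lie in $L_{\e}\cap\mathbb N^m$ and $\u+\v=\u'+\v'$. The relation $\mu(\u+\v)=\mu(\u)\varphi^{\e\cdot\u}(\mu(\v))$ holds on $\mathbb N^m$, and $\mu=1$ on $L_{\e}\cap\mathbb N^m$. Hence $\mu(\u)=\mu(\u+\v)=\mu(\u'+\v')=\mu(\u')$.

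With this added, and the block bijection that follows immediately from $\lambda\mapsto\gamma_\lambda$, your proof is complete. The paper's version is shorter because the invertibility of the $x_i$ lets it reuse the earlier proof verbatim.
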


\begin{proof}
For $\Lambda_i=\F_q^*$ apply Proposition~\ref{prop:BC-identification} with $S=\{\u\in\mathbb N^m\mid|\u|\le c\}$: reduction modulo $\bm\nu$ does not increase $|\u|$ and fixes $\Delta(\bm\nu)$, so $\overline S_b=\{\u\in\Delta(\bm\nu)\mid|\u|\le c\}$. For arbitrary $\Lambda_i$, the polynomial $H_i$ has nonzero constant term and lies in $\mathbb K[x_i^{n_i}]$, so $x_i$ is still a unit of $\Lx/H_\Lambda$. The proof of Proposition~\ref{prop:BC-identification}, restricted to the blocks $\lambda\in\Lambda_1\times\cdots\times\Lambda_m$, shows that the restricted code is sum-rank equivalent to the image under $\bm\Phi$ of the classes of the polynomials of degree at most $c$, which is $\bm\Phi(\SRRM_{\LK}(c,m))$ by Definition~\ref{def: srk reed muller codes}.
\end{proof}

\begin{corollary}\label{cor:BC-standard-all-parameters}
Assume that $L_{\e}$ is rectangular, let $c\ge0$, and let $\nu_i=n_i(q-1)$ be ordered so that $\nu_1\ge\cdots\ge\nu_m$; put $R:=\sum_{i=1}^m(\nu_i-1)$. If $c<R$, then $\LRM(\e;c)$ has dimension $k(c,\bm\nu)$ and minimum distance
\[
d_{\min}\bigl(\LRM(\e;c)\bigr)=d(c,\bm\nu)=(\nu_s-\ell)\prod_{i=1}^{s-1}\nu_i,
\]
where $c=\sum_{i=s+1}^m(\nu_i-1)+\ell$ with $0\le\ell<\nu_s-1$. If $c\ge R$, the code is the whole ambient space. The same holds for $\LRM_\Lambda(\e;c)$, with $\nu_i=n_i\,\#\Lambda_i$.
\end{corollary}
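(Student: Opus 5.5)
The plan is to transport every parameter through the sum-rank equivalence of Proposition~\ref{prop:BC-standard-all-degrees} and then read dimension and distance off the results of Section~\ref{sezione codici}.

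\textbf{Step 1 (reduction).} By Proposition~\ref{prop:BC-standard-all-degrees}, $\LRM(\e;c)$ is sum-rank equivalent to $\bm\Phi\bigl(\SRRM_{\LK}(c,m)\bigr)$. Here the decomposition is $G=\langle\varphi^{e_1}\rangle\times\cdots\times\langle\varphi^{e_m}\rangle$, we take $\Lambda_i=\F_q^*$, and the side lengths are $\nu_i=n_i(q-1)$. The equivalence acts blockwise: it permutes the blocks and conjugates each one by an element of $\mathbb L^*$. Such a map is $\mathbb L$-semilinear at worst and preserves both $\dim_{\mathbb L}$ and the sum-rank weight of every codeword. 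Moreover, $\bm\Phi$ is an $\mathbb L$-linear isometry by Remark~\ref{rem:metric-quotient}. Hence it suffices to compute the dimension and minimum distance of $\SRRM_{\LK}(c,m)$. Reordering variables so that $\nu_1\ge\cdots\ge\nu_m$ only renames the decomposition factors and does not affect the code. One small point to check is that the conjugations are $\mathbb L$-linear on each block. Left-multiplying by $\beta^{-1}$ and right-multiplying by $\beta$ commutes with left $\mathbb L$-scalars, so this holds.

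\textbf{Step 2 (parameters for $c<R$).} By Proposition~\ref{prop: dim srk muller codes}, the dimension is $k(c,\bm\nu)$. By Theorem~\ref{thm:minimum-distance}, the minimum distance is $d(c,\bm\nu)$. Lemma~\ref{lem:distance-function} then gives the closed form $(\nu_s-\ell)\prod_{i<s}\nu_i$ with the stated $s$ and $\ell$.

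\textbf{Step 3 (the case $c\ge R$).} Every $\bm u\in\Delta(\bm\nu)$ satisfies $|\bm u|\le R\le c$. The code therefore contains the whole monomial basis $\mathcal B_{\mathrm{mon}}$, so it is the full quotient. Since $\bm\Phi$ is surjective by Theorem~\ref{representation theorem}, its image is all of $(\mathbb L[G])^t$. The equivalence is a bijection of the ambient space, so $\LRM(\e;c)$ is the whole ambient space as well.

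\textbf{Step 4 (restricted version).} For $\LRM_\Lambda(\e;c)$, the second part of Proposition~\ref{prop:BC-standard-all-degrees} gives the equivalence with $\bm\Phi(\SRRM_{\LK}(c,m))$ built on the grids $\Lambda_i$. Now $t_i=\#\Lambda_i$ and $\nu_i=n_i\#\Lambda_i$. The norms $\lambda_{i,j}$ are distinct because the grids are sets, and suitable $\alpha_{i,\mu}$ exist by surjectivity of the norm. Steps~2 and~3 then repeat verbatim.

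No step is a genuine obstacle. The only care needed is in Step~1, to confirm that the block permutation and conjugation preserve $\mathbb L$-dimension. This also disposes of the edge case $c=0$ (and trivial parameters), which is covered directly by the general formulas.
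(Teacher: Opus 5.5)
Your proposal is correct and follows the paper's proof, which simply combines Proposition~\ref{prop:BC-standard-all-degrees} with Proposition~\ref{prop: dim srk muller codes} and Theorem~\ref{thm:minimum-distance}. Your extra checks (left $\mathbb L$-linearity of the blockwise conjugations, the case $c\ge R$ via surjectivity of $\bm\Phi$, and the restricted grids $\Lambda_i$) only make explicit what the paper leaves implicit.
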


\begin{proof}
This follows from Proposition~\ref{prop:BC-standard-all-degrees}, Proposition~\ref{prop: dim srk muller codes} and Theorem~\ref{thm:minimum-distance}.
\end{proof}

We compare this with \cite[Theorem~3.1]{BC2025}. Let $n_{\min}:=\min_in_i$. For $c<n_{\min}(q-1)$, and in particular for $c\le q-2$, no degree restriction in a single variable is active, so
\[
\dim_{\mathbb L}\LRM(\e;c)=\binom{c+m}{m},
\qquad
d_{\min}\bigl(\LRM(\e;c)\bigr)=\bigl(n_{\min}(q-1)-c\bigr)\frac{N}{n_{\min}}(q-1)^{m-1}.
\]
For $1\le c\le q-2$, the ratio between this value and the bound $N(q-1)^{m-1}(q-1-c)$ of \cite[Theorem~3.1]{BC2025} is
\[
\frac{n_{\min}(q-1)-c}{n_{\min}(q-1-c)}\ \ge\ 1,
\]
with equality if and only if $n_{\min}=1$; for $c=q-2$ it equals $(q-1)-(q-2)/n_{\min}$. Since the $n_i$ are pairwise coprime, $n_{\min}>1$ requires $N$ to have at least $m$ distinct prime divisors. In particular, if $N$ is a prime power, the bound of \cite[Theorem~3.1]{BC2025} is exact in the rectangular case. Table~\ref{tab:BC-rectangular} lists some values with $n_{\min}>1$. For $m=2$, $\e=(r_1,r_2)$ with $r_1<r_2$ coprime and $r_1r_2=N$, we have $(n_1,n_2)=(r_2,r_1)$ and $d_{\min}=r_2(q-1)\bigl(r_1(q-1)-c\bigr)$ for $c<r_1(q-1)$. For $c\le q-2$ this equals $N(q-1)^2-r_2c(q-1)$, the value given by the improved bound of \cite[Remark~3.2]{BC2025}, which is therefore sharp.

\begin{table}[H]
\centering
\begin{tabular}{@{}cccccc@{}}
\toprule
$q$ & $\bm n$ & length & $c=1$ & $c=\lfloor(q-2)/2\rfloor$ & $c=q-2$\\
\midrule
$7$ & $(3,2)$ & $216$ & $198\ /\ 180$ & $180\ /\ 144$ & $126\ /\ 36$\\
$11$ & $(5,2)$ & $1000$ & $950\ /\ 900$ & $800\ /\ 600$ & $550\ /\ 100$\\
$13$ & $(3,2)$ & $864$ & $828\ /\ 792$ & $684\ /\ 504$ & $468\ /\ 72$\\
\bottomrule
\end{tabular}
\caption{Minimum distance of $\LRM(\e;c)$ for rectangular $L_{\e}$, compared with the bound of \cite[Theorem~3.1]{BC2025} (exact value\ /\ bound).}
\label{tab:BC-rectangular}
\end{table}

\subsubsection{Lattice-simplex codes}

We now turn to the more flexible construction introduced in \cite[Section~3.3]{BC2025}. For an adapted basis $b$ with orders $\bm n$, the vectors $w_i:=n_ib_i$ form a basis $w$ of $L_{\e}$ by Lemma~\ref{lem:BC-adapted}. For $c\ge0$ put
\[
D_c^{\bm n}:=\Bigl\{\u\in\Delta(\bm\nu)\Bigm|\frac{u_1}{n_1}+\cdots+\frac{u_m}{n_m}\le c\Bigr\}.
\]
Since the weights are positive, $D_c^{\bm n}$ is a lower set.

\begin{proposition}\label{prop:BC-simplex-all-degrees}
Let $b$ be an adapted basis with orders $\bm n$, and let $w=(n_1b_1,\ldots,n_mb_m)$. For every integer $c\ge0$, $\LRM(\e;cS_w)$ is sum-rank equivalent to $\bm\Phi\bigl(\SRRM_{\LK}(D_c^{\bm n})\bigr)$, with $\Lambda_i=\F_q^*$. Consequently,
\[
\dim_{\mathbb L}\LRM(\e;cS_w)=\#D_c^{\bm n},
\qquad
d_{\min}\bigl(\LRM(\e;cS_w)\bigr)=\min_{\u\in D_c^{\bm n}}\prod_{i=1}^m\bigl(n_i(q-1)-u_i\bigr).
\]
\end{proposition}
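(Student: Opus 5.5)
The plan is to apply Proposition~\ref{prop:BC-identification} with $S=cS_w\cap\mathbb Z^m$ and to compute $\overline S_b$ explicitly. In this case $\overline S_b$ should be exactly $D_c^{\bm n}$. Once that is established, the dimension follows from Corollary~\ref{cor:BC-dimension} together with Remark~\ref{rem:universal-grobner-basis}, since distinct elements of $\Delta(\bm\nu)$ give linearly independent images. The minimum distance then follows from Theorem~\ref{thm:lower-set-distance} with $\nu_i=n_i(q-1)$, because $D_c^{\bm n}$ is a lower set. Sum-rank equivalence, being blockwise conjugation by elements of $\mathbb L^*$ and a permutation of blocks, preserves both dimension and sum-rank weights.

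The key step is the coordinate computation. Write $\u\in\mathbb Z^m$ as $\u=\sum_i\mu_iw_i=\sum_i\mu_in_ib_i$ with $\mu_i\in\mathbb R$. Then $[\u]_b=(\mu_1n_1,\ldots,\mu_mn_m)$, which is an integer vector because $b$ is a basis of $\mathbb Z^m$. Hence $\u\in cS_w\cap\mathbb Z^m$ if and only if $[\u]_b=:\bm a\in\mathbb Z^m$ satisfies $a_i\ge0$ and $\sum_i a_i/n_i\le c$. Thus $\bigl\{[\u]_b\bigm|\u\in cS_w\cap\mathbb Z^m\bigr\}=\{\bm a\in\mathbb N^m\mid\sum_ia_i/n_i\le c\}$.

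It remains to reduce componentwise modulo $\bm\nu$. Every element of $D_c^{\bm n}$ already lies in $\Delta(\bm\nu)$ and is fixed by the reduction, so $D_c^{\bm n}\subseteq\overline S_b$. Conversely, reducing $a_i\ge0$ modulo $\nu_i$ gives $0\le\overline{a_i}\le a_i$, so the weighted sum does not increase and the reduced vector stays in $\Delta(\bm\nu)$. Hence $\overline S_b\subseteq D_c^{\bm n}$, and therefore $\overline S_b=D_c^{\bm n}$.

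Proposition~\ref{prop:BC-identification} then gives the sum-rank equivalence with $\bm\Phi\bigl(\Span_{\mathbb L}\{\bm x^{\bm w}\mid\bm w\in D_c^{\bm n}\}\bigr)=\bm\Phi\bigl(\SRRM_{\LK}(D_c^{\bm n})\bigr)$. Note that for large $c$ the code $\LRM(\e;cS_w)$ may have smaller dimension than $\#(cS_w\cap\mathbb Z^m)$, because distinct lattice points can collapse under the reduction; the formula $\#D_c^{\bm n}$ already accounts for this. No step is a real obstacle. The only care needed is to check that reduction modulo $\bm\nu$ sends the lattice-simplex points into, and onto, the lower set $D_c^{\bm n}$, and that $D_c^{\bm n}$ is nonempty (it contains $\bm 0$), so that Theorem~\ref{thm:lower-set-distance} applies.
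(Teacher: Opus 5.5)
Your proposal is correct and follows the paper's proof essentially step for step. You identify $\{[\u]_b \mid \u\in cS_w\cap\mathbb Z^m\}$ with the weighted simplex $\{\bm a\in\mathbb N^m\mid \sum_i a_i/n_i\le c\}$, show that componentwise reduction modulo $\bm\nu$ maps this set onto $D_c^{\bm n}$ (which it contains), and conclude from Proposition~\ref{prop:BC-identification} and Theorem~\ref{thm:lower-set-distance}. The extra checks you include, namely that $\bm 0\in D_c^{\bm n}$ and that blockwise conjugation preserves dimension and weights, are ones the paper leaves implicit.
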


\begin{proof}
A vector $\u=\sum_i\mu_in_ib_i$ lies in $cS_w\cap\mathbb Z^m$ if and only if $[\u]_b=(\mu_1n_1,\ldots,\mu_mn_m)\in\mathbb N^m$ and $\sum_i\mu_i\le c$. Hence, $\{[\u]_b\mid\u\in cS_w\cap\mathbb Z^m\}=\{\bm a\in\mathbb N^m\mid\sum_ia_i/n_i\le c\}$. If $a_i=\overline a_i+k_i\nu_i$ with $0\le\overline a_i<\nu_i$, then $\overline a_i/n_i=a_i/n_i-k_i(q-1)\le a_i/n_i$. Therefore,  reduction modulo $\bm\nu$ maps this set onto $D_c^{\bm n}$, which it contains, and $\overline{(cS_w\cap\mathbb Z^m)}_b=D_c^{\bm n}$. The claims follow from Proposition~\ref{prop:BC-identification} and Theorem~\ref{thm:lower-set-distance}.
\end{proof}

Proposition~\ref{prop:BC-simplex-all-degrees} applies in the two cases above: for rectangular $L_{\e}$ with the canonical basis $w=(n_1\bm\delta_1,\ldots,n_m\bm\delta_m)$, and, for every $\e$, with $w=(v_1,\ldots,v_{m-1},Nv_m)$ and $\bm n=(1,\ldots,1,N)$. The latter case contains the almost-commutative family $\e=(0,\ldots,0,1)$, $w=(\bm\delta_1,\ldots,\bm\delta_{m-1},N\bm\delta_m)$ of \cite[Section~3.4.1]{BC2025}, as well as the codes that become almost commutative after a monomial change of variables, such as the one in \cite[Remark~3.3]{BC2025}. For small degrees, Proposition~\ref{prop:BC-simplex-all-degrees} takes a particularly simple form.

\begin{corollary}\label{cor:BC-simplex-small-degree}
In the situation of Proposition~\ref{prop:BC-simplex-all-degrees}, let $0\le c\le q-2$. Then
\[
\dim_{\mathbb L}\LRM(\e;cS_w)=\#(cS_w\cap\mathbb Z^m),
\qquad
d_{\min}\bigl(\LRM(\e;cS_w)\bigr)=N(q-1)^{m-1}(q-1-c).
\]
If $\bm n=(1,\ldots,1,N)$, the dimension equals $\binom{m+c}{m}+(N-1)\binom{m+c-1}{m}$.
\end{corollary}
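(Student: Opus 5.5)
The three claims of Corollary~\ref{cor:BC-simplex-small-degree} all reduce to Proposition~\ref{prop:BC-simplex-all-degrees}, Corollary~\ref{cor:BC-dimension}, and a short lattice-point count.

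\emph{Dimension.} By Corollary~\ref{cor:BC-dimension}, $\dim_{\mathbb L}\LRM(\e;cS_w)=\#(cS_w\cap\mathbb Z^m)$ whenever $c\le q-2$. Alternatively, one can check directly that $D_c^{\bm n}$ is the bijective image of $cS_w\cap\mathbb Z^m$. Indeed, if $\sum_i a_i/n_i\le c\le q-2$ with $a_i\ge0$, then $a_i\le cn_i<n_i(q-1)=\nu_i$, so no reduction modulo $\bm\nu$ occurs, and $\u\mapsto[\u]_b$ is injective.

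\emph{Minimum distance.} By Proposition~\ref{prop:BC-simplex-all-degrees} we must minimize $f(\u)=\prod_i(n_i(q-1)-u_i)$ over $\u\in\mathbb N^m$ with $\sum_i u_i/n_i\le c$. Substitute $u_i=n_i\mu_i$ with $\mu_i\in\frac1{n_i}\mathbb N$. This gives
\[
f=N\prod_{i=1}^m(q-1-\mu_i),\qquad \sum_i\mu_i\le c,\quad 0\le\mu_i\le c<q-1.
\]
The map $\bm\mu\mapsto\sum_i\log(q-1-\mu_i)$ is concave on this simplex. Its minimum over the real simplex is therefore attained at a vertex, namely $\bm\mu=c\bm\delta_i$ or $\bm\mu=0$. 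The value there is $N(q-1)^{m-1}(q-1-c)$, since $c$ is an integer.

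These vertices are admissible lattice points: take $u_i=cn_i$ and the other coordinates zero, which lies in $D_c^{\bm n}$. Since every admissible lattice point lies in the real simplex, the minimum over lattice points equals the real minimum, namely $N(q-1)^{m-1}(q-1-c)$. The only thing to verify carefully is the concavity/vertex argument on the truncated simplex. The bounds $\mu_i\le c<q-1$ keep every factor positive, so the logarithm is defined.

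\emph{Counting for $\bm n=(1,\ldots,1,N)$.} Here the set to count is
\[
\Bigl\{\u\in\mathbb N^m : u_1+\cdots+u_{m-1}+\tfrac{u_m}{N}\le c\Bigr\}.
\]
Write $u_m=Nk+\rho$ with $0\le\rho<N$. The condition becomes $u_1+\cdots+u_{m-1}+k\le c$ when $\rho=0$, and $u_1+\cdots+u_{m-1}+k\le c-1$ when $\rho>0$, because the left-hand side is an integer plus $\rho/N\in(0,1)$.

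The case $\rho=0$ contributes the number of points $(u_1,\ldots,u_{m-1},k)\in\mathbb N^m$ with sum at most $c$, which is $\binom{m+c}{m}$. Each of the $N-1$ values $\rho>0$ contributes $\binom{m+c-1}{m}$. The total is $\binom{m+c}{m}+(N-1)\binom{m+c-1}{m}$, as claimed. This counting step is routine.
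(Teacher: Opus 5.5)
Your proposal is correct and follows the paper's proof. You take the dimension from Corollary~\ref{cor:BC-dimension}, obtain the distance by minimizing $N\prod_i(q-1-u_i/n_i)$ over $D_c^{\bm n}$ via Proposition~\ref{prop:BC-simplex-all-degrees} with equality at $\u=cn_i\bm\delta_i$, and count using the same split $u_m=Nk+\rho$. The only difference is that you justify the lower bound by concavity of $\sum_i\log(q-1-\mu_i)$ on the simplex, with the minimum at a vertex, whereas the paper iterates the identity $(Q-a)(Q-b)=Q(Q-a-b)+ab$; both arguments are valid.
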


\begin{proof}
The dimension is given by Corollary~\ref{cor:BC-dimension}. Put $Q:=q-1$ and $z_i:=u_i/n_i$ for $\u\in D_c^{\bm n}$. Then $z_i\ge0$, $\sum_iz_i\le c$, and $\prod_i(n_iQ-u_i)=N\prod_i(Q-z_i)$. For $a,b\ge0$ we have $(Q-a)(Q-b)=Q(Q-a-b)+ab\ge Q(Q-a-b)$, and repeated application gives $\prod_i(Q-z_i)\ge Q^{m-1}(Q-\sum_iz_i)\ge Q^{m-1}(Q-c)$. Equality is attained at $\u=(cn_1,0,\ldots,0)\in D_c^{\bm n}$, so the distance formula follows from Proposition~\ref{prop:BC-simplex-all-degrees}. If $\bm n=(1,\ldots,1,N)$, the dimension is the number of $\u\in\mathbb N^m$ with $u_1+\cdots+u_{m-1}+u_m/N\le c$. Write $u_m=Ny+s$ with $0\le s<N$. For $s=0$ the condition is $u_1+\cdots+u_{m-1}+y\le c$, which has $\binom{m+c}{m}$ solutions; for each $s\in\{1,\ldots,N-1\}$ it is $u_1+\cdots+u_{m-1}+y\le c-1$, which has $\binom{m+c-1}{m}$ solutions.
\end{proof}

For $c\le q-2$, the sharpness of the bound of \cite[Theorem~3.3]{BC2025} does not require our methods, and it holds for every $\e$ and every basis $w$ of $L_{\e}$. Indeed, since $w_1\in L_{\e}$, we have $\varepsilon_{\widetilde\gamma}(\bm X^{w_1})=\gamma(w_1)\id_{\mathbb L}$. Hence, for distinct $\xi_1,\ldots,\xi_c\in\F_q^*$, the codeword $\varepsilon\bigl(\prod_{k=1}^c(\bm X^{w_1}-\xi_k)\bigr)\in\LRM(\e;cS_w)$ vanishes on the $c(q-1)^{m-1}$ blocks with $\gamma(w_1)\in\{\xi_1,\ldots,\xi_c\}$ and is invertible on the others, so its weight is $N(q-1)^{m-1}(q-1-c)$. For lattice simplices, the new information in Proposition~\ref{prop:BC-simplex-all-degrees} therefore concerns the dimension formula of Corollary~\ref{cor:BC-dimension} and the range $c\ge q-1$. There the bound $N(q-1)^{m-1}(q-1-c)$ is no longer positive, and lattice points of $cS_w$ that are congruent modulo $(q-1)L_{\e}$ have the same evaluation. Hence, the dimension is $\#D_c^{\bm n}$ rather than the Ehrhart count $\#(cS_w\cap\mathbb Z^m)$, and the minimum over $D_c^{\bm n}$ in Proposition~\ref{prop:BC-simplex-all-degrees} is the exact minimum distance for every $c\ge0$; it equals $1$ once $D_c^{\bm n}=\Delta(\bm\nu)$. For instance, for $q=3$, $N=2$, $m=2$, $\e=(1,1)$, $v_1=(1,-1)$, $v_2=(0,1)$ and $w=(v_1,2v_2)$, the set $2S_w$ contains $9$ lattice points, whereas $\LRM(\e;2S_w)$ has dimension $\#D_2^{(1,2)}=7$ and minimum distance $2$. In particular, over a fixed field $\F_q$ the parameters of these codes are determined for every $c$, whereas \cite[Theorem~3.3]{BC2025} requires $q\geq c+2$, a restriction already noted in \cite[Section~3.3]{BC2025}.

\subsubsection{Hyperbolic supports}

For $c\le q-2$, the dimension of the almost-commutative codes of Corollary~\ref{cor:BC-simplex-small-degree} meets an upper bound for the number of lattice points of $cS_w$ \cite[Proposition~3.2 and the subsequent discussion]{BC2025}, so these codes have the largest dimension among the lattice-simplex codes of \cite{BC2025}. Theorem~\ref{thm:lower-set-distance} allows a comparison with all lower-set supports. For $\delta\ge1$ put
\[
\mathcal H_\delta:=\Bigl\{\u\in\Delta(\bm\nu)\Bigm|\prod_{i=1}^m(\nu_i-u_i)\ge\delta\Bigr\}.
\]
This is a lower set. By Theorem~\ref{thm:lower-set-distance}, $\SRRM_{\LK}(\mathcal H_\delta)$ has minimum distance at least $\delta$, and every lower set $D$ with $d_{\min}(\SRRM_{\LK}(D))\ge\delta$ is contained in $\mathcal H_\delta$, as for the classical hyperbolic codes \cite{GH2001}.

\begin{corollary}\label{cor:BC-hyperbolic}
Let $\bm n=(1,\ldots,1,N)$, $\Lambda_i=\F_q^*$, $0\le c\le q-2$ and $\delta:=N(q-1)^{m-1}(q-1-c)$. Then $D_c^{\bm n}\subseteq\mathcal H_\delta$, and $\bm\Phi(\SRRM_{\LK}(\mathcal H_\delta))$ has the same length and minimum distance as the almost-commutative code $\LRM(\e;cS_w)$ of Corollary~\ref{cor:BC-simplex-small-degree}, and dimension $\#\mathcal H_\delta\ge\#D_c^{\bm n}$.
\end{corollary}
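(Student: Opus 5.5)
The proof splits into three parts, handled in order: the inclusion $D_c^{\bm n}\subseteq\mathcal H_\delta$, the identification of length and minimum distance, and the dimension comparison.

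\textbf{Inclusion.} Here $\nu_i=q-1$ for $i<m$ and $\nu_m=N(q-1)$. Let $\bm u\in D_c^{\bm n}$, so that $u_1+\cdots+u_{m-1}+u_m/N\le c$. Put $Q:=q-1$ and $z_i:=u_i/n_i$; then
\[
\prod_{i=1}^m(\nu_i-u_i)=N\prod_{i=1}^m(Q-z_i).
\]
The inequality already proved in Corollary~\ref{cor:BC-simplex-small-degree} gives $\prod_i(Q-z_i)\ge Q^{m-1}(Q-\sum_iz_i)\ge Q^{m-1}(Q-c)$. Each factor $Q-z_i$ is at least $1>0$ because $u_i<\nu_i$, so that repeated argument applies. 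Hence $\prod_i(\nu_i-u_i)\ge\delta$, that is, $\bm u\in\mathcal H_\delta$.

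\textbf{Length and minimum distance.} Both codes live in $(\mathbb L[G])^{(q-1)^m}$: the code $\LRM(\e;cS_w)$ does so through Proposition~\ref{prop:BC-simplex-all-degrees}, and the other by construction. So both have length $N(q-1)^m$. The set $\mathcal H_\delta$ is a lower set, because each factor $\nu_i-u_i$ decreases as $u_i$ increases. It is nonempty, since it contains $D_c^{\bm n}\ni\bm 0$. Theorem~\ref{thm:lower-set-distance} therefore gives
\[
d_{\min}\bigl(\SRRM_{\LK}(\mathcal H_\delta)\bigr)=\min_{\bm u\in\mathcal H_\delta}\prod_i(\nu_i-u_i)\ge\delta.
\]
For the reverse inequality, the point $\bm u=(c,0,\ldots,0)$ lies in $D_c^{\bm n}\subseteq\mathcal H_\delta$ and attains exactly $\delta$. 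This uses $n_1=1$ when $m\ge2$; if $m=1$, take $\bm u=(cN)$ instead. So the minimum distance equals $\delta$. By Corollary~\ref{cor:BC-simplex-small-degree}, $\delta$ is also the minimum distance of $\LRM(\e;cS_w)$. Since $\bm\Phi$ is an isometry, transporting the code through $\bm\Phi$ changes nothing.

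\textbf{Dimension.} We have $\dim\SRRM_{\LK}(\mathcal H_\delta)=\#\mathcal H_\delta$, and $\bm\Phi$ is injective, so the image has the same dimension. Proposition~\ref{prop:BC-simplex-all-degrees} gives $\dim\LRM(\e;cS_w)=\#D_c^{\bm n}$, and the inclusion above yields $\#\mathcal H_\delta\ge\#D_c^{\bm n}$.

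The only slightly delicate point is handling the edge case $m=1$, which the choice of the attaining point above covers. Everything else is a direct assembly of earlier results.
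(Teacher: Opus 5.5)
Your proof is correct and follows the same route as the paper: the inclusion comes from the inequality in the proof of Corollary~\ref{cor:BC-simplex-small-degree}, the minimum over $\mathcal H_\delta$ is pinned to $\delta$ by the attaining point $(cn_1,0,\ldots,0)\in D_c^{\bm n}$, and Theorem~\ref{thm:lower-set-distance} together with Corollary~\ref{cor:BC-simplex-small-degree} and Proposition~\ref{prop:BC-simplex-all-degrees} finishes the argument. One small slip: $u_m<\nu_m$ alone only gives $Q-z_m\ge 1/N$, not $Q-z_m\ge 1$; positivity is all the repeated inequality needs, and on $D_c^{\bm n}$ you in fact have $z_i\le c\le Q-1$, hence $Q-z_i\ge1$.
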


\begin{proof}
The inclusion is the inequality in the proof of Corollary~\ref{cor:BC-simplex-small-degree}. Hence, $\min_{\u\in\mathcal H_\delta}\prod_i(\nu_i-u_i)=\delta$, since the minimum $\delta$ over $D_c^{\bm n}$ is attained in $\mathcal H_\delta$, and the claim follows from Theorem~\ref{thm:lower-set-distance} and Corollary~\ref{cor:BC-simplex-small-degree}.
\end{proof}

For small $c$ the two sets coincide; Table~\ref{tab:BC-hyperbolic} lists cases where the inclusion is strict. In the coordinates $y_i=\nu_i-u_i$, the set $\mathcal H_\delta$ consists of the integer points of the convex set $\{\bm y\in[1,\nu_1]\times\cdots\times[1,\nu_m]\mid y_1\cdots y_m\ge\delta\}$. Hence, by Proposition~\ref{prop:BC-identification}, the hyperbolic codes are themselves codes $\LRM(\e;C)$ for compact convex sets $C$, which are in general not simplices; their parameters are not covered by \cite{BC2025}.

\begin{table}[H]
\centering
\begin{tabular}{@{}cccccc@{}}
\toprule
$q$ & $N$ & $m$ & length & $(c,\ d_{\min})$ & $\#D_c^{\bm n}\to\#\mathcal H_\delta$\\
\midrule
$7$ & $2$ & $2$ & $72$ & $(4,24)$,\ $(5,12)$ & $25\to30$,\ $36\to48$\\
$7$ & $2$ & $3$ & $432$ & $(3,216)$,\ $(4,144)$,\ $(5,72)$ & $30\to35$,\ $55\to80$,\ $91\to176$\\
$8$ & $3$ & $2$ & $147$ & $(5,42)$,\ $(6,21)$ & $51\to64$,\ $70\to97$\\
$9$ & $2$ & $3$ & $1024$ & $(5,384)$,\ $(6,256)$,\ $(7,128)$ & $91\to133$,\ $140\to238$,\ $204\to458$\\
$11$ & $2$ & $2$ & $200$ & $(7,60)$,\ $(8,40)$,\ $(9,20)$ & $64\to81$,\ $81\to110$,\ $100\to149$\\
\bottomrule
\end{tabular}
\caption{Dimension of the almost-commutative lattice-simplex codes $\LRM(\e;cS_w)$ and of the hyperbolic codes $\SRRM_{\LK}(\mathcal H_\delta)$ with the same length and minimum distance $\delta$.}
\label{tab:BC-hyperbolic}
\end{table}

\subsubsection{Differences between the two families}

\begin{remark}\label{rem:BC-boundary}
By Proposition~\ref{prop:BC-identification}, every code $\LRM(\e;S)$ is equivalent to a monomial code in one of our quotient algebras. The two families differ in the following respects.
\begin{enumerate}[label=\textup{(\arabic*)}]
\item Base field. Our construction applies to every finite Abelian Galois extension, also over infinite fields and for non-cyclic Galois groups, such as $\mathbb Q(\sqrt2,\sqrt3)/\mathbb Q$.
\item Blocks. The codes of \cite{BC2025} use all characters of $L_{\e}$, whereas our norm grids $\Lambda_i\subseteq\mathbb K^*$ also give restrictions to products of subsets of blocks (Proposition~\ref{prop:BC-standard-all-degrees}). Since $\Lambda_i\subseteq\mathbb K^*$, the extended almost-commutative codes of \cite[Section~3.4.1]{BC2025}, which evaluate the commutative variables also at $0$, are not included in our framework.
\item Supports. Theorem~\ref{thm:lower-set-distance} gives the exact minimum distance of $\LRM(\e;S)$ when $\overline S_b$ is a lower set for some adapted basis $b$. This is not always the case. For $N=2$, $\e=(1,1)$ and the adapted basis $b=((1,-1),(0,1))$, we have $[\u]_b=(u_1,u_1+u_2)$, so for the standard code $\LRM(\e;c)$ with $1\le c\le q-2$ the set $\overline S_b=\{(a,s)\mid0\le a\le s\le c\}$ is not a lower set. Determining the minimum distance of codes whose reduced supports are not lower sets remains open.
\end{enumerate}
\end{remark}

In summary, over finite fields every linearized Reed--Muller code of \cite{BC2025} is a monomial code in our framework, and its dimension is given by Corollary~\ref{cor:BC-dimension} in every degree. When the support is a lower set in adapted coordinates, our results give the exact minimum sum-rank distance. This applies to the standard codes for rectangular central lattices and to the lattice-simplex codes associated with adapted bases, including the almost-commutative family, and extends \cite[Theorems~3.1 and~3.3]{BC2025} beyond the range $c\le q-2$. Finally, lower-set codes with hyperbolic supports have larger dimension than the almost-commutative lattice-simplex codes with the same length and minimum distance.

\section*{Acknowledgments}
This research has been partially supported by the Italian National Group for Algebraic and Geometric Structures and their Applications (GNSAGA -- INdAM). A. Neri is supported by the University of Naples Federico II through the FRA2024 Project CUP E63C26001530001  ``NHACT -- New Horizons in Algebraic Coding Theory''.

\bibliographystyle{abbrv}
\bibliography{bib}
\end{document}